\relax
\documentclass[letterpaper]{article}  
\usepackage{aaai22}  
\usepackage{times}  
\usepackage{helvet}   
\usepackage{courier}   
\usepackage[hyphens]{url}   
\usepackage{graphicx}  
\urlstyle{rm}  
  
\usepackage{natbib}  
\usepackage{caption} % 
\DeclareCaptionStyle{ruled}{labelfont=normalfont,labelsep=colon,strut=off}  
\frenchspacing  
\setlength{\pdfpagewidth}{8.5in}   
\setlength{\pdfpageheight}{11in}   

\usepackage{algorithm}
\usepackage{algorithmic}
 
\usepackage{newfloat}
\usepackage{listings}
\lstset{ 
	basicstyle={\footnotesize\ttfamily}, 
	numbers=left,numberstyle=\footnotesize,xleftmargin=2em, 
	aboveskip=0pt,belowskip=0pt, 
	showstringspaces=false,tabsize=2,breaklines=true}
\floatstyle{ruled}
\newfloat{listing}{tb}{lst}{}
\floatname{listing}{Listing}
 
\nocopyright
 
\pdfinfo{
/Title (Lower Bounds on Intermediate Results in Bottom-Up Knowledge Compilation)
/Author (Alexis de Colnet, Stefan Mengel)
/TemplateVersion (2022.1)
}

\setcounter{secnumdepth}{0}

\newcommand{\var}{\textit{var}\,}
\newcommand{\clause}{\textit{clause}\,}
\newcommand{\tw}{tw}
\newcommand{\apply}{\textup{Apply}}

\newcommand{\calG}{\mathcal{G}}
\newcommand{\calC}{\mathcal{C}}
\newcommand{\calT}{\mathcal{T}}

\newcommand{\calU}{\mathcal{U}}

\newcommand{\Prb}[1]{\textup{Pr}\left[#1\right]}
\newcommand{\Ex}[1]{\textup{E}\left[#1\right]}

\usepackage{caption}
\usepackage{subcaption}

\usepackage{tikz}
\usetikzlibrary{hobby}
\usetikzlibrary{shapes}
\usetikzlibrary{calc}

\usepackage{amsmath}
\usepackage{amssymb}

\usepackage{amsthm}
\newtheorem{theorem}{Theorem}
\newtheorem{lemma}{Lemma}
\newtheorem{corollary}{Corollary}
\newtheorem{claim}{Claim}
\theoremstyle{definition}
\newtheorem{definition}{Definition}
\newtheorem{example}{Example}
 
\title{Lower Bounds on Intermediate Results in Bottom-Up Knowledge Compilation}
\author{
    Alexis de Colnet
    and Stefan Mengel
}
\affiliations{
   	Univ. Artois, CNRS, Centre de Recherche en Informatique de Lens (CRIL), F-62300 Lens, France\\
   	\{decolnet, mengel\}@cril.fr
}

\begin{document}

\maketitle

\begin{abstract}
Bottom-up knowledge compilation is a paradigm for generating representations of functions by iteratively conjoining constraints using a so-called apply function. When the input is not efficiently compilable into a language -- generally a class of circuits -- because optimal compiled representations are \emph{provably} large, the problem is not the compilation algorithm as much as the choice of a language too restrictive for the input. In contrast, in this paper, we look at CNF formulas for which very small circuits exists and look at the efficiency of their bottom-up compilation in one of the most general languages, namely that of structured decomposable negation normal forms (str-DNNF). We prove that, while the inputs have constant size representations as str-DNNF, any bottom-up compilation in the general setting where conjunction and structure modification are allowed takes exponential time and space, since large intermediate results have to be produced. This unconditionally proves that the inefficiency of bottom-up compilation resides in the bottom-up paradigm itself.
\end{abstract}

\section{Introduction}

One of the main objectives of knowledge compilation is transforming, or \emph{compiling}, knowledge given as a CNF formula into other representations, generally subclasses of circuits in decomposable negation normal form (DNNF), which allow for efficient reasoning~\cite{Darwiche01}. There are mainly two approaches to this: \emph{top-down compilation} roughly consists of remembering the trace of an exhaustive backtracking algorithm exploring the whole solution space~\cite{HuangD05}, while \emph{bottom-up compilation} iteratively conjoins representations of the clauses of the input in DNNF. For the latter approach to work, one needs an efficient so-called \emph{apply function} which, given two DNNF and a binary Boolean operation, computes a representation of the function we get by applying the operation on the two DNNF. The only known fragments of DNNF that have such an efficient apply function for conjunctions are so-called structured DNNF (str-DNNF) in which intuitively the variable occurrences in the DNNF must follow a common tree structure called a vtree~\cite{PipatsrisawatD08}. As a consequence, in practice, bottom-up knowledge compilation targets fragments of str-DNNF such as SDD~\cite{Darwiche11,ChoiD13} or OBDD~\cite{Bryant86,somenzi2009cudd}.

One inconvenience of bottom-up compilation that top-down compilation does not have is that it may create intermediate results that are far bigger in size than the final compiled form of the complete input. This was mentioned for OBDD e.g.~in \cite{NarodytskaW07,HuangD04}, and proved for specific bottom-up algorithms compiling  unsatisfiable CNF formulas into OBDD in~\cite{Krajicek08,TveretinaSZ10,FriedmanX13}. As remarked by these works, large intermediate results are problematic because they may lead to failed compilation due to memory outs or very long runtime even for instances that have small representations. The same problem occurs also for the state of the art SDD-compiler of~\cite{ChoiD13}, as can be verified experimentally. To mitigate the problem of large intermediate results, Narodytska and Walsh~(\citeyear{NarodytskaW07}) introduce heuristics for choosing an order in which to conjoin the clauses to try to decrease the size of these intermediate OBDD and show experimentally that these work well when compiling certain configuration problems bottom-up. 

In this paper, we show that having large intermediate results is unavoidable for certain formulas when compiling them bottom-up, even when the final compiled form is of constant size. This is true regardless of the order in which the clauses are conjoined during compilation. We do this by formalizing the bottom-up compilation process into str-DNNF as a deduction process which only uses conjunctive apply and changing of the vtree, also called \emph{restructuring}, a common operation in bottom-up compilation. We then show that in this framework large intermediate results must occur, even when compiling unsatisfiable formulas. 

Informally stated, our main result is the following.

\begin{theorem}[informal]\label{theorem:main_result}
There is a class of CNF formulas that have constant size str-DNNF representations such that any bottom-up compilation must produce intermediate str-DNNF of exponential size.
\end{theorem}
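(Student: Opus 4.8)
The plan is to formalize a bottom-up compilation as a \emph{derivation}: a sequence (or DAG) of str-DNNF each of which is either (i) one of the input clauses $C_1,\dots,C_m$ of the CNF $F$ equipped with some vtree, (ii) obtained from an earlier line by restructuring (changing its vtree, keeping the function), or (iii) obtained from two earlier lines by the conjunctive apply, with the derivation ending at a str-DNNF for $F$. An immediate induction then shows that every line computes $\bigwedge_{i\in S}C_i$ for some $S\subseteq[m]$: clauses are the singleton case, restructuring preserves the function, and conjoining the lines for $A$ and $B$ yields the line for $A\cup B$. So, at the level of functions, a derivation is merely a way of building $[m]$ from singletons by unions, its cost being the maximum str-DNNF size over the intermediate functions $\bigwedge_{i\in S}C_i$ it uses; restructuring lets us ignore any vtree-compatibility constraint of the apply, so this abstraction loses nothing.

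Next I would prove a combinatorial bottleneck lemma: in any such derivation there is a line whose clause set $S$ satisfies $m/3<|S|\le 2m/3$. Indeed the final line uses $S=[m]$ with $|S|>2m/3$, while input lines use $|S|=1\le 2m/3$, so there is a first line with $|S|>2m/3$; it cannot come from restructuring (which preserves $|S|$), so it is $A\cup B$ with $|A|,|B|\le 2m/3$ and $|A\cup B|>2m/3$, whence $|A|+|B|>2m/3$ and one of $A,B$, say $A$, has $m/3<|A|\le 2m/3$. Consequently it suffices to exhibit a polynomial-size \emph{unsatisfiable} CNF $F=C_1\wedge\dots\wedge C_m$ such that $\bigwedge_{i\in S}C_i$ has no sub-exponential str-DNNF for every $S$ with $m/3<|S|\le 2m/3$: since $F$ is unsatisfiable it has the constant-size str-DNNF for the constant $0$, matching the hypothesis of the theorem, and polynomiality makes the forced intermediate blow-up exponential in the input size (hence this also covers OBDD and SDD as special cases).

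For the lower bound on each medium sub-conjunction I would use the standard rectangle-cover bound for structured DNNF: a str-DNNF of size $s$ respecting a vtree $T$, cut at a node $v$ with $|\var(v)|$ balanced (every binary vtree has such a node), expresses its function as an OR of at most $s$ products $g\wedge h$ with $g$ over $\var(v)$ and $h$ over the remaining variables, so $s$ is at least the minimum, over balanced variable partitions, of the number of combinatorial rectangles needed to cover the function (equivalently, $2^{\Omega(\cdot)}$ of the best-partition nondeterministic communication complexity). Thus I want a formula all of whose medium sub-conjunctions are, for \emph{every} balanced variable partition, exponentially far from any small rectangle cover. A natural candidate is a formula built on an expander graph (a Tseitin-type or related ``graph CNF''): good expansion guarantees that every balanced partition of the variables cuts many edges, which one turns into high communication complexity of every sufficiently large sub-conjunction; to keep unsatisfiability compatible with ``every proper sub-conjunction is a genuine satisfiable function'' I would take $F$ essentially minimally unsatisfiable. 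A randomized sparse construction, controlling all $2^{\Theta(m)}$ medium subsets and all balanced partitions by a union bound, is an alternative route.

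The hard part is precisely this construction and its analysis. First, the hardness must hold \emph{simultaneously} for all exponentially many medium clause-subsets, not merely for $F$ or for a single sub-formula, which rules out off-the-shelf one-instance lower bounds and forces strong pseudorandomness or a robust expansion argument. Second, a subset of clauses may contain only part of the clausal encoding of a single structural constraint (e.g. only some of the clauses of one parity gate), so the clean algebraic picture — an affine subspace, for a Tseitin-type formula — degrades, and recovering an exponential rectangle-cover lower bound for such a ``partial'' function needs restriction/insulation arguments that pin down the corrupted variables without destroying the expander core. Reconciling the three competing demands on $F$ — polynomial size, unsatisfiability with a trivial compiled form, and robust hardness of every medium sub-conjunction — is where the real work lies. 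Once such an $F$ is in place, the bottleneck lemma forces every bottom-up derivation to produce some $\bigwedge_{i\in S}C_i$ with $S$ medium, and this intermediate str-DNNF — hence the time and space of the step computing it — is exponential, which is the theorem.
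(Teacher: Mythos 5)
Your formalization of the bottom-up derivation, the observation that every line computes $\bigwedge_{i\in S}C_i$ for some subset $S$ of the clauses, and the bottleneck lemma locating a line with $m/3 < |S| \leq 2m/3$, are all correct. But the reduction you propose — exhibit an unsatisfiable CNF for which $\bigwedge_{i\in S}C_i$ requires exponential str-DNNF for \emph{every} medium $S$ — is a much stronger demand than the theorem, and you honestly flag that establishing it is ``the hard part.'' This is a genuine gap, and one that very likely cannot be closed by the Tseitin-on-expander route you sketch: for a Tseitin formula $T(G)$ there are medium-sized clause subsets whose conjunction is easy. For instance, $S$ can drop exactly one clause from each vertex constraint $\chi_v$; the resulting partial constraints are far weaker, and for well-chosen clauses the function is close to a constant. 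More structurally, a proper subformula of a Tseitin formula over a 2-connected graph is always satisfiable (the paper's Claim~1), and ``satisfiable'' here need not mean ``hard.'' Your own remark that partial parity gates ``degrade the clean algebraic picture'' identifies exactly why the rectangle-cover argument does not extend automatically to arbitrary medium $S$.

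The paper avoids quantifying over all medium subsets by zooming in on the \emph{last} apply $\Sigma_N=\apply(\Sigma^\ell,\Sigma^r,\land)$, whose two operands $\Sigma^\ell,\Sigma^r$ jointly contain all clauses of $T(G)$ and are each satisfiable while their conjunction is not. It then (a) proves a graph-theoretic bi-partition lemma (an adaptation of Chekuri–Chuzhoy) giving $(A,B)$ with both $G[A]$ and $G[B]$ of treewidth $\Omega(\tw(G))$; (b) conditions on the cross-edges $E(A,B)$, which splits the formula into two Tseitin formulas on $G[A]$ and $G[B]$ of which exactly one is satisfiable; and (c) via a case analysis on how many constraints for $B$-vertices are incomplete in $\Sigma^\ell$ and $\Sigma^r$, builds in polynomial time from $\Sigma^\ell,\Sigma^r$ a str-DNNF for a \emph{satisfiable} Tseitin formula over $G[A]$ or $G[B]$, and invokes the known lower bound for satisfiable Tseitin formulas. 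So the hardness is needed only for a single carefully constructed function, not for all $2^{\Theta(m)}$ medium sub-conjunctions. This is where your proposal and the paper genuinely diverge: your bottleneck decomposition is clean but front-loads the difficulty onto a universal hardness statement that is not available, whereas the paper's ``last apply plus conditioning'' argument localizes the hardness to one place and imports it from an existing theorem. If you want to pursue your route, you would need a fundamentally new lower-bound mechanism robust to dropping an adversarial half of the clauses; the rectangle-cover/expander argument alone will not suffice.
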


Note that the result of Theorem~\ref{theorem:main_result} is unconditional and does not depend on any unproven complexity assumptions. Moreover, since str-DNNF encompass OBDD and SDD, it is true for bottom-up compilation into these formats.

The formulas that we use to show Theorem~\ref{theorem:main_result} are so-called \emph{Tseitin formulas} which encode certain systems of equations over $\{0,1\}$ whose structure is given by an underlying graph. Tseitin formulas have played a major role in the field of proof complexity, a subfield of theoretical computer science that studies the complexity of refuting unsatisfiable formulas in different proof systems which are often closely linked to practical SAT solvers, see e.g.~\cite{BussN21}. In particular, Tseitin formulas have also been studied when analyzing refutations by proof systems based on different forms of branching programs which are conceptually close to bottom-up compilation, see e.g.~\cite{AtseriasKV04,GlinskihI21,ItsyksonKRS20} for a small sample. Concretely, we here use a recent result from~\cite{deColnetM21} that shows lower bounds on DNNF representations of satisfiable Tseitin formulas. Our basic idea is to show that any bottom-up compilation must essentially construct a DNNF representation of certain sub-formulas of the input that by~\cite{deColnetM21} must be large. In fact, the result of~\cite{deColnetM21} is parameterized by the so-called \emph{treewidth} of the graph underlying the formula, where treewidth is a well-known graph parameter measuring intuitively the treelikeness of a graph. Here, our lower bound is parameterized in the same way, which requires the use of some rather heavy machinery from structural graph theory on the preservation of treewidth under graph partitions.

\section{Preliminaries}

A Boolean variable $x$ is a variable taking its value in $\{0,1\}$. A \emph{literal} is a variable $x$ or its negation $\overline{x}$. An \emph{assignment} to a set of variables $X$ is a mapping from $X$ to $\{0,1\}$. A Boolean function $f$ on $X$ is a mapping of the assignments to $X$ to $\{0,1\}$. The \emph{satisfying assignments} of $f$ are the assignments mapped to $1$ by $f$. Two functions on $X$ are equivalent, written $f \equiv g$, when their satisfying assignments are the same. When $X$ is not specified, $\var(f)$ denotes the set of variables of $f$. Given an assignment $a$ to $Y \subseteq X$, the function $f$ \emph{conditioned on} $a$, written $f|a$, is the function on $X \setminus Y$ obtained from $f$ after fixing all variables in $Y$ to their values given by $a$. As usual, the symbols $\lor$ and $\land$ denote disjunction and conjunction, respectively. A \emph{clause} is a disjunction of literals and a CNF formula (Conjunctive Normal Form) is a conjunction of clauses. The set of clauses of a CNF formula $F$ is denoted by $\clause(F)$. We say that $F'$ is a subformula of $F$ when $\clause(F') \subseteq \clause(F)$. The formula $F'$ is called a \emph{proper} subformula when the inclusion is strict.

\subsection{Structured Decomposable Negation Normal Forms}

A Boolean circuit $\Sigma$ is a directed acyclic computation graph without parallel edges, whose leaves are labeled by literals or Boolean constants $0$ or $1$, and whose internal nodes are labeled by Boolean operations. The size of $\Sigma$, denoted by $|\Sigma|$, is its number of edges. The set of variables whose literals label the leaves under a node $s$ is written $\var(s)$. Each node~$s$ computes a Boolean function on $\var(s)$ defined in the obvious inductive way. The function computed by $\Sigma$ is defined as that computed by its roots. 

A node $s$ with successors $s_1,\dots,s_k$ is called \emph{decomposable} when $\var(s_i) \cap \var(s_j) = \emptyset$ holds for all $i \neq j$. A \emph{Decomposable Negation Normal Form} (short DNNF) for a function $f$ is a Boolean circuit computing $f$, whose internal nodes are labeled with $\lor$ or $\land$ and such that all $\land$-nodes are decomposable. The \emph{DNNF language} is the class of DNNF circuits. One can modify a DNNF in linear time without altering the function it computes so that every internal node has fan-in 2. So we assume that all DNNF in this paper have only internal nodes with fan-in 2.
 
Let $X$ be a finite set of Boolean variables. A \emph{vtree} $T$ for~$X$ is a binary tree whose leaves are in bijection with $X$. For $t \in T$, we denote by $\var(t)$ the set of variables corresponding to the leaves under $t$. A \emph{structured DNNF} (str-DNNF) is~a DNNF $\Sigma$ equipped with a vtree $T$ on its variables and a mapping $\lambda$ from the nodes of $\Sigma$ to that of $T$ such that:
\begin{itemize}
\item[1.] for every $\land$-node $s$ with successors $s_0$ and $s_1$, if $\lambda(s) = t$, then $t$ is an internal node of $T$ and there are $t_l$ and $t_r$ rooted under the two children of $t$ such that $\lambda(s_0) = t_l$ and $\lambda(s_1) = t_r$
\item[2.] for every $\lor$-node $s$ with successors $s_0$ and $s_1$, there is $\lambda(s) = \lambda(s_0) = \lambda(s_1)$
\item[3.] for every $s$, $\var(s) \subseteq \var(\lambda(s))$ holds
\end{itemize}
$\Sigma$ is said to be \emph{structured} by $T$, or to \emph{respect} the vtree $T$. Given any vtree $T$ on variables $X$, all Boolean functions on $X$ are computed by some str-DNNF respecting $T$: just write the function in DNF (Disjunctive Normal Form) and see that every term can be turned into a str-DNNF respecting $T$. We remark that both SDD and OBDD are restricted forms of str-DNNF~\cite{DarwicheM02,Darwiche11}.

%Finding a DNNF computing the same  function than an CNF formula is a challenging but rewarding task. Many queries generally intractable on CNF, like clausal entailment verification or variable forgetting, become tractable on DNNF~\cite{DarwicheM02}. Moreover the DNNF language encompasses other languages of practical interest, for instance SDD (Sentential Decision Diagrams) and OBDD (Ordered Binary Decision Diagrams) are languages that, unlike general DNNF, enable tractable model counting, but whose circuits can all be transformed into DNNF in linear time~\cite{Darwiche11}. It is worth recalling that SDD and OBDD are actually sublanguages of \emph{str-}DNNF, indeed SDD are structured by vtrees and OBDD respect variable orderings, which can be seen as linear vtrees. 

Enforcing structuredness for DNNF can in theory result in a size blow up~\cite{PipatsrisawatD10}, but it has some very useful benefits. On the one hand, in certain fragments it allows for canonicity which is often desirable~\cite{BroeckD15}. 
On the other hand, structuredness is the only known property that yields efficient algorithms for conjoining DNNF~\cite{PipatsrisawatD08}: there is an algorithm $\apply(\Sigma_1,\Sigma_2,\land)$ that, given two str-DNNF $\Sigma_1$ and $\Sigma_2$ respecting the same vtree, returns a str-DNNF equivalent to $\Sigma_1 \land \Sigma_2$ with the same vtree as $\Sigma_1$ and $\Sigma_2$, and runs in time $O(|\Sigma_1|\times|\Sigma_2|)$. So, consider a situation in which the clauses of a CNF formula are split into $F_1$ and $F_2$, and assume the str-DNNF $\Sigma_1$ and $\Sigma_2$ compute $F_1$ and $F_2$, respectively, and respect the same vtree. Then finding a str-DNNF that computes $F$ is feasible in quadratic-time as it boils down to running $\apply(\Sigma_1,\Sigma_2,\land)$. This is the key principle behind bottom-up compilation.

\subsection{Bottom-Up Compilation}

Let $L$ be a compilation language like str-DNNF. We formalize a bottom-up compilation of CNF formula $F = C_1 \land \dots \land C_m$ as a sequence of circuits in $L$, $\Sigma_1, \dots, \Sigma_N$, culminating in $\Sigma_N \equiv F$ and such that, for all $i \in [N]$ 
\begin{itemize}
\item[•] $\Sigma_i\equiv C_j$ for some clause $C_j$ in $F$, $j \in [m]$, or
\item[•] $\Sigma_i = \apply(\Sigma_j,\Sigma_k,\land)$ with $j,k < i$ and $\Sigma_j$ and $\Sigma_k$ have the same vtree, or
\item[•] $\Sigma_i \equiv \Sigma_j$ with $j < i$ and the vtrees for $\Sigma_i$ and $\Sigma_j$ differ.
\end{itemize}
Note that $\Sigma_i \equiv \Sigma_j$ is not necessarily easily verifiable in our framework. We say that we have an $L(\land,r)$ compilation of $F$, where~$r$ indicates that vtree modification (\emph{r}estructuring) is allowed. We call an \emph{$L(\land,r)$ refutation} any $L(\land,r)$ compilation of an unsatisfiable formula. In this paper we will focus on $\textup{str-DNNF}(\land,r)$ compilations and refutations.

We are interested in the amount of memory used when compiling CNF formulas bottom-up. To abstract away implementation details, we note that in any case a bottom-up compiler must keep every $\Sigma_i$ in memory at some point\footnote{Note that the whole sequence $\Sigma_1,\dots,\Sigma_N$ never has to be kept in memory entirely since earlier $\Sigma_i$ can be deleted from memory when they are not needed anymore~\cite{BussN21}.}. Thus, the size of the biggest $\Sigma_i$ is a lower bound on the space needed, and thus also on the time taken, by the compilation.
One can then envision a compilation whose final circuit is way smaller than the biggest intermediate circuit, i.e., $|\Sigma_N| \ll \max_{i\in [N]} |\Sigma_i|$. Then, the run of the bottom-up compiler leading to the sequence appears intuitively wasteful. This is most visible when compiling unsatisfiable CNF formulas: the smallest compiled form is a single node $0$, and since satisfiability testing is tractable in $L$, we can assume that $\Sigma_N = 0$, and yet, its bottom-up compilation may have large memory cost.

Note that the size of the $\Sigma_i$ can differ dramatically depending on the sequence of apply operations, i.e., the order in which the clauses are conjoined. However, we will see that there are formulas that have constant-size str-DNNF representation but for which every possible $\textup{str-DNNF}(\land,r)$ compilation must produce big intermediate results.

\subsection{Graphs}

We assume that the reader is familiar with basics and notation from graph theory as e.g.~found in~\cite{Diestel12}. In this section, we will remind the reader of some concepts that will be used in the remainder of this paper.

Graphs in this paper are undirected, do not contain self-loops, but may have parallel edges. Given a graph $G=(V,E)$ and a set $A\subseteq V$, we denote by $G-A$ the graph we get from $G$ by deleting all vertices in $A$ and all edges that contain a vertex in $A$. If $A$ consists of a single node $u$, we also write $G-u$ instead of $G-\{u\}$. By $G[A]$ we denote the graph induced by $A$ in $G$, i.e., the graph $G- (V\setminus A)$. Given another set $B \subseteq V$, we denote by $E(A,B)$ the set of edges of $G$ that have one endpoint in $A$ and the other in $B$.

A graph is called connected if there is a path from every vertex to every other vertex. A connected component is defined as a maximal connected subgraph. A $1$-separator of a connected graph $G$ is defined to be a vertex $u$ such that $G-u$ is not connected. A graph is called $2$-connected if it is connected, has at least two vertices and contains no $1$-separator.

The \emph{treewidth} $\tw(G)$ of a graph $G$ is a well-known graph parameter with broad applicability in artificial intelligence that measures roughly how close $G$ is to being a tree. Since we will not need its technical definition in this paper but use several results on it as black boxes, we will not formally introduce it here and refer the reader to~\cite{Diestel12,HarveyW17}. We will use the following result from~\cite{BodlaenderK06} which we reformulate to simplify notation.

\begin{theorem}\label{theorem:bodlaender}
 Let $G$ be a graph with a $1$-separator $u$. Then $G-u$ contains a connected component $G'= (V', E')$ such that $\tw(G) = \tw(G[V'\cup \{u\}])$.
\end{theorem}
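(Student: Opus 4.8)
The plan is to establish the slightly stronger equality $\tw(G) = \max_{i} \tw(G_i)$, where $G_1,\dots,G_k$ are the graphs obtained by adding the separator vertex $u$ back to each connected component of $G-u$; the theorem then follows at once by choosing the component that attains the maximum. Throughout, parallel edges may be ignored, as they do not influence treewidth.

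First I would dispatch the easy inequality. Since $G$ is connected and $u$ is a $1$-separator, $G-u$ has connected components $C_1,\dots,C_k$ with $k \ge 2$; set $G_i := G[V(C_i)\cup\{u\}]$. Each $G_i$ is a subgraph of $G$, and treewidth is monotone under taking subgraphs, so $\tw(G_i)\le\tw(G)$ for every $i$, and hence $\max_i \tw(G_i)\le\tw(G)$.

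The real content is the reverse inequality $\tw(G)\le\max_i \tw(G_i)$, which I would prove by explicitly gluing together optimal tree decompositions of the $G_i$ — this is the standard fact that treewidth is preserved under clique-sums, specialised to sums over a single vertex. For each $i$ fix a tree decomposition $(T_i,(B^i_t)_t)$ of $G_i$ of width $\tw(G_i)$; since $u\in V(G_i)$, some node $t_i$ of $T_i$ satisfies $u\in B^i_{t_i}$. Form a tree $T$ from the disjoint union of the $T_i$ by adding one fresh node $r$ with bag $\{u\}$ and joining $r$ to each $t_i$. One then checks the three tree-decomposition axioms for the resulting decomposition of $G$: every vertex lies in a bag (the vertices of $C_i$ via $T_i$, and $u$ via $r$); every edge lies in a bag; and for each vertex the bags containing it induce a subtree of $T$. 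Its width equals $\max_i\tw(G_i)$, since the new bag $\{u\}$ has size $1$ and so contributes $0$, while each $\tw(G_i)\ge 1$ because $u$ has a neighbour in $C_i$.

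The one place where the separator hypothesis is genuinely used, and the step I would be most careful about, is the verification of edge coverage and of connectivity. Edge coverage rests on the observation that every edge of $G$ is in fact an edge of some $G_i$: since $u$ separates $C_1,\dots,C_k$, no edge joins two distinct components, and every edge incident to $u$ has its other endpoint in exactly one component; such an edge is then covered inside the corresponding $T_i$. For connectivity, a vertex $v\in V(C_i)$ appears only in bags of $T_i$ (not in any $T_j$ with $j\ne i$, nor in $\{u\}$), so its bags already form a subtree of $T_i$ and hence of $T$; and the bags containing $u$ are $\{r\}$ together with, inside each $T_i$, a subtree passing through $t_i$, so — as each of these subtrees meets the neighbour $t_i$ of $r$ — their union with $r$ is connected in $T$. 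Combining the two inequalities yields $\tw(G)=\max_i\tw(G_i)$; taking an index $i$ with $\tw(G_i)=\tw(G)$ and setting $G':=C_i=(V',E')$, so that $G[V'\cup\{u\}]=G_i$, proves the theorem.
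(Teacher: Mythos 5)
Your proof is correct. Note that the paper does not actually prove this statement---it is quoted as a known result from Bodlaender and Koster (2006) on safe separators for treewidth, reformulated for the special case of a cut vertex---so there is no paper proof to compare against. Your argument is the standard one: the stronger equality $\tw(G)=\max_i\tw(G_i)$ follows from subgraph monotonicity in one direction and, in the other, from gluing optimal tree decompositions of the pieces $G_i$ at a fresh node whose bag is the singleton $\{u\}$ (a clique-sum over a trivial clique). The verifications of edge coverage and of the connectivity condition for $u$ are exactly the points that need care, and you handle both correctly; the observation that each $G_i$ contains an edge incident to $u$, hence $\tw(G_i)\ge 1$, is a nice touch, although even $\tw(G_i)\ge 0$ already suffices to conclude that the extra bag $\{u\}$ does not dominate the width.
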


\subsection{Tseitin formulas}

%Our representation of bottom-up compilation as a sequence of circuits $\Sigma_1, \dots, \Sigma_N \equiv F$ reminds of a refutation when $F$ is unsatisfiable. The difference comes from equivalence testing for str-DNNF for different vtrees not being tractable, while all steps of a refutation must be checkable in polynomial time. But verifiability of the compilation is not our concern in this paper, so we decide to study unsatisfiable formulas known to be hard for refutations in many proof systems to show lower bounds for bottom-up compilation. 

We study Tseitin formulas which are CNF formulas representing systems of parity constraints structured by a graph $G = (V,E)$. The graph is equipped with a function $c : V \rightarrow \{0,1\}$ which assigns \emph{charges} 0 or 1 to its vertices. Each edge $e$ of $G$ is associated to a Boolean variable~$x_e$. Given a set $E' \subseteq E$, we write $X_{E'} = \{x_e \mid e \in E'\}$. The Tseitin formula encodes the fact that, if we only keep in $G$ the edges whose variables are given value 1, then all vertices with charge 1 have an odd degree and all vertices with charge 0 have an even degree. More formally let $E(v)$ denote the set of edges of which $v$ is an endpoint and define the constraint 
$$
\chi_{v,c} : \sum_{e \in E(v)} x_e = c(v) \mod 2
$$
then the Tseitin formula $T(G,c)$ computes $\bigwedge_{v \in V} \chi_{v,c}$. Each $\chi_{v,c}$ can be encoded in a CNF formula $F_{v,c}$ on variables $X_{E(v)}$ composed of $2^{|E(v)|-1} = 2^{\deg(v)-1}$ clauses of size $\deg(v)$. The Tseitin formula over $G$ for the charge function $c$ is the CNF formula $T(G,c) = \bigwedge_{v \in V} F_{v,c}$. For convenience we often drop $c$ from the notations writing only $T(G)$, $\chi_v$, or $F_v$. For $v \in V$ let $1_v : V \rightarrow \{0,1\}$ be the function mapping $v$ to $1$ and all other vertices to $0$. The complement parity constraint to $\chi_{v,c}$ is $\chi_{v,(c+1_v\!\!\mod 2)}$, which we write $\overline{\chi}_{v,c}$ for convenience.

We use the notation $\clause(\chi_{v,c})$ to denote the set of clauses of $F_{v,c}$. We extend this notation to Tseitin formulas by defining $\clause(T(G,c)) = \bigcup_{v \in V} \clause(\chi_{v,c})$.

\begin{example}
Let $G$ be the graph 
\raisebox{-0.5\height}{
\begin{tikzpicture}[scale=0.8, every node/.style={scale=0.9}]
\def\s{2.5};
\node[circle, fill=black, draw=white, inner sep=\s, text=white] at (0,0) (a) {};
\node[circle, fill=white, draw=black, inner sep=0.75*\s, text=black] at (0.9,0.9) (b) {};
\node[circle, fill=black, draw=white, inner sep=\s, text=white] at (1.3,0) (c) {}; 

\draw (a) -- node[near end, left] {$x\,\,$} (b);
\draw (b) -- node[midway, right] {$z$} (c);
\draw (a) to  [out=-10,in=-170] node[midway, below] {$y$} (c);
\end{tikzpicture}
} where white vertices have charge 1 and black vertices have charge 0. The corresponding Tseitin formula is $T(G) = (x \lor z) \land (\overline{x} \lor \overline{z}) \land (x \lor \overline{y}) \land (\overline{x} \lor y) \land (y \lor \overline{z}) \land (\overline{y} \lor z)$.
\end{example}

There is a simple criterion for the satisfiability of Tseitin formulas.
\begin{lemma}[\citealt{Urquhart87}]\label{lemma:Tseitin_formula_satisfiability}$T(G,c)$ is satisfiable if and only if $\sum_{v \in U} c(v) = 0 \mod 2$ holds for all connected components $G' = (U,E')$ of $G$.
\end{lemma}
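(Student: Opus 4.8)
The plan is to prove both implications directly, after first reducing to the case of a connected graph.

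First I would record that $T(G,c)\equiv\bigwedge_{v\in V}\chi_{v,c}$ by definition, so $T(G,c)$ is satisfiable if and only if the system of parity equations $\bigl\{\sum_{e\in E(v)}x_e=c(v)\bmod 2 \;:\; v\in V\bigr\}$ has a solution over $\{0,1\}$. Since every edge of $G$ joins two vertices lying in the same connected component, for distinct components $G'=(U',E')$ and $G''=(U'',E'')$ the variable sets $X_{E'}$ and $X_{E''}$ are disjoint, and each constraint $\chi_{v,c}$ only involves variables of the component containing $v$. Hence the full system is solvable if and only if the subsystem associated with each component is, i.e.\ $T(G,c)$ is satisfiable iff $T(G[U],c|_U)$ is satisfiable for every connected component $U$. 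This reduces the lemma to the connected case, where the claim to prove becomes: the parity system is solvable iff $\sum_{v\in V}c(v)=0\bmod 2$.

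For the forward direction, suppose $\tau\colon X_E\to\{0,1\}$ satisfies all $\chi_{v,c}$. Summing the $|V|$ equations modulo $2$ gives $\sum_{v\in V}\sum_{e\in E(v)}\tau(x_e)=\sum_{v\in V}c(v)\bmod 2$. On the left-hand side each edge $e$ contributes $\tau(x_e)$ exactly twice, once for each of its two endpoints, so the left-hand side equals $2\sum_{e\in E}\tau(x_e)=0\bmod 2$, and therefore $\sum_{v\in V}c(v)=0\bmod 2$.

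For the backward direction, assume $\sum_{v\in V}c(v)=0\bmod 2$ and fix a spanning tree $S$ of the connected graph $G$. I would set $\tau(x_e)=0$ for every non-tree edge $e$ and determine the tree edges by induction on $|V(S)|$, proving: for any tree $S$ and any charge function $c'$ on $V(S)$ with $\sum_{v}c'(v)=0\bmod 2$, there is an assignment to the edges of $S$ satisfying $\chi_{v,c'}$ for all $v\in V(S)$. The base case $|V(S)|=1$ forces $c'\equiv 0$ and holds vacuously. For the inductive step, pick a leaf $v$ of $S$ with its unique incident edge $e=\{v,u\}$, put $\tau(x_e)=c'(v)$, and apply the induction hypothesis to $S-v$ with the charge $c''$ that agrees with $c'$ except that $c''(u)=c'(u)+c'(v)\bmod 2$; the parity of the total charge is preserved, and combining the inductive assignment on $S-v$ with $\tau(x_e)=c'(v)$ satisfies $\chi_v$ by construction, leaves every $\chi_w$ with $w\notin\{u,v\}$ unchanged, and satisfies $\chi_u$ since the sum at $u$ becomes $c''(u)+c'(v)=c'(u)\bmod 2$. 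Applying this with $c'=c$ and then padding each parity sum with the zero values assigned to non-tree edges yields a satisfying assignment of $T(G,c)$.

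I do not expect any serious obstacle: the only points that need care are making the reduction to connected components fully rigorous (disjointness of variables across components) and setting up the leaf-peeling induction so that the charge bookkeeping at the parent vertex is exactly right; both are routine. As an alternative to the explicit construction in the backward direction, one can phrase solvability of the parity system as solvability of $Mx=c$ over $\mathbb{F}_2$, where $M$ is the vertex--edge incidence matrix of $G$, and invoke the standard fact that the left kernel of $M$ over $\mathbb{F}_2$ is spanned by the indicator vectors of the connected components of $G$.
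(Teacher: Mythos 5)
The paper does not prove this lemma; it is cited directly from Urquhart (1987) and used as a black box, so there is no in-paper proof to compare against. Your argument is a correct and entirely standard proof of this folklore fact: the reduction to connected components via disjointness of variable sets is sound, the forward direction's double-counting of each edge in the summed parity constraints is exactly right, and the spanning-tree leaf-peeling construction in the backward direction is the usual one, with the charge bookkeeping at the parent vertex checking out. The only small nit is the base case of the induction: for a single isolated vertex the constraint is not vacuous but is the nullary equation $0 = c'(v)$, which is satisfied precisely because the total-parity hypothesis forces $c'(v) = 0$; you state the right conclusion but ``holds vacuously'' mischaracterizes why. The alternative formulation via the incidence matrix over $\mathbb{F}_2$ and the fact that its left kernel is spanned by component indicators is also a clean and correct route.
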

\noindent In this paper we study the space complexity of $\textup{str-DNNF}(\land,r)$-compilation of unsatisfiable Tseitin formulas whose underlying graph is connected. We parameterize our bounds by the treewidth of the graph. For exponential lower bounds to be relevant, we need an input CNF formula whose length is polynomial in the number of variables. We achieve this by restricting our study to graphs of maximum degree bounded by some constant $\Delta$. This very common restriction leads to an upper bound of $|V|\times 2^{\Delta-1}$ on the number of clauses in $T(G)$.

Note that there is always a small str-DNNF for a single parity constraint.
\begin{lemma}[\citealt{PipatsrisawatD10}] \label{lemma:parity_constraint_to_strDNNF}
Let $\chi$ be a parity constraint and let $T$ be a vtree on $\var(\chi)$. There is a str-DNNF of size $O(|\var(\chi)|)$ respecting $T$ that computes~$\chi$.
\end{lemma}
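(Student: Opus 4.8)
The plan is to construct the desired str-DNNF directly along the given vtree $T$, working bottom-up. Write $\chi$ as $\sum_{x \in \var(\chi)} x \equiv b \pmod 2$ for the appropriate constant $b \in \{0,1\}$. For every node $t$ of $T$ and every $a \in \{0,1\}$, let $\chi_t^a$ denote the parity constraint on $\var(t)$ asserting $\sum_{x \in \var(t)} x \equiv a \pmod 2$. I will build, by induction on $T$ from the leaves upward, nodes $s_t^0$ and $s_t^1$ of a single circuit together with a map $\lambda$, such that $s_t^a$ computes $\chi_t^a$ and $\lambda(s_t^a) = t$. The output node of the final str-DNNF will be $s_{\hat t}^{\,b}$, where $\hat t$ is the root of $T$; since $\var(\hat t) = \var(\chi)$, this node computes $\chi$.

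For the base case, if $t$ is a leaf labelled by a variable $x$, then $\chi_t^0 \equiv \overline{x}$ and $\chi_t^1 \equiv x$, so let $s_t^0$ and $s_t^1$ be leaves labelled by $\overline{x}$ and $x$, both mapped to $t$; condition~3 holds as $\var(s_t^a) = \{x\} = \var(t)$. For the inductive step, let $t$ be internal with children $t_l, t_r$, so that $\var(t)$ is the disjoint union of $\var(t_l)$ and $\var(t_r)$; a parity $a$ over $\var(t)$ is realised exactly when the parities over $\var(t_l)$ and $\var(t_r)$ sum to $a$ modulo $2$. Accordingly define $s_t^a$ to be an $\lor$-node whose two successors are an $\land$-node with successors $s_{t_l}^{0}, s_{t_r}^{a}$ and an $\land$-node with successors $s_{t_l}^{1}, s_{t_r}^{1-a}$, and map $s_t^a$ together with both $\land$-nodes to $t$. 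The $\land$-nodes are decomposable because $\var(s_{t_l}^{\cdot}) \subseteq \var(t_l)$ and $\var(s_{t_r}^{\cdot}) \subseteq \var(t_r)$ are disjoint (this uses condition~3 from the induction hypothesis); condition~1 holds since the successors of each $\land$-node map to the children $t_l$ and $t_r$ of $t$; condition~2 holds for $s_t^a$ by construction; and condition~3 follows from $\var(s_t^a) \subseteq \var(t_l) \cup \var(t_r) = \var(t)$. A routine check confirms that $s_t^a$ computes $\chi_t^a$, completing the induction.

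For the size bound, note that $T$, being a full binary tree with $|\var(\chi)|$ leaves, has $O(|\var(\chi)|)$ nodes in total, and for each node of $T$ we introduce only a constant number of circuit nodes and edges; hence the resulting str-DNNF has size $O(|\var(\chi)|)$, and by construction every internal node has fan-in $2$. (If $|\var(\chi)| = 1$ the vtree is a single leaf and the construction reduces to the base case; the degenerate case $\var(\chi)=\emptyset$, where $\chi$ is a Boolean constant, is trivial.) I do not expect any genuine obstacle here: the only point requiring care is the bookkeeping of $\lambda$ and the verification of the three structuredness conditions at the $\land$-nodes, which the construction above makes explicit.
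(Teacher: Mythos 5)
Your construction is correct, and since the paper simply cites this result from Pipatsrisawat and Darwiche (2010) without giving its own proof, there is nothing internal to compare against; your bottom-up induction over the vtree, maintaining for each $t \in T$ a node computing each of the two parity values over $\var(t)$ and combining them at internal nodes as $s_t^a \equiv (s_{t_l}^0 \land s_{t_r}^a) \lor (s_{t_l}^1 \land s_{t_r}^{1-a})$, is exactly the standard argument one finds in the cited reference, and your verification of decomposability, the three structuredness conditions, fan-in $2$, and the $O(|\var(\chi)|)$ size bound (a full binary tree with $m$ leaves has $2m-1$ nodes, each contributing a constant number of circuit nodes and edges) is complete.
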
 

\noindent However representing a \emph{satisfiable} Tseitin formula in str-DNNF, so a system of parity constraints, is expensive. 

\begin{theorem}[\citealt{deColnetM21}]\label{theorem:sat_Tseitin_formulas_to_DNNF}
The smallest DNNF representing $T(G)$ \emph{satisfiable} with $G$ a graph of maximum degree $\Delta$ has size at least $2^{\Omega(k)/\Delta}/n$ with $k = \tw(G)$ and $n = |\var(T(G))|$.  
\end{theorem}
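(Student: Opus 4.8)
The plan is to show that a small DNNF for a satisfiable Tseitin formula would yield a small \emph{balanced rectangle cover} of its set of models, and that such a cover is ruled out by a linear-algebraic argument on the solution space together with structural graph theory. I would proceed in three steps, and I expect the last one to be the bottleneck.

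\textbf{From DNNF to balanced rectangle covers.}
First I would invoke the by-now standard transformation behind DNNF lower bounds (as developed for DNNF in \cite{PipatsrisawatD10}): given a DNNF $\Sigma$ of size $s$ computing a function $f$ on variables $X$ with $|X| = n$, one can, after smoothing $\Sigma$ and running a weight-balancing argument on its DAG (this is where fan-in $2$ is used), extract a bipartition $X = X_1 \sqcup X_2$ with $|X_1|, |X_2| \ge n/3$ and at most $s$ ``combinatorial rectangles'' $g_i = h_i^{(1)} \land h_i^{(2)}$ with $\var(h_i^{(1)}) \subseteq X_1$ and $\var(h_i^{(2)}) \subseteq X_2$, such that $f \equiv \bigvee_i g_i$ and every model of each $g_i$ is a model of $f$. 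Smoothing and the fact that the split is only approximately balanced cost a factor linear in $n$, which is where the $/n$ in the statement comes from. It therefore suffices to prove: for $T(G,c)$ satisfiable and \emph{every} bipartition $X_E = X_1 \sqcup X_2$ of its edge variables with both parts of size $\ge |E|/3$, any cover of the models of $T(G,c)$ by rectangles contained in the model set has $2^{\Omega(k)/\Delta}$ rectangles.

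\textbf{A rectangle lower bound from the cycle space.}
By Lemma~\ref{lemma:Tseitin_formula_satisfiability} the model set of $T(G,c)$, viewed inside $\mathbb{F}_2^{E}$, is a coset $a + \calC$ of the cycle space $\calC$ of $G$. Fix the induced edge bipartition $E = E_1 \sqcup E_2$. If $R = R_1 \times R_2$ is a rectangle with $R \subseteq a + \calC$, then fixing any $y \in R_2$ shows that the pairwise differences of the elements of $R_1$ lie in $\calC$ and are supported on $E_1$, so $|R_1| \le |\calC \cap \mathbb{F}_2^{E_1}|$; symmetrically $|R_2| \le |\calC \cap \mathbb{F}_2^{E_2}|$. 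Summing $|R|$ over a cover then forces the number of rectangles to be at least
\[
2^{\dim \calC - \dim(\calC \cap \mathbb{F}_2^{E_1}) - \dim(\calC \cap \mathbb{F}_2^{E_2})}.
\]
Since $\dim \calC = |E| - r(G)$ and $\dim(\calC \cap \mathbb{F}_2^{E_i}) = |E_i| - r(E_i)$, where $r$ denotes rank in the graphic matroid of $G$, the exponent equals $\rho(E_1, E_2) := r(E_1) + r(E_2) - r(G)$. For connected $G$ one also checks $\rho(E_1,E_2) = |W| + 1 - p_1 - p_2$, where $W$ is the set of vertices incident to edges from both parts and $p_i$ is the number of connected components of the subgraph with edge set $E_i$. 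So the whole theorem reduces to the purely combinatorial claim that $\rho(E_1,E_2) = \Omega(k/\Delta)$ for every balanced edge bipartition of a graph with maximum degree $\Delta$ and $\tw(G) = k$.

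\textbf{The graph-theoretic core, and the main obstacle.}
This combinatorial claim is the hard part. I would first reduce to the case where $G$ is $2$-connected: conditioning $T(G,c)$ on the variables of some edges yields a Tseitin formula on the graph obtained by deleting those edges, so iterating Theorem~\ref{theorem:bodlaender} and passing to an appropriate block lets me assume $G$ has no $1$-separator at the cost of only shrinking $k$ by a constant factor, while any rectangle-cover lower bound for the conditioned formula still lower-bounds that of $T(G,c)$. Then $W$ is a vertex separator of $G$ whose sides are $V_1 \setminus W$ and $V_2 \setminus W$; a handshake count using balance and $\deg \le \Delta$ shows that either $W$ is already large or both sides contain $\Omega(|E|/\Delta)$ vertices. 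The delicate point is to show $p_1 + p_2$ stays bounded away from $|W|+1$, i.e.\ that the two sides cannot shatter into many components all matched up through $W$. Here I would use that $\tw(G) = \Omega(k)$ supplies a bramble (equivalently a well-linked set) of size $\Omega(k)$ and argue, by distributing it over $W$, $V_1 \setminus W$, $V_2 \setminus W$ and routing vertex-disjoint paths, that any such balanced separation either has $|W| = \Omega(k/\Delta)$ with the sides still well connected along $W$, or has a correspondingly large interface; in both cases $\rho(E_1,E_2) = \Omega(k/\Delta)$. Turning this intuition into a clean lemma --- a statement of the form ``for $\tw(G)=k$, $\deg\le\Delta$, every bipartition of $E$ into parts of size $\ge |E|/3$ has $\rho(E_1,E_2)=\Omega(k/\Delta)$'', proved by a recursive separator/treewidth argument --- is exactly the kind of ``heavy machinery on the preservation of treewidth under graph partitions'' alluded to in the introduction, and is where I expect essentially all the difficulty to lie. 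Combining the three steps, any DNNF for $T(G,c)$ has size $s$ with $s\cdot n \ge 2^{\rho(E_1,E_2)} \ge 2^{\Omega(k)/\Delta}$, i.e.\ $s \ge 2^{\Omega(k)/\Delta}/n$.
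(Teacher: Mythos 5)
This theorem is cited as a black box from \citet{deColnetM21} and is not proved in the paper you are reading; there is no ``paper's own proof'' here to compare against. That said, your reconstruction is worth assessing on its own terms, and also because you misattribute to it a remark that the paper actually makes about something else.

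Your overall architecture --- pass from a DNNF of size $s$ to a balanced rectangle cover with at most $O(s\cdot n)$ rectangles, observe that the model set of a satisfiable Tseitin formula is a coset $a+\calC$ of the cycle space, bound each rectangle $R_1\times R_2 \subseteq a+\calC$ by $|R_i|\le 2^{\dim(\calC\cap\mathbb{F}_2^{E_i})}$, and reduce the problem to lower-bounding the matroid connectivity function $\rho(E_1,E_2)=r(E_1)+r(E_2)-r(E)$ --- is the standard route for DNNF lower bounds on affine Tseitin model sets, and the algebra (including the identity $\rho=|W|+1-p_1-p_2$ for connected $G$ with no isolated vertices) checks out. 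Crucially, this bound survives the fact that DNNF rectangle covers need not be disjoint: covering $2^{\dim\calC}$ points by rectangles each of size at most $2^{\dim\calC-\rho}$ already forces $2^{\rho}$ rectangles. The reduction to the $2$-connected case via Theorem~\ref{theorem:bodlaender} and conditioning is also sound and mirrors a move the present paper itself makes in the proof of its main theorem.

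Where the proposal genuinely falls short is exactly where you flag it: the combinatorial core, the claim that every edge bipartition with $|E_1|,|E_2|\ge|E|/3$ of a bounded-degree, $2$-connected graph of treewidth $k$ satisfies $\rho(E_1,E_2)=\Omega(k/\Delta)$. You give only an intuition (``distribute a well-linked set and route disjoint paths''); no lemma is stated precisely, no invariant controls $p_1+p_2$ against $|W|$, and the case analysis is not carried out. Since this is the entire technical substance of \citet{deColnetM21}, the proposal as written is a plan, not a proof. You should also be careful with your final sentence: the ``heavy machinery from structural graph theory on the preservation of treewidth under graph partitions'' that the introduction advertises refers to the \emph{present} paper's Theorem~\ref{theorem:tw_partition} (the bi-partition $(A,B)$ of the vertex set with both induced subgraphs retaining $\Omega(k/\Delta^2)$ treewidth), which is a different statement from the edge-bipartition connectivity bound your step~3 needs; conflating the two will mislead a reader about what is proved where.
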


\section{Refuting Tseitin formulas in str-DNNF($\land$, $r$)}

In this section, we will give the formal version of our main result Theorem~\ref{theorem:main_result} and prove it, building on several lemmas whose proof we defer to the following sections. We start with a simple observation that essentially says that, given a bottom-up compilation of a function $f$, one can easily infer a bottom-up compilation of $f|a$, for any partial assignment $a$. This will be useful in several upcoming proofs.

\begin{lemma}\label{lemma:conditioning_for_strDNNF_compilation}
Let $F$ be a CNF formula and $\Sigma_1,\dots,\Sigma_N$ be a str-DNNF($\land$, $r$) compilation of $F$. Let $a$ be a partial assignment to $\var(F)$, then $\Sigma_1|a,\dots,\Sigma_N|a$ is a str-DNNF($\land$, $r$) compilation of $F|a$.
\begin{proof}
For every $i$ between $1$ and $N$ let $\Sigma'_i$ be $\Sigma_i|a$. str-DNNF allow linear-time conditioning without size increase nor vtree modification, so $|\Sigma'_i| \leq |\Sigma_i|$ and $\Sigma'_i$ and $\Sigma_i$ share a common vtree. We have $\Sigma_N \equiv F$, so $\Sigma'_N \equiv F|a$ follows. We will prove that, for every $i$, either $\Sigma'_i$ is the str-DNNF representation of a clause of $F|a$, or there are $j,k < i$ such that $\Sigma'_i = \textup{Apply}(\Sigma'_j,\Sigma'_k,\land)$ where all three str-DNNF share a common vtree, or there is $j < i$ such that $\Sigma'_i \equiv \Sigma'_j$ and the vtree of $\Sigma'_i$ and $\Sigma'_j$ may differ.

Take an arbitrary $i$ between $1$ and $N$. If $\Sigma_i$ is the str-DNNF representation of a clause $C \in F$, that is, $\Sigma_i \equiv C$, then clearly $\Sigma'_i = \Sigma_i |a \equiv C|a$ and $C|a$ is indeed a clause of $F|a$. Otherwise if $\Sigma_i$ is the str-DNNF returned by $\textup{Apply}(\Sigma_j,\Sigma_k,\land)$, then $\Sigma_i \equiv \Sigma_j \land \Sigma_k$ and all three str-DNNF share a common vtree. Then $\Sigma'_i = \Sigma_i|a \equiv (\Sigma_j \land \Sigma_k)|a \equiv \Sigma_j|a \land \Sigma_k|a = \Sigma'_j \land \Sigma'_k$. Since the vtree is not modified by conditioning we can feed $\Sigma'_j$ and $\Sigma'_k$ to an Apply to obtain $\Sigma'_i = \textup{Apply}(\Sigma'_j,\Sigma'_k,\land)$. Finally in the case where $\Sigma_i$ is equivalent to $\Sigma_j$ with potentially a vtree modification, it is clear that $\Sigma'_i = \Sigma_i|a \equiv \Sigma_j|a = \Sigma'_j$.
\end{proof}
\end{lemma}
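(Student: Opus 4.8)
The plan is to define $\Sigma_i' := \Sigma_i|a$ for every $i \in [N]$ and to check that the resulting sequence satisfies the three conditions in the definition of a str-DNNF($\land$, $r$) compilation of $F|a$. The first ingredient I would record is a structural fact about str-DNNF: conditioning on a partial assignment can be carried out in linear time by replacing each leaf labeled by a literal of an assigned variable with the corresponding Boolean constant and simplifying, and this operation neither increases the size of the circuit nor changes its vtree (it can only shrink the set of variables that are actually read). Hence $|\Sigma_i'| \le |\Sigma_i|$ and $\Sigma_i'$ respects the same vtree as $\Sigma_i$. Since $\Sigma_N \equiv F$, conditioning both sides yields $\Sigma_N' \equiv F|a$, so the sequence culminates correctly.

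Next I would go through the three possible justifications for the step that produced $\Sigma_i$ in the original compilation and argue each one is inherited by $\Sigma_i'$. If $\Sigma_i \equiv C$ for some clause $C \in \clause(F)$, then $\Sigma_i' \equiv C|a$, and $C|a$ is the reduct of $C$ obtained by deleting its falsified literals (it is the constant $1$ if $a$ satisfies one of its literals), which is exactly a clause of $F|a$. If $\Sigma_i = \apply(\Sigma_j,\Sigma_k,\land)$ with $j,k<i$ and $\Sigma_j,\Sigma_k$ sharing a vtree, then $\Sigma_i \equiv \Sigma_j \land \Sigma_k$; conditioning commutes with conjunction, so $\Sigma_i' \equiv \Sigma_j|a \land \Sigma_k|a = \Sigma_j' \land \Sigma_k'$, and since conditioning left the vtrees of $\Sigma_j$ and $\Sigma_k$ unchanged, $\Sigma_j'$ and $\Sigma_k'$ still share a vtree and may legitimately be fed to $\apply$, giving the valid step $\Sigma_i' = \apply(\Sigma_j',\Sigma_k',\land)$. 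Finally, if $\Sigma_i \equiv \Sigma_j$ with $j<i$ via a (possibly) different vtree, then $\Sigma_i' \equiv \Sigma_j'$, which is a restructuring step.

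I do not expect a serious obstacle: the lemma is essentially a bookkeeping argument. The one point that deserves a little care is the clause case, since a clause of $F$ may become trivially satisfied after conditioning; one must therefore be comfortable calling the constant $1$ ``a clause of $F|a$'', or equivalently define $F|a$ as the CNF formed from the non-trivial reducts of the clauses of $F$, and the same remark handles the degenerate subcase where some clause reduces to the empty clause and $F|a$ is unsatisfiable. The other mild subtlety is that in our framework $\apply$ is only guaranteed to behave correctly when its two str-DNNF inputs share a vtree, which is precisely why it matters that conditioning preserves vtrees and not merely logical equivalence.
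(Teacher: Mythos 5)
Your proof is correct and follows essentially the same route as the paper: condition every intermediate circuit, observe that conditioning on str-DNNF preserves the vtree and does not increase size, and then verify the three derivation rules one by one (clause, apply, restructuring), using that conditioning commutes with conjunction. The extra remark you make about clauses that become trivially true (or empty) under $a$ is a legitimate edge case the paper's proof glosses over, but it does not change the argument.
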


Our main result is the following theorem on the refutation of unsatisfiable formulas by bottom-up compilation.

\setcounter{theorem}{0}
\begin{theorem}
Let $\calG$ be a class of graphs whose maximum degree is bounded by a constant. All str-DNNF($\land$, $r$) refutation of an unsatisfiable $T(G)$ with $G \in \calG$ have size at least $2^{\Omega(k)}poly(1/n)$ with $k = \tw(G)$ and $n = |\var(T(G))|$.
\end{theorem}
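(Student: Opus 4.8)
The strategy is to reduce the refutation of an unsatisfiable Tseitin formula $T(G)$ to the compilation of a *satisfiable* Tseitin subformula, so that Theorem~\ref{theorem:sat_Tseitin_formulas_to_DNNF} applies. First I would observe that since $G$ is connected and $T(G)$ is unsatisfiable, by Lemma~\ref{lemma:Tseitin_formula_satisfiability} we have $\sum_{v\in V} c(v) = 1 \bmod 2$. Pick any vertex $v_0$ and consider $G' = G - v_0$. The key combinatorial point is that by Theorem~\ref{theorem:bodlaender} (applied after identifying a suitable $1$-separator, or by iterating over a graph-partition argument, which is where the ``heavy machinery from structural graph theory'' alluded to in the introduction comes in) one can isolate a vertex $u$ and a connected component $H$ of $G-u$ such that the Tseitin formula on $H\cup\{u\}$, with an appropriately adjusted charge, is satisfiable and still has treewidth $\Omega(k)$. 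The maximum degree of this subgraph is still bounded by the constant, and its number of variables is at most $n$, so Theorem~\ref{theorem:sat_Tseitin_formulas_to_DNNF} gives it a $2^{\Omega(k)}\mathrm{poly}(1/n)$ lower bound on any DNNF — hence on any str-DNNF — representation.

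Second, I would connect this subformula to the refutation. Given a str-DNNF$(\land,r)$ refutation $\Sigma_1,\dots,\Sigma_N$ of $T(G)$, I use Lemma~\ref{lemma:conditioning_for_strDNNF_compilation}: conditioning every $\Sigma_i$ on a partial assignment $a$ yields a str-DNNF$(\land,r)$ compilation of $T(G)|a$ with sizes only decreasing. The plan is to choose $a$ to be an assignment to the edge variables incident to the vertices *outside* the chosen subgraph (i.e. to $X_{E\setminus E(H\cup\{u\})}$, or more precisely to the variables that must be fixed so that what remains is exactly the parity system on $H\cup\{u\}$). Concretely, fixing all variables not appearing in the subgraph's constraints collapses each $\chi_{v}$ for $v$ outside to either a tautology or a falsity; by a careful choice of the values (possible precisely because the global charge sum is odd and we removed the ``extra'' charge) every outside constraint becomes satisfied, and the surviving formula $T(G)|a$ is equivalent to the satisfiable Tseitin formula $T(H\cup\{u\}, c')$ times constants. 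Since $T(G)$ is unsatisfiable, $\Sigma_N \equiv 0$, but after conditioning $\Sigma_N|a$ computes the satisfiable subformula — wait: this is exactly the tension. The resolution is that the conditioned sequence $\Sigma_1|a,\dots,\Sigma_N|a$ is a compilation of $T(G)|a$, so its *final* circuit $\Sigma_N|a$ represents $T(G)|a \equiv T(H\cup\{u\},c')$, which by Theorem~\ref{theorem:sat_Tseitin_formulas_to_DNNF} must have size $2^{\Omega(k)}\mathrm{poly}(1/n)$. Since $|\Sigma_N| \geq |\Sigma_N|a|$, we conclude $\max_i |\Sigma_i| \geq |\Sigma_N| \geq 2^{\Omega(k)}\mathrm{poly}(1/n)$.

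Thus the proof skeleton is: (1) locate, via treewidth-preserving graph surgery, a high-treewidth bounded-degree connected subgraph $H\cup\{u\}$ whose induced Tseitin formula is satisfiable; (2) find a partial assignment $a$ to the remaining variables making every outside parity constraint satisfied, so that $T(G)|a$ is (equivalent to) this satisfiable subformula; (3) apply Lemma~\ref{lemma:conditioning_for_strDNNF_compilation} to transfer the refutation to a compilation of $T(G)|a$; (4) apply Theorem~\ref{theorem:sat_Tseitin_formulas_to_DNNF} to the final — or any — circuit of the conditioned sequence; (5) conclude, using $|\Sigma_i| \geq |\Sigma_i|a|$, the stated bound on the original refutation.

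**Main obstacle.** The genuinely hard part is step (1): ensuring that after deleting vertices (to both restore satisfiability and to match a clean subgraph structure for the conditioning in step (2)) the remaining graph still has treewidth $\Omega(\tw(G))$, with the implied constants uniform over the graph class. Theorem~\ref{theorem:bodlaender} handles the deletion of a single $1$-separator, but $G$ need not have a $1$-separator at the right place, and we may need to remove a vertex of high charge or a vertex chosen for the conditioning structure. I expect one must either (a) reduce to the $2$-connected case and argue that in a $2$-connected bounded-degree graph one can always pick $v_0$ such that $G-v_0$ has a component of treewidth $\Omega(k)$ — this is plausible since removing one bounded-degree vertex can drop treewidth by at most a constant factor only if the graph genuinely ``concentrates'' its treewidth, which requires a structural-graph-theory argument about how treewidth distributes across a partition — or (b) invoke a more refined result on treewidth under vertex deletion / graph partitions. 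Getting the quantifiers right here, so that the final bound is genuinely $2^{\Omega(k)}\mathrm{poly}(1/n)$ with the constant in $\Omega$ depending only on $\Delta$, is where the real work lies; the rest is bookkeeping with Lemmas~\ref{lemma:conditioning_for_strDNNF_compilation} and~\ref{lemma:Tseitin_formula_satisfiability} and Theorem~\ref{theorem:sat_Tseitin_formulas_to_DNNF}.
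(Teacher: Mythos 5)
Your argument has a fatal parity obstruction at its core, and it is precisely the one that Lemma~\ref{lemma:conditioning_for_Tseitin_formulas} of the paper makes explicit. Conditioning a Tseitin formula on any partial edge-assignment preserves the parity of the total charge: assigning $x_{uv}=1$ flips $c(u)$ and $c(v)$, assigning $x_{uv}=0$ changes nothing, so $\sum_v c(v) \bmod 2$ is invariant. Since $T(G)$ is unsatisfiable and $G$ is connected, this sum is odd. After assigning all edges outside $G[H\cup\{u\}]$, the outside vertices become isolated and their constraints become constants; for $T(G)|a$ not to be identically~$0$ you need all those constants to be~$1$, i.e.\ all residual outside charges to be~$0$, which forces the total charge on $G[H\cup\{u\}]$ (a connected graph) to be odd, so $T(G[H\cup\{u\}],c')$ is \emph{unsatisfiable}. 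There is simply no assignment $a$ that makes $T(G)|a$ a satisfiable nontrivial Tseitin formula; as a Boolean function $T(G)|a\equiv 0$ always holds. Hence $\Sigma_N|a\equiv 0$, Theorem~\ref{theorem:sat_Tseitin_formulas_to_DNNF} does not apply to it, and your step (5) also fails for an independent reason: $\Sigma_N$ computes the constant $0$ and therefore can be (and in the paper is assumed to be) a trivially small circuit, so $|\Sigma_N|$ is a meaningless lower bound on $\max_i|\Sigma_i|$.

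The missing idea is that the lower bound must be placed on the \emph{operands} of the last Apply, not on its output. The paper writes $\Sigma_N=\apply(\Sigma^\ell,\Sigma^r,\land)$ and observes that $\Sigma^\ell$ and $\Sigma^r$ represent \emph{proper} subformulas $F^\ell,F^r$ of $T(G)$, which are satisfiable. After conditioning on a boundary assignment $a$ for a partition $(A,B)$ of $V$, each $\Sigma|a$ factors as $F^a_A\land F^a_B$ with disjoint variable sets (Lemma~\ref{lemma:conditioning_for_strDNNF}), and from $\Sigma^\ell|a$, $\Sigma^r|a$ (plus at most two small parity gadgets) one can assemble, in size $O(|\Sigma^\ell|\cdot|\Sigma^r|)$, a str-DNNF for a \emph{satisfiable} Tseitin formula on $G[A]$ or on $G[B]$ (Lemma~\ref{lemma:technical_lemma}). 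Because one cannot control which side comes out satisfiable, one must guarantee that \emph{both} sides retain treewidth $\Omega(\tw(G))$; that is the real role of the graph-partition machinery (Lemma~\ref{lemma:good_partition}/Theorem~\ref{theorem:tw_partition}), not the single-vertex removal argument you sketch. Theorem~\ref{theorem:bodlaender} is used only to reduce the general case to the $2$-connected case. Your reduction to a high-treewidth subgraph via one vertex removal also does not address the need for the satisfiability of the extracted subformula: Bodlaender's theorem gives a component preserving treewidth, but whether the induced Tseitin formula on it is satisfiable is decided by parity and is out of your control, which is exactly why the paper needs the two-sided partition.
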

\setcounter{theorem}{3}

We will prove Theorem~\ref{theorem:main_result} later in this section after some discussion and preparations.
First, note that there are graphs of bounded degree with treewidth linear in the number of vertices, see e.g.~\cite{GroheM09}. It follows that there are formulas where the intermediate results have exponential size.

\begin{corollary}
There is a family of unsatisfiable CNF formulas such that every formula on $n$ variables has $O(n)$ clauses and all its str-DNNF($\land$, $r$) refutations have an intermediate result of size~$2^{\Omega(n)}$.
\end{corollary}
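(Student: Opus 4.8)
The plan is to reduce to the DNNF lower bound for \emph{satisfiable} Tseitin formulas (Theorem~\ref{theorem:sat_Tseitin_formulas_to_DNNF}) using the conditioning lemma (Lemma~\ref{lemma:conditioning_for_strDNNF_compilation}). Concretely, I aim to establish the following structural fact: in every str-DNNF$(\land,r)$ refutation $\Sigma_1,\dots,\Sigma_N$ of an unsatisfiable $T(G)$ there is an index $i$ and a partial assignment $a$ such that $\Sigma_i|a$ is equivalent to a \emph{satisfiable} Tseitin formula $T(H,c_H)$, where $H$ is connected, has maximum degree at most $\Delta$, and satisfies $\tw(H)=\Omega(\tw(G))$. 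Given this, the theorem follows immediately: by Lemma~\ref{lemma:conditioning_for_strDNNF_compilation}, $\Sigma_i|a$ is a DNNF computing $T(H,c_H)$ with $|\Sigma_i|a|\le|\Sigma_i|$, so Theorem~\ref{theorem:sat_Tseitin_formulas_to_DNNF} gives $|\Sigma_i|\ge 2^{\Omega(\tw(H))/\Delta}/|\var(T(H))|\ge 2^{\Omega(\tw(G))}\,\mathrm{poly}(1/n)$, using that $\Delta$ is constant and $|\var(T(H))|\le n$; since this bounds one intermediate circuit, it bounds the refutation.

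To produce such an $\Sigma_i$, I first observe that, since $\apply$ only conjoins, restructuring preserves the computed function, and every base circuit is equivalent to a single clause of $T(G)$, an easy induction shows that each $\Sigma_i$ is equivalent to the conjunction of some nonempty subset $S_i\subseteq\clause(T(G))$. As $\Sigma_N\equiv 0$ while every clause is individually satisfiable, the sequence must at some point create an unsatisfiable conjunction; taking the first such index yields $\Sigma_i=\apply(\Sigma_j,\Sigma_k,\land)$ with $\Sigma_j\equiv F_j$ and $\Sigma_k\equiv F_k$ both satisfiable conjunctions of clauses of $T(G)$ and $F_j\land F_k\equiv 0$. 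Working with one of the two sides, say $\Sigma_j$ (the choice being made so as to preserve treewidth, see below), I design the conditioning $a$ as follows. Fix a satisfying assignment $\beta$ of $F_j$, and let $U$ be the set of vertices $v$ all of whose clauses from $\chi_v$ occur in $F_j$; for every vertex $v\notin U$ that nonetheless contributes a clause to $F_j$, fix all edge variables in $E(v)$ according to $\beta$, and fix the edge variables lying outside the vertex set $U$ and outside one chosen connected component $H$ of $G[U]$ to satisfying values. Then $\Sigma_j|a$ is still satisfied by (the restriction of) $\beta$, and, since only full parity constraints survive, $\Sigma_j|a$ is equivalent to the Tseitin formula $T(H,c_H)$ on $H$, with $c_H$ inherited from $c$ and from the values fixed by $a$; it is satisfiable because $\Sigma_j|a$ is, $H$ is connected, and $H$ is a subgraph of $G$, so its maximum degree is at most $\Delta$.

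The crux — and the step I expect to be the main obstacle — is guaranteeing that the index $i$ and the side $\Sigma_j$ can be chosen so that $\tw(H)=\Omega(\tw(G))$; equivalently, that the vertex set $B=V\setminus U$ of split-or-absent constraints does not destroy the treewidth, i.e. $\tw(G-B)=\Omega(\tw(G))$. This is a statement about the behavior of treewidth under vertex partitions of a bounded-degree graph: at the critical $\apply$ step the clauses of $T(G)$ are split between $F_j$ and $F_k$, which induces a partition of $V$ into a part controlled by $F_j$, a part controlled by $F_k$, and a boundary that is small precisely because the maximum degree is bounded, and I must argue that deleting this boundary leaves one part with a constant fraction of the treewidth. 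Peeling a low-treewidth component off across a single $1$-separator is exactly Theorem~\ref{theorem:bodlaender}, but the general case requires the heavier structural graph machinery on preservation of treewidth under graph partitions announced in the introduction, probably applied iteratively and combined with a more careful counting or averaging argument over the sequence than the naive choice of the first unsatisfiable conjunction. The remaining technical nuisance — the bookkeeping for partially present constraints and the check that the prescribed conditioning turns $F_j$ into an honest Tseitin formula without shrinking the usable subgraph more than necessary — I would fold into the same self-contained lemma, stated for an arbitrary split of $\clause(T(G))$ into two satisfiable parts.
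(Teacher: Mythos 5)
Your high-level roadmap does match the one the paper follows for its main theorem: look at the critical $\apply$ whose inputs are satisfiable but whose output is $0$, condition to turn one of the intermediate circuits into a DNNF for a \emph{satisfiable} Tseitin formula on a connected subgraph of large treewidth, and then invoke the lower bound of Theorem~\ref{theorem:sat_Tseitin_formulas_to_DNNF} together with Lemma~\ref{lemma:conditioning_for_strDNNF_compilation}. (For the corollary itself the paper does nothing more than cite the main theorem and observe that bounded-degree graphs with $\tw(G)=\Omega(n)$ exist, e.g.\ expanders; you implicitly assume this but never state it.) However, two of the steps you wave at are not details but the actual substance, and as sketched your plan does not go through.

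First, the extraction step. You propose to work with \emph{one} side $\Sigma_j$ only, take $U$ to be the set of vertices whose constraint $\chi_v$ is complete in $F_j$, and condition away everything outside one connected component of $G[U]$. But nothing forces $U$ to be non-trivial: at the critical $\apply$ the clauses of each $\chi_v$ may be split between $F_j$ and $F_k$ so that \emph{every} vertex has an incomplete constraint on \emph{both} sides, making $U=\emptyset$ on both sides (both $F_j$ and $F_k$ are still satisfiable because proper subformulas of $T(G')$ for $2$-connected $G'$ are always satisfiable, cf.\ Claim~\ref{claim:if_2-connected_then_MUS}). The paper's proof of Lemma~\ref{lemma:technical_lemma} has a genuine two-case structure precisely because of this: when neither side has almost-complete constraints for the chosen block $B$, one must condition $\Sigma^\ell$ \emph{and} $\Sigma^r$ and run one more $\apply$ on the two conditioned circuits to recover a satisfiable $T(G_A,c^a_A)$, paying a cost of $O(|\Sigma^\ell|\cdot|\Sigma^r|)$. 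A single-side argument cannot produce this, so as written the proposal silently assumes a property of the refutation that can fail.

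Second, and related, the direction of the partition argument is backwards. You want the refutation to hand you a set $U$ and then show that $\tw(G-\!\!\setminus\!\! U)$ is large, acknowledging this as ``the crux'' to be settled by structural graph theory applied iteratively. The paper instead \emph{fixes the partition first}: by the Chekuri--Chuzhoy-type result (Theorem~\ref{theorem:tw_partition}, refined to Lemma~\ref{lemma:good_partition}) there is a single partition $(A,B)$ of $V$, independent of the refutation, with $G_A$ connected, $G_B$ $2$-connected, and both treewidths $\Omega(\tw(G)/\Delta^2)$. It then conditions only on $X_{E(A,B)}$ and argues (Lemmas~\ref{lemma:conditioning_for_Tseitin_formulas},~\ref{lemma:conditioning_for_strDNNF}, and the two cases above) that for \emph{any} split of clauses among $\Sigma^\ell,\Sigma^r$ one can always land on a satisfiable $T(G_A,c^a_A)$ or $T(G_B,c^a_B)$. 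Choosing the partition adversarially to the graph rather than to the refutation is what removes the need for any ``averaging over the sequence.'' Without that bipartition lemma and without the both-sides $\apply$ in Case~2, there is no working proof; these are exactly Lemmas~\ref{lemma:good_partition} and~\ref{lemma:technical_lemma}, whose proofs occupy the bulk of the paper.
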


Let us compare Corollary 1 with known exponential lower bounds on the size of intermediate results for similar refutation systems, see for instance~\cite{Krajicek08,Segerlind08,TveretinaSZ10,FriedmanX13}. First, we are not aware of refutation systems using str-DNNF circuits that are not OBDD or branching programs. Since OBDD are generally exponentially bigger than str-DNNF, our result is stronger in that respect. Moreover, restructuring is rarely allowed in the OBDD-based proof system while it is in ours. Most known bounds are stated for OBDD-based refutations in which the variable order can be arbitrary but cannot be changed in the refutation.
Also we do not require any specific order in which the clauses are conjoined, which is a restruction used for some bounds in, e.g.,~\cite{FriedmanX13}.

Our results might look somewhat unconvincing since they only talk about the compilation of unsatisfiable formulas, a setting in which costly compilation can be substituted by a usually much less expensive single call of a SAT solver\footnote{In fact, some knowledge compilers, e.g. the top-down knowledge compiler D4~\cite{LagniezM17}, make a call to a SAT solver before trying to compile the input to avoid wasting time when compiling unsatisfiable instances.}.
However, equipped with Lemma~\ref{lemma:conditioning_for_strDNNF_compilation}, we can lift them to satisfiable formulas that have constant size str-DNNF representation with a simple trick.

\begin{corollary}\label{corollary:main_sat}
There are satisfiable CNF formulas that have constant size str-DNNF representations such that any str-DNNF($\land$, $r$) compilation must have an intermediate result of size~$2^{\Omega(n)}$ where $n$ is the number of variables in the input.
\begin{proof}
Consider a class of unsatisfiable Tseitin formulas $\calT := \{T(G) \mid G \in \calG\}$ for a class of graphs $\calG$ of treewidth linear in the number of vertices and let $x$ be a fresh variable not used in any of these formulas. For each $T(G)$ let $F(G)$ be the formula $T(G)$ with the additional literal $x$ added to all clauses.  Clearly, $F(G) \equiv x \lor T(G) \equiv x$, so the smallest str-DNNF representing $F(G)$ has size $1$. By Lemma~\ref{lemma:conditioning_for_strDNNF_compilation}, given a str-DNNF($\land$,$r$) compilation of $F(G)$, we can condition all intermediate str-DNNF on $x = 0$ to obtain a str-DNNF($\land$,$r$) refutation of $T(G)$. Since conditioning does not increase the size of str-DNNF the corollary follows from Theorem~\ref{theorem:main_result}.
\end{proof}
\end{corollary}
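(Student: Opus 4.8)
The plan is to prove Corollary~\ref{corollary:main_sat} by a direct reduction from the unsatisfiable case handled by Theorem~\ref{theorem:main_result}, using the conditioning lemma (Lemma~\ref{lemma:conditioning_for_strDNNF_compilation}) as the main tool. The idea is that a satisfiable formula can ``hide'' an unsatisfiable one: by disjoining an unsatisfiable Tseitin formula with a single fresh literal, we obtain a formula equivalent to that literal alone --- hence trivially compilable --- yet any bottom-up compilation of it must, after conditioning the fresh variable to falsify that literal, turn into a bottom-up refutation of the original hard Tseitin formula, which by Theorem~\ref{theorem:main_result} has exponentially large intermediate results.

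Concretely, first I would fix a class $\calG$ of bounded-degree graphs with $\tw(G) = \Omega(|V(G)|)$ (such classes exist, e.g.\ expanders or grid-like graphs as cited via~\cite{GroheM09}), and choose charge functions making each $T(G)$ unsatisfiable; this is possible by Lemma~\ref{lemma:Tseitin_formula_satisfiability} (pick $c$ with odd total charge on a connected component). Then I would introduce a fresh variable $x \notin \var(T(G))$ and define $F(G)$ to be the CNF obtained from $T(G)$ by adding the literal $x$ to every clause. Since every satisfying-type obstruction of $T(G)$ is bypassed once $x = 1$, we get $F(G) \equiv x \lor T(G)$, and because $T(G)$ is unsatisfiable, $F(G) \equiv x$. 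Hence $F(G)$ has a str-DNNF representation of size $1$ (a single leaf labeled $x$), which is constant; and the number of variables of $F(G)$ is $n = |\var(T(G))| + 1$, while the number of clauses is unchanged, so $F(G)$ remains polynomially sized.

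Next, suppose toward a contradiction that some $F(G)$ admitted a str-DNNF($\land$, $r$) compilation $\Sigma_1, \dots, \Sigma_N$ with all intermediate results of size smaller than the bound $2^{\Omega(k)}\mathrm{poly}(1/n)$ from Theorem~\ref{theorem:main_result}, where $k = \tw(G)$. Apply Lemma~\ref{lemma:conditioning_for_strDNNF_compilation} with the partial assignment $a : x \mapsto 0$. This yields that $\Sigma_1|a, \dots, \Sigma_N|a$ is a str-DNNF($\land$, $r$) compilation of $F(G)|a$; and $F(G)|a$ is exactly $T(G)$ (setting $x = 0$ removes the added literal from every clause), so this is a str-DNNF($\land$, $r$) \emph{refutation} of the unsatisfiable formula $T(G)$. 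Moreover conditioning does not increase str-DNNF size --- it is a linear-time operation that only removes or simplifies nodes --- so $|\Sigma_i|a| \le |\Sigma_i|$ for all $i$. Therefore this refutation of $T(G)$ has all intermediate results strictly below $2^{\Omega(k)}\mathrm{poly}(1/n)$, contradicting Theorem~\ref{theorem:main_result}. Hence every str-DNNF($\land$, $r$) compilation of $F(G)$ has an intermediate result of size $2^{\Omega(k)}\mathrm{poly}(1/n) = 2^{\Omega(n)}$, since $k = \Omega(n)$ for graphs in $\calG$.

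The only subtle point to verify carefully --- and the main obstacle, though a mild one --- is the bookkeeping identity $F(G)|_{x=0} = T(G)$ together with the fact that the conditioned sequence is a \emph{valid} compilation of $T(G)$ rather than merely computing the right function: this is precisely the content of Lemma~\ref{lemma:conditioning_for_strDNNF_compilation}, which guarantees each conditioned step is again a clause-of-$F(G)|a$ step, an $\apply$ step on a shared vtree, or a restructuring step, and that the vtree is untouched by conditioning so the $\apply$ compatibility condition is preserved. One should also double check that $F(G)$, whose smallest representation is a single node, indeed still \emph{admits} a compilation in our formalism (it does: conjoin the clauses in any order; the point of the corollary is that \emph{no} compilation avoids the blow-up, not that a good one fails to exist). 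With these points in place the corollary follows immediately from Theorem~\ref{theorem:main_result}.
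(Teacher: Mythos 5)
Your proposal is correct and follows essentially the same route as the paper: pad the clauses of unsatisfiable Tseitin formulas over bounded-degree, linear-treewidth graphs with a fresh literal $x$, observe $F(G)\equiv x$ has a size-$1$ str-DNNF, and use Lemma~\ref{lemma:conditioning_for_strDNNF_compilation} to condition any compilation on $x=0$, turning it without size increase into a str-DNNF($\land$,$r$) refutation of $T(G)$ to which Theorem~\ref{theorem:main_result} applies. The extra checks you flag (that $F(G)|_{x=0}=T(G)$, that the conditioned sequence is a valid compilation, and that a compilation of $F(G)$ exists at all) are all sound and consistent with the paper's argument.
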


Note that we could prove a version of Corollary~\ref{corollary:main_sat} parameterized by the so-called primal treewidth of the formulas. Since we do not want to introduce even more notions, we abstain from doing so here.

%We now turn to the proof of Theorem~\ref{theorem:main_result} by giving the general strategy. 

As a first step towards the proof of Theorem~\ref{theorem:main_result}, let $T(G)$ be unsatisfiable with $G = (V,E)$ connected. We look at the very last Apply in the refutation of $T(G)$ in $\textup{str-DNNF}(\land,r)$: 
$$
\Sigma_N = \apply(\Sigma^\ell,\Sigma^r,\land)
$$
where $\Sigma_N \equiv 0$ and $\Sigma^\ell$ and $\Sigma^r$ are two satisfiable str-DNNF structured by the same vtree. Roughly put, the proof of Theorem~\ref{theorem:main_result} is as follows:
\begin{itemize}
\item[1.] We prove that there is a partition $(A,B)$ of $V$ such that \emph{both} $G[A]$ \emph{and} $G[B]$ have treewidth $\Omega(\tw(G))$.
\item[2.] For that partition we show how to construct from $\Sigma^\ell$ and $\Sigma^r$ in polynomial time a str-DNNF $\Sigma^*$ computing a \emph{satisfiable} Tseitin formula $T(G[A])$ or $T(G[B])$
\item[3.] From Theorem~\ref{theorem:sat_Tseitin_formulas_to_DNNF} we derive that $|\Sigma^*| = 2^{\Omega(\tw(G))}$ and use $|\Sigma^*| = O(|\Sigma^\ell| \times |\Sigma^r|)$ to conclude.
\end{itemize} 
For convenience we denote $G_A := G[A]$ and $G_B := G[B]$.
In the second step, we can not really control which of $T(G_A)$ or $T(G_B)$ is satisfiable. But the first step frees us from worrying about this: since both $G_A$ and $G_B$ have large treewidth, $\Sigma^*$ have size exponential in the treewidth of $G$ regardless of whether it represents $T(G_A)$ or $T(G_B)$.

The following lemmas will be proved in the next sections.

\begin{lemma}\label{lemma:good_partition}
Let $G = (V,E)$ be a 2-connected graph with maximum degree $\Delta$. There is a partition $(A,B)$ of~$V$ such that $G_A$ is connected, $G_B$ is 2-connected, and $\min(\tw(G_A)),\tw(G_B)) \geq\lfloor\frac{\alpha\tw(G)}{\Delta^2}\rfloor$ where $\alpha > 0$ is a fixed universal constant.
\end{lemma}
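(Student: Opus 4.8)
The plan is to separate two concerns: first produce \emph{some} partition $(A,B)$ of $V$ with $\min(\tw(G_A),\tw(G_B))=\Omega(\tw(G)/\Delta^2)$, and only afterwards enforce that $G_A$ is connected and $G_B$ is $2$-connected. The post-processing is essentially bookkeeping. Given a partition with both sides of large treewidth, replace $A$ by the vertex set of a connected component of $G_A$ of maximum treewidth (this does not change $\tw(G_A)$) and push the discarded vertices into $B$; since treewidth is monotone under taking induced subgraphs this only increases $\tw(G_B)$. To make $G_B$ $2$-connected, repeatedly apply Theorem~\ref{theorem:bodlaender} inside $G_B$: while $G_B$ has a $1$-separator $u$, pass to the component of $G_B-u$ (together with $u$) that retains the treewidth of $G_B$ and move the other components into $A$; after finitely many steps no $1$-separator is left. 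Interleaving the two repairs so that neither undoes the other needs a little care -- alternatively one sets up the main construction so that $A$ is already built as a connected region and $B$ as a $2$-connected one -- but I do not expect this to be where the difficulty lies.

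The core is the existence of a balanced partition, and the idea is to use a certificate of treewidth that survives the deletion of few vertices and edges. By bramble/treewidth duality, $G$ carries a bramble of order $\tw(G)+1$ whose elements are connected subgraphs; equivalently, a highly linked (well-linked) set of size $\Omega(\tw(G))$. Using the degree bound, I would pass to a \emph{spread-out} certificate: a well-linked set $W$ whose vertices are pairwise at distance at least $3$ and still of size $w=\Omega(\tw(G)/\Delta^2)$ -- the factor $\Delta^2$ being the price of picking an independent set in the square of $G$ restricted to the certificate, a graph of maximum degree at most $\Delta^2$. Let $\mathcal{B}$ be the bramble of order $\Omega(w)$ associated with $W$. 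Split $W$ into two halves $W_A,W_B$ and extend to a vertex partition $(A,B)$ with $W_A\subseteq A$, $W_B\subseteq B$ (e.g.\ by a shortest-path/Voronoi assignment with consistent tie-breaking, so that the pieces of each side are connected). Writing $\mathcal{B}_A$, $\mathcal{B}_B$ for the elements of $\mathcal{B}$ lying entirely in $A$, resp.\ $B$, each is a bramble of $G_A$, resp.\ $G_B$, with the same order in the induced subgraph as in $G$, so $\tw(G_A)\ge\mathrm{ord}(\mathcal{B}_A)-1$ and $\tw(G_B)\ge\mathrm{ord}(\mathcal{B}_B)-1$; and the standard hitting-set estimate gives
$$
\mathrm{ord}(\mathcal{B})\ \le\ \mathrm{ord}(\mathcal{B}_A)+\mathrm{ord}(\mathcal{B}_B)+I,
$$
where $I$ bounds the contribution of the elements straddling the cut. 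It then suffices to realize the split so that $I\le\mathrm{ord}(\mathcal{B})/2$ while $\mathrm{ord}(\mathcal{B}_A)$ and $\mathrm{ord}(\mathcal{B}_B)$ stay balanced; spreading $W$ out is exactly what should keep $I$ under control.

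The hard part -- and the reason heavy structural machinery is invoked -- is this last claim: that a balanced bipartition of a spread-out well-linked set extends to a vertex partition of $V$ whose interface does not swallow most of the certificate. This is a \emph{treewidth-preservation-under-partition} statement, and I would prove it using results on lean/linked tree decompositions (in the spirit of Thomas's lean tree decomposition theorem and the localization of Theorem~\ref{theorem:bodlaender}) to route the linkages of $W$ so that each half of $W$ stays highly linked inside its own side. I also expect the $2$-connectedness hypothesis on $G$ to enter precisely here, ensuring the separations realizing the split can be chosen so that the $B$-side retains a $2$-connected block of full treewidth, which then dovetails with the connectivity clean-up of the first paragraph.
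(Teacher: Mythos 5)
Your connectivity clean-up in the first paragraph matches the paper's own argument in Appendix~A almost exactly: one first obtains from Theorem~\ref{theorem:tw_partition} a bipartition $(A_0,B_0)$ with both sides of treewidth $\Omega(\tw(G)/\Delta^2)$, then repeatedly applies Theorem~\ref{theorem:bodlaender} on the $B$-side (pushing discarded components into $A$), and finally uses $2$-connectedness of $G$ to show $G[A]$ is connected. That part is fine, and it is also the \emph{only} place the paper uses $2$-connectedness, so your expectation that it enters the core partition argument is misplaced. The real content is Theorem~\ref{theorem:tw_partition}, and there the paper does not argue through brambles at all. Following Chekuri--Chuzhoy, it fixes a large well-linked set $S^*$, restricts attention to \emph{acceptable partitions} of $V$ (blocks with boundary at most $\Delta' = \beta r \Delta^2$ and each containing at most half of $S^*$), contracts each block to get a bounded-degree multigraph $G_\calC$, shows via a Paley--Zygmund analysis of a random $3$-coloring of $V(G_\calC)$ that one can $3$-partition it with many internal contracted edges on each side, and then uses a recursive splitting procedure with a charging scheme to show that a minimizer of $|E(G_\calC)|$ over acceptable partitions cannot leave any side with small treewidth.

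Your bramble-plus-Voronoi plan is genuinely different, and the step you flag as ``the hard part'' is exactly where it fails as stated. Spreading $W$ out in the square graph buys you the $\Delta^2$ loss but does nothing to control the cut your Voronoi extension produces: in a $3$-regular expander $\tw(G)=\Theta(n)$ and \emph{every} balanced vertex bipartition cuts $\Theta(n)$ edges, so the interface lives on the same scale as the whole certificate and pairwise distance $3$ in $W$ is irrelevant. The hitting-set inequality you write gives only $\mathrm{ord}(\mathcal{B}_A)+\mathrm{ord}(\mathcal{B}_B)\ge \mathrm{ord}(\mathcal{B})-I$, which, even granting $I\le \mathrm{ord}(\mathcal{B})/2$, does not make \emph{both} summands large; the extra ``balance'' you then stipulate is essentially the statement to be proved. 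And the appeal to lean/linked tree decompositions to ``route linkages so that each half of $W$ stays highly linked inside its own side'' is not a concrete argument: the linkages between $W_A$ and $W_B$ in $G$ have no reason to respect the Voronoi cut, and well-linkedness of $W\cap A$ in $G$ does not localize to well-linkedness in $G[A]$. Closing this gap is what the acceptable-partition framework, the contraction multigraph, and the charging scheme are for.
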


\begin{lemma}\label{lemma:conditioning_for_Tseitin_formulas}
Let $T(G,c)$ be a Tseitin formula with $G$ connected and a partition $(A,B)$ of $V$ such that both $G_A$ and $G_B$ are connected. Then for every assignment $a$ to $X_{E(A,B)}$ there are $c^a_A : A \rightarrow \{0,1\}$ and $c^a_B : B \rightarrow \{0,1\}$ such that 
$$
T(G,c)|a = T(G_A, c^a_A) \land T(G_B,c^a_B).
$$ 
Moreover, if $T(G,c)$ is unsatisfiable then either $T(G_A, c^a_A)$ or $T(G_B, c^a_B)$ is unsatisfiable, but not both. Which of the two formulas is satisfiable depends on whether the number of variables that $a$ maps to $1$ is odd or even.
\end{lemma}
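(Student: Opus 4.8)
The plan is to fold the now-fixed crossing-edge variables into the vertex charges, read off the decomposition constraint by constraint, and then settle satisfiability via Urquhart's criterion (Lemma~\ref{lemma:Tseitin_formula_satisfiability}). For a vertex $v$ of $G$, the incident-edge set $E(v)$ splits into the edges lying inside the part of $(A,B)$ containing $v$ and the crossing edges in $E(A,B)$; conditioning the parity constraint $\chi_{v,c}$ on $a$ fixes exactly the variables of the latter. So for $v \in A$ I would set
$$c^a_A(v) := \Bigl(c(v) + \sum_{e \in E(v) \cap E(A,B)} a(x_e)\Bigr) \bmod 2,$$
and define $c^a_B$ on $B$ symmetrically. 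With this choice, $\chi_{v,c}|a$ is literally the parity constraint $\chi_{v,c^a_A}$ of $G_A$ when $v \in A$ (and $\chi_{v,c^a_B}$ of $G_B$ when $v \in B$). Since $E$ is the disjoint union of the edges of $G_A$, the edges of $G_B$, and $E(A,B)$, the variable sets $X_{E(G_A)}$ and $X_{E(G_B)}$ are disjoint and together equal $\var(T(G,c)) \setminus X_{E(A,B)}$; conjoining $\chi_{v,c}|a$ over all $v \in V$ then gives exactly $T(G,c)|a = T(G_A,c^a_A) \wedge T(G_B,c^a_B)$. This first part is pure bookkeeping.

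For the satisfiability claim I would use that $G$, $G_A$, and $G_B$ are all connected, so by Lemma~\ref{lemma:Tseitin_formula_satisfiability} each of the three Tseitin formulas is satisfiable iff the sum of its charges is even. The one computation that matters is summing $c^a_A$ over $A$: the resulting double sum $\sum_{v \in A}\sum_{e \in E(v) \cap E(A,B)} a(x_e)$ counts each crossing edge exactly once, since such an edge has exactly one endpoint in $A$, so it equals $m := \sum_{e \in E(A,B)} a(x_e)$, which is precisely the number of variables that $a$ maps to $1$. Hence $\sum_{v \in A} c^a_A(v) \equiv \sum_{v \in A} c(v) + m \pmod 2$, and symmetrically $\sum_{v \in B} c^a_B(v) \equiv \sum_{v \in B} c(v) + m \pmod 2$. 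Adding these two congruences, the $2m$ cancels and we obtain $\sum_{v \in V} c(v) \bmod 2$, which is $1$ exactly when $T(G,c)$ is unsatisfiable. Therefore, in the unsatisfiable case, exactly one of the two sums is odd, i.e.\ exactly one of $T(G_A,c^a_A)$, $T(G_B,c^a_B)$ is unsatisfiable (never both, never neither); and since $\sum_{v \in A} c(v) \bmod 2$ is fixed, which of the two it is flips with the parity of $m$.

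I do not expect a real obstacle here. The only points that need care are the double-counting argument for the crossing edges and verifying that, after conditioning on $a$, the per-vertex constraints agree with those of $G_A$ and $G_B$ exactly (as functions), not merely up to equivalence; once that is in place the parity arithmetic is immediate.
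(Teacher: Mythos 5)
Your proof is correct and follows essentially the same route as the paper's: you fold the conditioned crossing-edge variables into new charge functions (your $c^a_A$, $c^a_B$ are exactly the restrictions of the paper's $c^a = c + \sum_{a(x_{uv})=1} (1_u + 1_v) \bmod 2$ to $A$ and $B$), and then settle satisfiability via Lemma~\ref{lemma:Tseitin_formula_satisfiability} by the same charge-sum computation, including the observation that $\sum_{v \in A} c^a_A(v) \equiv \sum_{v \in A} c(v) + \textup{card}(a) \pmod 2$ and that the two part-sums add to $\sum_{v \in V} c(v) \equiv 1$. No gap to report.
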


\begin{lemma}\label{lemma:technical_lemma}
Let \textup{Apply}$(\Sigma^\ell$, $\Sigma^r$, $\land)$ be the last step of a str-DNNF($\land$,$r$) refutation of $T(G)$ where $G$ is 2-connected. Assume that there is a partition $(A,B)$ of $V$ such that $G_A$ is connected, $G_B$ is 2-connected, and both have treewidth at least~2. Then there is a str-DNNF of size $O(|\Sigma^\ell|\times|\Sigma^r|)$ computing a satisfiable Tseitin formula whose graph is $G_A$ or~$G_B$.
\end{lemma}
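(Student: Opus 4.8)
The plan is to carry out the three-step outline above. The first thing I would nail down is what $\Sigma^\ell$ and $\Sigma^r$ compute: a routine induction along the refutation $\Sigma_1,\dots,\Sigma_N$ shows that every $\Sigma_i$ is equivalent to the conjunction of some subset of $\clause(T(G))$, so $\Sigma^\ell\equiv F^\ell$ and $\Sigma^r\equiv F^r$ for subformulas $F^\ell,F^r$ of $T(G)$ with $F^\ell\wedge F^r\equiv 0$. Here it is crucial that $T(G)$ is \emph{minimally unsatisfiable}: since $G$ is $2$-connected, $G-v$ is connected and $\deg(v)\ge 2$ for every $v$, and a short parity count (using the freed-up assignment of $X_{E(v)}$ together with Lemma~\ref{lemma:Tseitin_formula_satisfiability}) shows that deleting any single clause of $T(G)$ produces a satisfiable formula. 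Consequently the refutation uses every clause, and therefore $\clause(F^\ell)\cup\clause(F^r)=\clause(T(G))$.

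Next I would condition. By Lemma~\ref{lemma:conditioning_for_Tseitin_formulas} the parity of the number of $1$s of an assignment $a$ to $X_{E(A,B)}$ decides which of $T(G_A,c^a_A),T(G_B,c^a_B)$ is unsatisfiable, and $E(A,B)\neq\emptyset$ because $G$ is connected with $A,B$ nonempty; so I fix $a$ with $T(G_A,c^a_A)$ satisfiable and $T(G_B,c^a_B)$ unsatisfiable — the latter again minimally unsatisfiable, this time because $G_B$ is $2$-connected. By Lemma~\ref{lemma:conditioning_for_strDNNF_compilation}, conditioning the whole refutation on $a$ yields a refutation of $T(G)|a=T(G_A,c^a_A)\wedge T(G_B,c^a_B)$ whose last step is $\apply(\Sigma^\ell|a,\Sigma^r|a,\land)$, with $|\Sigma^\ell|a|\le|\Sigma^\ell|$, $|\Sigma^r|a|\le|\Sigma^r|$ and a shared vtree. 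Writing $E_A,E_B$ for the edge sets of $G_A,G_B$, the variable sets $X_{E_A}$ and $X_{E_B}$ are disjoint and every clause of the subformula $F^\ell|a$ (resp.\ $F^r|a$) of $T(G_A,c^a_A)\wedge T(G_B,c^a_B)$ has all its variables over one of them, so $F^\ell|a=F^\ell_A\wedge F^\ell_B$ and $F^r|a=F^r_A\wedge F^r_B$ with $\clause(F^\ell_A),\clause(F^r_A)\subseteq\clause(T(G_A,c^a_A))$ and $\clause(F^\ell_B),\clause(F^r_B)\subseteq\clause(T(G_B,c^a_B))$. Since $\clause(F^\ell)\cup\clause(F^r)=\clause(T(G))$ and conditioning by $a$ maps $\clause(T(G))$ onto $\clause(T(G)|a)=\clause(T(G_A,c^a_A))\cup\clause(T(G_B,c^a_B))$, we get $\clause(F^\ell_A)\cup\clause(F^r_A)=\clause(T(G_A,c^a_A))$, i.e.\ $F^\ell_A\wedge F^r_A=T(G_A,c^a_A)$ as CNF formulas — precisely the satisfiable Tseitin formula I am after.

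Then I would recover it cheaply. The formulas $F^\ell_B$ and $F^r_B$ are subformulas of the minimally unsatisfiable $T(G_B,c^a_B)$, hence — as long as neither is the whole of it — satisfiable. Conditioning $\Sigma^\ell|a$ on a satisfying assignment $\beta^\ell$ of $F^\ell_B$ (an assignment to $X_{E_B}$, disjoint from $\var(F^\ell_A)$) removes the $F^\ell_B$-part and leaves a str-DNNF for $F^\ell_A$ of size at most $|\Sigma^\ell|$; symmetrically one gets a str-DNNF for $F^r_A$ of size at most $|\Sigma^r|$; both carry a common (restricted) vtree, so $\Sigma^*:=\apply$ of the two computes $F^\ell_A\wedge F^r_A=T(G_A,c^a_A)$ and has size $O(|\Sigma^\ell|\times|\Sigma^r|)$. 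Since $G_A$ is connected with treewidth at least $2$, $T(G_A,c^a_A)$ is a genuine satisfiable Tseitin formula, so $\Sigma^*$ meets the statement and Theorem~\ref{theorem:sat_Tseitin_formulas_to_DNNF} can be applied to it.

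The main obstacle is precisely the degenerate case set aside above: it can happen that $F^\ell_B$ (or $F^r_B$) equals all of $T(G_B,c^a_B)$, so that $\Sigma^\ell|a\equiv 0$ and the peeling-off step yields nothing; equivalently, $\Sigma^\ell$ — which is satisfiable — collapses to unsatisfiable after conditioning on $a$. This has to be ruled out or circumvented: either by choosing $a$ to be compatible with a model of $\Sigma^\ell$ (which forces $F^\ell_B$ to be a proper, hence satisfiable, subformula for that $a$), reconciling this with the parity requirement on $a$ and the symmetric requirement for $\Sigma^r$, or, when no single $a$ can serve both, by arguing that whichever of $\Sigma^\ell|a,\Sigma^r|a$ is not $\equiv 0$ must already carry the whole of $T(G_A,c^a_A)$ because the other side was forced to contribute all the surviving $B$-clauses. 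Making this case analysis airtight, together with the two minimal-unsatisfiability facts, is where essentially all of the work lies, and it is also the only place where the $2$-connectivity of $G$ and of $G_B$ is genuinely used.
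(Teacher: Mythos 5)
Your reduction is on the right track, and you have correctly isolated the crux, but neither of your two proposed repairs closes the gap. Fix (ii) is outright false: if $F^\ell$ contributes \emph{all} $B$-clauses (so $\Sigma^\ell|a\equiv 0$ for every $a$ of the parity you want), nothing forces $F^r_A$ to equal $T(G_A,c^a_A)$ --- only the \emph{union} $\clause(F^\ell_A)\cup\clause(F^r_A)$ is guaranteed complete, and the $A$-clauses that $F^r$ lacks may well sit solely in the now-useless $F^\ell_A$, so $\Sigma^r|a$ need not ``carry'' $T(G_A,c^a_A)$. Fix (i), as you already concede, would require simultaneously (a) the correct parity for $a$, (b) $\Sigma^\ell|a$ satisfiable, and (c) $\Sigma^r|a$ satisfiable; the preceding scenario shows there is in general no such $a$, so this is not a detail to be reconciled later but a genuine obstruction.

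The point your proposal does not exploit is that the lemma's conclusion allows the satisfiable Tseitin formula to live on $G_B$ rather than $G_A$; that disjunction is there precisely to escape the obstruction above. The paper's proof splits on the number of incomplete $B$-constraints. If for some $\Sigma\in\{\Sigma^\ell,\Sigma^r\}$ at most two $B$-constraints are incomplete, it extracts the $B$-part of a suitable conditioning of $\Sigma$ and conjoins the at most two missing parity constraints (each an $O(\Delta)$-size str-DNNF on the same vtree by Lemma~\ref{lemma:parity_constraint_to_strDNNF}), producing a satisfiable $T(G_B,c^a_B)$ of size $O(|\Sigma|)$. Only when both $\Sigma^\ell$ and $\Sigma^r$ have at least \emph{three} incomplete $B$-constraints does it pursue your target $T(G_A,c^a_A)$, and even then $a$ must be chosen with care: one picks distinct $u,v\in B$, one with a clause $C_u$ missing from $\Sigma^r$ and one with $C_v$ missing from $\Sigma^\ell$, such that the $X_{E(A,B)}$-restrictions satisfy $E'(u)\cup E'(v)\neq E(A,B)$, and extends an assignment falsifying $C'_u,C'_v$ to an $a$ of the right parity (the proper-subset condition keeps the conditioned graph connected so Lemma~\ref{lemma:conditioning_for_Tseitin_formulas} applies). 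The at-least-three hypothesis is exactly what guarantees two such distinct $u,v$ exist; Claim~\ref{claim:incomplete_constraints_incomplete_clause_set} then ensures $\chi_u|a$ and $\chi_v|a$ stay incomplete, which by $2$-connectivity of $G_B$ (Claim~\ref{claim:if_2-connected_then_MUS}) keeps both $(F^\ell)^a_B$ and $(F^r)^a_B$ proper and hence satisfiable. This two-case analysis, and the resulting careful choice of $a$, is the content missing from your proposal.
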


\begin{proof}[Proof of Theorem~\ref{theorem:main_result}]
First, using Lemmas~\ref{lemma:good_partition} and~\ref{lemma:technical_lemma} and Theorem~\ref{theorem:sat_Tseitin_formulas_to_DNNF}, we prove the result when $G$ is 2-connected. Let $\Delta$ be an upper bound on the maximum degree of all our graphs. Fix a graph $G=(V,E)$ and consider the partition $(A,B)$ of $V$ given by Lemma~\ref{lemma:good_partition}. Let $k = \tw(G)$ and $n = |E(G)|$. We can choose the constant hidden in $2^{\Omega(k)}$ of the statement so that the theorem becomes trivial whenever $\lfloor\alpha k/\Delta^2\rfloor < 2$, so we assume $\lfloor\alpha k/\Delta^2\rfloor \ge 2$ in the remainder. 

The conditions on $(A,B)$ described in Lemma~\ref{lemma:technical_lemma} are met so we obtain a str-DNNF $\Sigma^*$ computing a \emph{satisfiable} Tseitin formula $T(G_U)$ for some $U \in \{A,B\}$ with $|\Sigma^*| \leq \gamma\times|\Sigma^\ell|\times |\Sigma^r|$ for some $\gamma > 0$. Now Theorem~\ref{theorem:sat_Tseitin_formulas_to_DNNF} says that there is a constant $\beta > 0$ such that $|\Sigma^*| \geq 2^{\beta k/\Delta^3}/n$. So we have $\min(|\Sigma^\ell|,|\Sigma^r|) \geq 2^{\beta k/2\Delta^3}/(\gamma n)$. This completes the proof in the case where $G$ is 2-connected.

Now we show how to go from the general case to the case where $G$ is 2-connected. Assume $G$ has a 1-separator $\{u\}$ and let $U_1,\dots,U_s$ be the vertex sets of the connected components of $G$ after removal of $u$. We know from Theorem~\ref{theorem:bodlaender} that there is some $i \in [s]$ such that $\tw(G[U_i \cup \{u\}]) = \tw(G)$, say $i = 1$. Now there is a proper subset $E' \subset E(u)$ such that removing $E'$ from $G$ yields two connected components $G_A$ and $G_B$, with $U_i \cup \{u\} \subseteq A$. So $E' = E(A,B)$ and, by Lemma~\ref{lemma:conditioning_for_Tseitin_formulas}, we can choose an assignment $a$ to $E'$ such that $T(G)|a = T(G_A) \land T(G_B)$ where $T(G_B)$ is satisfiable and $T(G_A)$ unsatisfiable.

Let $a_B$ be a satisfying assignment of $T(G_B)$. Using Lemma~\ref{lemma:conditioning_for_strDNNF_compilation} we can condition any str-DNNF($\land$,$r$) refutation of $T(G)$ on the assignment $a \cup a_B$ to obtain a str-DNNF($\land$,$r$) refutation of $T(G_A)$ without size increase. $G_A$ has fewer 1-separators than $G$ and $\tw(G_A) = \tw(G)$. We repeat the procedure until obtaining a str-DNNF($\land$,$r$) refutation of $T(G')$, with $G'$ a subgraph of $G$ that has the same treewidth of $G$ and has no 1-separator. So $G'$ is 2-connected, and the refutation of $T(G')$ obtained is at most as large as that of $T(G)$ we have started from.
\end{proof}
\section{Graph Bi-Partition with Large Treewidth on Both Sides (Lemma~\ref{lemma:good_partition})}

%In this section we present a result on graph partition that fulfil the first step of the proof sketch described previously. 
Lemma~\ref{lemma:good_partition} is shown with the help of Theorem~\ref{theorem:tw_partition} below combined with Theorem~\ref{theorem:bodlaender}. For space reasons we defer the proof to Appendix A.
%the full version of the paper. 
We here discuss some of the underlying graph theory, in particular the following result.

\begin{theorem}\label{theorem:tw_partition}
There exists a constant $0 < \alpha \leq 1$ such that, for all graphs $G = (V,E)$ with maximum degree at most $\Delta$, there is a partition $(A,B)$ of $V$ such that $\tw(G[A]) \geq \lfloor\frac{\alpha\tw(G)}{\Delta^2}\rfloor$ and $\tw(G[B]) \geq \lfloor\frac{\alpha\tw(G)}{\Delta^2}\rfloor$. 
\end{theorem}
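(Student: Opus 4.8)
\textbf{Proof proposal for Theorem~\ref{theorem:tw_partition}.} The plan is to certify the treewidth of each side of the partition by exhibiting in it a large \emph{well-linked set} (equivalently a high-order bramble). Recall the standard treewidth--bramble--well-linkedness duality: a graph has treewidth at least $k$ iff it has a bramble of order at least $k+1$, such a bramble contains a well-linked set of size $\Omega(k)$, and conversely any well-linked set of size $w$ in a graph $H$ forces $\tw(H) = \Omega(w)$ (because a well-linked set of size $w$ has no balanced separator of size $< w/3$, so every tree decomposition has a bag meeting it in $\Omega(w)$ vertices). Thus it suffices to start from a well-linked set $W$ of $G$ with $|W| = \Omega(\tw(G))$ and to produce a partition $(A,B)$ of $V$ together with subsets $W_A \subseteq W \cap A$ and $W_B \subseteq W \cap B$, each of size $\Omega(|W|/\Delta^2)$ and each well-linked in $G[A]$, respectively $G[B]$. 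Granting this, applying the duality once in each direction yields the claimed bound with a universal constant $\alpha$, after absorbing the (trivial) case $\lfloor \alpha\tw(G)/\Delta^2\rfloor < 2$.

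\textbf{Splitting the well-linked set.} First I would thin $W$ to a subset $W'$ that is \emph{scattered}, i.e.\ pairwise at distance at least $3$; greedily this costs only a factor $\Delta^2$ (each chosen vertex kills at most $\Delta^2-1$ others), and $W'$ is still well-linked. Next, let $S$ be a \emph{minimum} set such that $G - S$ can be split into two vertex sets $V_1,V_2$ with no edges between them and $\min(|W' \cap V_1|,|W' \cap V_2|) \ge |W'|/3$; such an $S$ exists (in the extreme, $S = V$) and by well-linkedness $|S| = \Omega(|W'|)$. Set $W_A := W' \cap V_1$ and $W_B := W' \cap V_2$, put all of $S$ on, say, the $B$ side: $A := V_1$, $B := V_2 \cup S$. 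Both $|W_A|,|W_B|$ are $\Omega(|W'|/3) = \Omega(|W|/\Delta^2)$. It remains to argue that $W_A$ is well-linked in $G[A] = G[V_1]$ and $W_B$ in $G[B]$. Here the point is that $N(V_1) \subseteq S$, and a small separator of $W_A$ inside $G[V_1]$, together with a suitable portion of $S$ (controlled using the bounded neighbourhoods of the scattered vertices, the submodularity of cuts, and an uncrossing argument with $S$), would yield a balanced separator of $W'$ in $G$ strictly smaller than $S$, contradicting minimality. For the $B$ side one argues symmetrically, absorbing the extra vertices of $S$ (again using that $S$ is small relative to $|W|$ and that $W'$ is scattered) without destroying well-linkedness of a constant fraction of $W_B$.

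\textbf{Main obstacle.} The genuinely delicate step is the last one: the vertex-disjoint paths that witness well-linkedness of $W$ in the whole of $G$ may repeatedly cross the cut $S$, so it is not automatic that either side inherits a large well-linked set, and a careless choice of $S$ or of how to distribute $S$ between $A$ and $B$ can destroy the property. Making the minimality/uncrossing argument go through — and tracking exactly how the maximum degree enters when one moves to the scattered subset and when one re-routes around $S$ — is where the ``heavy machinery from structural graph theory on the preservation of treewidth under graph partitions'' is needed, and it is what produces the $\Delta^2$ loss in the final bound. (For Lemma~\ref{lemma:good_partition} one then additionally feeds the partition through Theorem~\ref{theorem:bodlaender} to upgrade the two pieces to the required connectivity, but that is not needed for Theorem~\ref{theorem:tw_partition} itself.)
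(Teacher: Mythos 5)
Your route diverges from the paper's. The paper follows Chekuri--Chuzhoy's template: it fixes a large well-linked set $S^*$, minimizes a potential (the number of edges of the contracted multigraph $G_\calC$) over a family of ``acceptable'' partitions $\calC$ of $V(G)$, splits any such $G_\calC$ into three edge-rich parts by a random $3$-colouring and the Paley--Zygmund inequality (Lemma~\ref{lemma:nice_vertex_partition}), and shows via an iterated sparse-cut routine with a charging scheme (Lemma~\ref{lemma:if_not_wl_then_better_partition}) that if two of the three parts had small treewidth one could strictly decrease the potential, a contradiction. You instead propose a one-shot argument: thin $W$ to a scattered $W'$, take a single minimum balanced $W'$-separator $(V_1,S,V_2)$, and deduce that each side inherits a large well-linked set. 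That is a genuinely different plan, and the paragraph where you admit it is ``genuinely delicate'' is in fact where it breaks.

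The minimality of $S$ does not protect the pieces. Concretely, let $H$ be a bounded-degree expander on $m$ vertices (so $\tw(H)=\Theta(m)$, $\Delta(H)=3$), attach a pendant $w_v$ to each $v\in V(H)$, and take $W=\{w_v : v\in V(H)\}$; this $W$ is well-linked of size $\Theta(\tw(G))$, and so is any scattered refinement $W'$. A minimum balanced $W'$-separator is obtained by deleting the anchors of roughly a third of $W'$: then $V_1$ is exactly that set of now-isolated pendants, $V_2$ is everything else, and $|S|=\lceil|W'|/3\rceil$, which matches the well-linkedness lower bound, so $S$ is genuinely optimal. Yet $G[V_1]$ is edgeless, hence $\tw(G[V_1])=0$ and $W'\cap V_1$ is not well-linked in $G[V_1]$. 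No ``suitable portion of $S$'' rescues this, because any separator of $W'\cap V_1$ in $G[V_1]$ combined with any part of $S$ has size at least $|S|$, so the contradiction-with-minimality you aim for never materialises. The known way out is precisely not to take a single minimum cut but to charge cut edges and iterate until no sparse cut exists, which is what the paper does. Two smaller points: ``such an $S$ exists (in the extreme, $S=V$)'' is not a valid witness, since then $V_1=V_2=\emptyset$ and the balance condition fails (take instead $S=V\setminus W'$ and split the independent set $W'$); and the $\Delta^2$ in the paper's bound does not come from thinning $W$ to a scattered set (the paper never does this) but from Lemma~\ref{lemma:lower_bound_treewidth} together with $|S|\geq |out(S)|/\Delta$.
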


\noindent To illustrate Theorem~\ref{theorem:tw_partition}, we look at the particular case of grid graphs. The $n\times n$ grid has treewidth $n-1$ and maximum degree $4$. It is straightforward to partition its vertices to obtain an $n\times \lfloor n/2 \rfloor$ grid on one side, and an $n\times \lceil n/2 \rceil$ on the other. Using this partition for $(A,B)$ we see that $G[A]$ and $G[B]$ both have an $\lfloor n/2 \rfloor \times \lfloor n/2 \rfloor$ induced grid and therefore both have treewidth at least $\lfloor n/2 \rfloor - 1 \geq (n-1)/4$. Of course the constant $\alpha$ in the theorem is way smaller than $4$.

The proof of Theorem~\ref{theorem:tw_partition} is technical and is deferred to Appendix~C, here we just provide some arguments to justify its veracity. Theorem~\ref{theorem:tw_partition} is an adaptation of the following result of Chekuri of Chuzhoy~(\citeyear{ChekuriC13}). 
%the full version due to space constraint,

\begin{theorem}\label{theorem:Chekuri_Chuzhoy_thm}
Let $h$ and $r$ be integers and let $G = (V,E)$. There are positive constants $\beta$ and $c$ such that, if $h^3r \leq \beta \frac{\tw(G)}{\log^c(\tw(G))}$, then there is an \emph{efficient} algorithm to partition $V$ into $(V_1,\dots, V_h)$, with $\tw(G[V_i]) \geq r$ true for all $i \in [h]$.
\end{theorem}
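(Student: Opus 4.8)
The plan is to prove Theorem~\ref{theorem:Chekuri_Chuzhoy_thm} through the theory of \emph{well-linked sets}, following the strategy of Chekuri and Chuzhoy~(\citeyear{ChekuriC13}).

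\textbf{Step 1: passing to well-linked sets.} Call a set $T\subseteq V$ \emph{well-linked} if for every partition $T=X\uplus Y$ with $|X|\le|Y|$ there are $|X|$ pairwise vertex-disjoint paths from $X$ to $Y$ in $G$ (in the actual argument one works with an $\alpha$-well-linked relaxation, $\alpha$ a small constant, which I suppress here). Two standard and efficiently constructive facts tie this notion to treewidth: (i) if $\tw(G)=k$ then $G$ contains a well-linked set of size $\Omega(k)$, which can be found in polynomial time up to polylogarithmic loss by repeatedly invoking an approximate balanced-separator subroutine; (ii) if $G$ contains a well-linked set of size $s$ then $\tw(G)=\Omega(s)$, because any separator that is balanced with respect to the $s$ marked vertices must meet all $\Omega(s)$ disjoint paths guaranteed by well-linkedness, whereas a graph of treewidth $o(s)$ has a balanced separator of size $o(s)$. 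Finally, any vertex not assigned to one of the classes $V_1,\dots,V_{h-1}$ may be dumped into $V_h$ without decreasing $\tw(G[V_h])$. Hence it suffices to exhibit, in polynomial time, $h$ pairwise vertex-disjoint induced subgraphs of $G$, each containing a well-linked set of size $\Omega(r)$, whenever $G$ itself has a well-linked set of size $s$ with $s\ge \beta' h^3 r\log^{c'}s$ for suitable constants $\beta',c'$.

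\textbf{Step 2: iterative extraction.} We peel the $h$ pieces off one at a time, maintaining a connected subgraph $G_i\subseteq G$ and a well-linked set $T_i$ inside $G_i$; initially $G_0=G$ and $T_0$ is the well-linked set from Step~1. At stage $i$, as long as $|T_i|$ exceeds a threshold of order $r\cdot\mathrm{polylog}(s)$, a maximum-flow / Menger-type and region-growing argument inside $G_i$ extracts a subset $S\subseteq T_i$ of size $\Theta(r)$ and a connected subgraph $H_{i+1}\subseteq G_i$ containing $S$ in which $S$ is well-linked (so that $\tw(G[V(H_{i+1})])=\Omega(r)$), bounded off from the rest of $G_i$ by a separator $N_{i+1}$ of size $O\!\big(g(|T_i|,r)\big)$, where the loss term $g$ is of order roughly $|T_i|^{2/3}r^{1/3}$ up to polylogarithmic factors (the polylog coming from the approximate sparse-cut guarantee used in the region growing). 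We then let $G_{i+1}$ be the connected component of $G_i-V(H_{i+1})$ containing the bulk of the surviving marked vertices; the extraction is arranged so that only $O(g(|T_i|,r))$ marked vertices are lost, and $T_{i+1}$ is what remains, which is still well-linked in $G_{i+1}$ because deleting a single host vertex destroys at most one of the disjoint paths witnessing any one linkedness constraint. The subgraphs $H_1,\dots,H_h$ are pairwise disjoint by construction, and each induces treewidth $\ge r$ once the hidden constant in $\Theta(r)$ is chosen large enough.

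\textbf{Step 3: accounting, and the main obstacle.} Since $|T_i|\le s$ at every stage, the total damage over $h$ extractions is at most $h\cdot O\!\big(s^{2/3}r^{1/3}\,\mathrm{polylog}(s)\big)$; demanding that this stay below $s/2$ (so the extraction keeps firing for all $h$ stages) rearranges to $h^3 r\le \beta\, s/\log^{c}(s)$ for appropriate constants $\beta,c$, and substituting $s=\Omega(\tw(G))$ and absorbing logarithms yields exactly the hypothesis $h^3r\le\beta\,\tw(G)/\log^c(\tw(G))$. All the ingredients --- approximate balanced separators, maximum flows, region growing, and the well-linked$\leftrightarrow$treewidth conversions --- are polynomial-time and constructive, so the overall algorithm is efficient. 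The technical heart, and the step I expect to be the genuine obstacle, is the extraction of Step~2: simultaneously carving out a well-linked sub-instance of size $\Theta(r)$ \emph{and} certifying that the remainder is still well-linked, via a separator small enough that $h$ such operations are affordable. This needs a careful flow-rerouting / uncrossing analysis on the marked set, and it is precisely the exponent obtained for the loss term $g$ there that forces the cubic dependence $h^3 r$ (rather than a quadratic one) in the final bound; everything else is bookkeeping around the classical equivalence between treewidth and well-linked sets.
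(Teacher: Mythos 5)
There is a genuine gap, and it sits exactly where you place it yourself. Your Step~2 --- extracting from $G_i$ a sub-instance containing a well-linked set of size $\Theta(r)$, separated from the rest by a cut of size roughly $|T_i|^{2/3}r^{1/3}\,\mathrm{polylog}$, while certifying that the surviving marked set $T_{i+1}$ remains well-linked in $G_{i+1}$ --- is not proved but merely asserted, with the loss exponent chosen precisely so that the Step~3 bookkeeping reproduces the $h^3r$ threshold. Since the entire content of Theorem~\ref{theorem:Chekuri_Chuzhoy_thm} lives in that extraction lemma, the proposal is an outline of a plausible strategy rather than a proof. Moreover, the one justification you do offer for the residual well-linkedness (``deleting a single host vertex destroys at most one of the disjoint paths'') does not cover what your construction actually does: you delete all of $V(H_{i+1})\cup N_{i+1}$, a set of size far larger than one, and a cut of size $O(g)$ can simultaneously destroy the witnesses for many linkedness constraints; preserving (even approximate) well-linkedness of the residue after such a deletion requires a genuine flow-rerouting or re-decomposition argument, which is missing.

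For comparison: the paper does not prove this statement at all --- it is quoted as a black box from Chekuri and Chuzhoy --- and the result the paper actually needs and proves (Theorem~\ref{theorem:tw_partition}, Appendix~C) is obtained by a different, non-constructive route tailored to bounded degree: a well-linked set $S^*$ with the edge-boundary property, \emph{acceptable partitions} of $V(G)$ and their contracted multigraphs, a Paley--Zygmund random tripartition showing every acceptable partition's contracted graph splits into three edge-rich parts, and an extremal argument (take the acceptable partition minimizing the number of contracted edges; if some part had small treewidth, a Split-plus-charging scheme would produce an acceptable partition with strictly fewer edges). That argument avoids both the $\mathrm{polylog}$ loss and the iterative-extraction machinery you sketch, at the price of giving up efficiency and general degree. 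If you want to establish Theorem~\ref{theorem:Chekuri_Chuzhoy_thm} itself, you must either supply the extraction lemma of your Step~2 in full (including the preservation of well-linkedness and the claimed cut bound) or follow the clustering/charging route of Chekuri and Chuzhoy; as written, neither is done.
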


\noindent Theorem~\ref{theorem:tw_partition} is almost a subcase of Theorem~\ref{theorem:Chekuri_Chuzhoy_thm} with $h = 2$. The only problem is that in Theorem~\ref{theorem:tw_partition}, $r$ would be roughly $\alpha/\Delta^2$ and thus independent of the treewidth, which is not the case in Theorem~\ref{theorem:Chekuri_Chuzhoy_thm} because of the divisor $\log^c(\tw(G))$. A careful examination of Chekuri and Chuzhoy's proof shows that the log-divisor has two reasons: (1) a preprocessing of $G$ to decrease its degree and (2) the use of an approximation algorithm to make their partition efficiently computable. Since we work with graphs of bounded degree and only care about the existence of a partition and not its computation, we can adapt the proof for $h = 2$ and make some other adjustments to get rid of the $\log^c(\tw(G))$ to obtain Theorem~\ref{theorem:tw_partition}, see Appendix~C for details.
%see the full version of this paper for details.

\setcounter{lemma}{4}

\section{Graph Partitions for Tseitin formulas and subformulas (Lemma~\ref{lemma:conditioning_for_Tseitin_formulas})}

In this section we prepare for the proof of Lemma~\ref{lemma:technical_lemma} by recalling some results on how Tseitin formulas behave when we disconnect the underlying graph. The variables of a Tseitin formula $T(G,c)$ uniquely identify the edges of its underlying graph $G$. After assigning the variable $x$ corresponding to the edge $e := uv$ in $T(G,c)$ and removing the negated literals and the satisfied clauses, the new formula is a Tseitin formula $T(G',c')$, but this time for the graph $G' = (V,E\setminus\{e\})$. If $x$ is assigned $0$, then the new charge function is the same as the old one, that is, $c' = c$. Otherwise if $x$ is assigned $1$, then $c'$ coincides with $c$ on all vertices except $u$ and $v$, that is, $c' = c + 1_u + 1_v \mod 2$. By induction, conditioning $T(G,c)$ on a partial assignment of its variables yields a new Tseitin formula whose underlying graph is $G$ without the corresponding edges. We focus on variable conditionings that disconnect $G$.
\begin{proof}[Proof of Lemma~\ref{lemma:conditioning_for_Tseitin_formulas}]
Let $G = (V,E)$. $T(G,c)|a$ be the CNF obtained by removing from $T(G,c)$ all clauses containing a literal set to $1$ by $a$, and removing all literals set to $0$ by $a$ from the remaining clauses. $T(G,c)|a$ is exactly $T(G',c^a)$ with $G' = (V,E\setminus E(A,B))$ and $c^a = c + \sum_{x_{uv} : a(x_{uv}) = 1} 1_u + 1_v \mod 2$.

$G_A$ and $G_B$ are the only two connected components of~$G'$ so $T(G',c^a) = T(G_A, c^a_A) \land T(G_B,c^a_B)$ where $c^a_A$ and $c^a_B$ are the restrictions of $c^a$ to $A$ and $B$ respectively. 

Now if $T(G,c)$ is unsatisfiable, then so is $T(G_A, c^a_A) \land T(G_B,c^a_B)$. Let $S_A$ and $S_B$ be the sums of all $c^a(u)$ for $u$ in $A$ and $B$, respectively. By Lemma~\ref{lemma:Tseitin_formula_satisfiability} we have $S_A = 1 \mod 2$ or $S_B = 1 \mod 2$. Observe that $\sum_{u \in V} c^a(u) = S_A + S_B$ and that $\sum_{u \in V} c^a(u) = \sum_{u \in V} c(u) = 1 \mod 2$. So either $S_A = 0 \mod 2$ or $S_B = 0 \mod 2$ holds. Again by Lemma~\ref{lemma:Tseitin_formula_satisfiability}, it follows that either $T(G_A,c^a_A)$ or $T(G_B,c^a_B)$ is satisfiable.

All edges whose variables are assigned values by $a$ have one endpoint in $A$ and the other in $B$. Let $\textup{card}(a) := |\{x_{uv} : a(x_{uv}) = 1\}|$. Then looking at the expression of $c^a$ we see that $S_A =  \sum_{u \in A} c^a(u) = \textup{card}(a) + \sum_{u \in A} c(u)$ and $S_B =  \textup{card}(a) + \sum_{u \in B} c(u)$. Since the parity of $S_A$ and $S_B$ decides the satisfiability of $T(G_A,c^a_A)$ and $T(G_B,c^a_B)$, and since $S_A \neq S_B \mod 2$, we get that the parity of $\textup{card}(a)$ decides which Tseitin formula is satisfiable.
\end{proof}
str-DNNF in the refutation of $T(G,c)$ represent subformulas of $T(G,c)$. For $F$ such a subformula, given a partition $(A,B)$ of $V$ and an assignment $a$ to $X_{E(A,B)}$, the CNF formula $F|a$ is of the form $F^a_A \land F^a_B$ where $F^a_A$ is a subformula of $T(G_A,c^a_A)$ and $F^a_B$ is a subformula of $T(G_B,c^a_B)$.

\setcounter{lemma}{6}
\begin{lemma}\label{lemma:conditioning_for_strDNNF}
Let $\Sigma$ be a str-DNNF representing a subformula $F$ of $T(G)$. Let $(A,B)$ be a partition of $V$ and $a$ be an assignment to $X_{E(A,B)}$ such that $F|a$ is satisfiable, then there are str-DNNF $\Sigma_A$ and $\Sigma_B$ of size at most $|\Sigma|$ and with the same vtree as $\Sigma$, that represent $F^a_A$ and $F^a_B$ respectively.
\begin{proof}
This follows from conditioning being feasible without size increase nor vtree modification on str-DNNF~\cite{PipatsrisawatD08}. Let $T$ be the vtree of $\Sigma$. First we can obtain a str-DNNF $\Sigma'$ equivalent to $\Sigma|a \equiv F|a = F^a_A \land F^a_B$ of size at most $|\Sigma|$ and that respects $T$. Since $F|a$ is satisfiable and since the variables of $F^a_A$ and $F^a_B$ are disjoint, we have an assignment $a'$ to the variables of $F^a_B$ that satisfies $F^a_B$ and such that $(F|a)|a' = F^a_A$. So from $\Sigma'$ we can obtain a str-DNNF equivalent to $\Sigma'|a \equiv F^a_A$, of size at most $|\Sigma'|$, and whose vtree is $T$. The argument works analogously for~$F^a_B$.
\end{proof}
\end{lemma}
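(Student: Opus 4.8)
The plan is to reduce everything to the single fact that str-DNNF can be conditioned on a partial assignment without any size increase and without modifying the vtree~\cite{PipatsrisawatD08}: concretely, one replaces each literal leaf by the Boolean constant the assignment forces on it and propagates the constants upward. So the first step is to condition $\Sigma$ on $a$, producing a str-DNNF $\Sigma'$ with $|\Sigma'| \leq |\Sigma|$ that still respects the vtree $T$ of $\Sigma$ and computes $\Sigma|a \equiv F|a$. By the observation stated just before the lemma, $F|a$ decomposes as $F^a_A \land F^a_B$, where $F^a_A$ is a subformula of $T(G_A,c^a_A)$ and $F^a_B$ is a subformula of $T(G_B,c^a_B)$; since $a$ assigns exactly the variables of the cut edges $E(A,B)$, the variables still present in $F^a_A$ are edge variables strictly inside $G_A$ and those in $F^a_B$ are edge variables strictly inside $G_B$, so $\var(F^a_A)$ and $\var(F^a_B)$ are disjoint.

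Next I would peel off one side of this conjunction by conditioning away the variables of the other. Because $F|a$ is satisfiable and $\var(F^a_A)$ and $\var(F^a_B)$ are disjoint, I can pick an assignment $a'$ to $\var(F^a_B)$ that satisfies $F^a_B$ on its own; then conditioning $F|a = F^a_A \land F^a_B$ on $a'$ fixes the $F^a_B$ conjunct to $1$ and leaves exactly $F^a_A$, i.e.\ $(F|a)|a' \equiv F^a_A$. Applying the same conditioning step to $\Sigma'$ yields a str-DNNF $\Sigma_A$ with $|\Sigma_A| \leq |\Sigma'| \leq |\Sigma|$ and vtree $T$ that computes $F^a_A$. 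Exchanging the roles of $A$ and $B$ gives $\Sigma_B$ by the symmetric argument, which finishes the proof.

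There is no genuinely hard step here; the only point that needs care is that the decomposition $F|a = F^a_A \land F^a_B$ has disjoint variable scopes, since this is exactly what makes it legitimate to choose a satisfying assignment of one conjunct independently of the other. That disjointness is immediate from the fact that $a$ only touches the cut edges $E(A,B)$, so after conditioning the surviving variables of the two parts are the edge variables internal to $G_A$ and to $G_B$ respectively.
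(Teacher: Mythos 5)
Your proposal is correct and follows essentially the same route as the paper: condition $\Sigma$ on $a$ (no size increase, same vtree), then use satisfiability of $F|a$ and the variable-disjointness of $F^a_A$ and $F^a_B$ to condition on a satisfying assignment of one conjunct and isolate the other. The extra remark on why the two scopes are disjoint is a fine clarification but not a departure from the paper's argument.
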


%
%Given a partition $(A,B)$, Lemma~\ref{lemma:conditioning_for_Tseitin_formulas} gives us a satisfiable Tseitin formula, say it is $T(G_A, c^a_A)$, for which we will want to construct a str-DNNF. In the refutation we work with str-DNNF computing subformulas of $T(G)$. Conditioning the subformula $F$ on an assignment $a$ to $X_{E(A,B)}$ means removing from $F$ the clauses containing a literal satisfied by $a$ and removing all literals falsified by $a$ from the remaining clauses. This results in the CNF formula $F|a = F^a_A \land F^a_B$. 
%where $F^a_A$ is a subformula of $T(G_A, c^a_A)$ and $F^a_B$ is a subformula of $T(G_B, c^a_A)$. Observe that, since $T(G_A, c^a_A)$ and $T(G_B, c^a_A)$ work on disjoint sets of variables, so do $F^a_A$ and $F^a_B$. The satisfiability of $F^a_A$ derives from that of $T(G_B, c^a_A)$. But the satisfiability of $F^a_B$ is not guaranteed. It will be important to chose $a$ such that $F^a_B$ is satisfiable. Lemma~\ref{lemma:if_2-connected_then_MUS} can help us with this: if we can ensure that $G_B$ is 2-connected and if we can choose $a$ such that $F^a_B$ as incomplete constraints of $T(G_B,c^a_B)$ then $F^a_B$ is satisfiable.
%

\section{From Unsatisfiable to Satisfiable Tseitin formulas (Lemma~\ref{lemma:technical_lemma})}

We call a constraint $\chi$ \emph{incomplete} in a CNF formula $F$ when $\clause(\chi) \cap \clause(F) \neq \clause(\chi)$. Clearly a subformula of $T(G)$ has incomplete constraints if and only if it is a proper subformula of $T(G)$. All str-DNNF in a str-DNNF($\land$,$r$) compilation of $T(G)$, except the last one, have incomplete constraints.

%\setcounter{lemma}{5}
%\begin{lemma}\label{lemma:technical_lemma}
%Let \textup{Apply}$(\Sigma^\ell$, $\Sigma^r$, $\land)$ be the last step of a str-DNNF($\land$,$r$) refutation of $T(G)$ where $G$ is 2-connected. Assume that there is a partition $(A,B)$ of $V$ such that $G_A$ is connected, $G_B$ is 2-connected, and both have treewidth at least 2. Then there is a str-DNNF of size $O(|\Sigma^\ell|\times|\Sigma^r|)$ computing a satisfiable Tseitin formula whose graph is $G_A$ or~$G_B$.
%\end{lemma}

The proof of Lemma~\ref{lemma:technical_lemma} intuitively works by considering two cases as follows. In the first case, we assume that one of $\Sigma^r$ and $\Sigma^\ell$ contains the constraints for $B$ almost completely, say this is the case for $\Sigma^\ell$. We choose an assignment $a$ to $X_{E(A,B)}$ such that the resulting Tseitin formula $T(G_B, c^a_B)$ is satisfiable. Then we can extract from $\Sigma^\ell$ a str-DNNF and conjoin to this str-DNNF the few missing constraints with increasing is size too much, so that it computes $T(G_B, c^a_B)$.

In the second case, several constraints for $B$ are incomplete in both $\Sigma^\ell$ and $\Sigma^r$. In that case, we can choose an assignment $a$ to $X_{E(A,B)}$ such that the subformulas made of constraints for $A$ used in the construction of $\Sigma^r$ are satisfiable, %after restriction to the variables for the edges of $A$ and 
and the same is true for $\Sigma^\ell$. Then we can conjoin suitably processed versions of $\Sigma^r$ and $\Sigma^\ell$ to get a str-DNNF representation of $T(G_A, c^a_A)$ without increasing the size too much. More formally, we consider the following two cases:
\begin{itemize}
\item[1.] For \emph{some} $\Sigma \in \{\Sigma^\ell,\Sigma^r\}$, at most two constraints for vertices in $B$ are incomplete in $\Sigma$.
\item[2.] For \emph{every} $\Sigma \in \{\Sigma^\ell,\Sigma^r\}$, at least three constraints for vertices in $B$ are incomplete in $\Sigma$.
\end{itemize}

\setcounter{lemma}{5}
\begin{lemma}[\textbf{Case 1}]
Use the notation of Lemma~\ref{lemma:technical_lemma}. If for some $\Sigma \in \{\Sigma^\ell, \Sigma^r\}$ at most two constraints of $T(G)$ for vertices of $B$ are incomplete in $\Sigma$, then there is a str-DNNF of size $O(|\Sigma|)$ computing a satisfiable Tseitin formula whose underlying graph is $G_B$.
\end{lemma}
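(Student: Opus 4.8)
The circuit $\Sigma$ computes a satisfiable subformula $F$ of $T(G)$ in which at most two constraints for vertices of $B$, say those of $v_1,v_2\in B$ (possibly only one, or none), are incomplete, while for every other $v\in B$ all clauses of $\chi_v$ occur in $F$. The plan is to turn $\Sigma$ into a str-DNNF for a satisfiable Tseitin formula on $G_B$ in three moves: (i)~condition $\Sigma$ on a well-chosen assignment $a$ to the cut variables $X_{E(A,B)}$; (ii)~extract from the result a str-DNNF $\Sigma_B$ computing the ``$B$-side'' $F^a_B$; (iii)~conjoin to $\Sigma_B$ at most two small str-DNNF that turn the incomplete constraints of $F^a_B$ into the full parity constraints of a \emph{satisfiable} Tseitin formula on $G_B$.

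For (i) and (ii), let $a$ be the restriction to $X_{E(A,B)}$ of some model of $F$; then $F|a$ is satisfiable and, by Lemma~\ref{lemma:conditioning_for_Tseitin_formulas}, $F|a=F^a_A\land F^a_B$ where $F^a_B$ is a subformula of $T(G_B,c^a_B)$ whose only possibly incomplete constraints are those of $v_1$ and $v_2$. Since $F|a$ is satisfiable, Lemma~\ref{lemma:conditioning_for_strDNNF} gives a str-DNNF $\Sigma_B$ computing $F^a_B$ with $|\Sigma_B|\le|\Sigma|$ and the same vtree as $\Sigma$; inserting into that vtree the at most $2\Delta$ leaves of $X_{E(G_B)}$ still missing from it -- the circuit keeps respecting the enlarged vtree, since the only relevant structuredness requirement is $\var(s)\subseteq\var(\lambda(s))$ -- yields a vtree $T$ on $X_{E(G_B)}$ respected by $\Sigma_B$.

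For (iii), suppose first that $T(G_B,c^a_B)$ is satisfiable. Lemma~\ref{lemma:parity_constraint_to_strDNNF}, applied to the restriction of $T$ to the variables of $\chi_{v_i,c^a_B}$, yields for each $i$ a str-DNNF of size $O(\Delta)$ which, after relabelling its $\lambda$-map into $T$, respects $T$ itself; two calls of $\apply$ then build a str-DNNF $\Sigma^*$ of size $O(|\Sigma|\,\Delta^2)=O(|\Sigma|)$ computing $F^a_B\land\chi_{v_1,c^a_B}\land\chi_{v_2,c^a_B}$. As $\chi_{v_i,c^a_B}$ implies the partial constraint of $v_i$ already present in $F^a_B$, this is exactly $T(G_B,c^a_B)$, which is a satisfiable Tseitin formula on $G_B$, and we are done.

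The main obstacle is the complementary case, where $T(G_B,c^a_B)$ is unsatisfiable, i.e.\ $\sum_{v\in B}c^a_B(v)\equiv 1\pmod 2$. Here a handshake count on $G_B$ -- using that $F^a_B$ contains the full constraint of every vertex of $B$ but $v_1,v_2$, and that $G_B$ is $2$-connected so these vertices have degree at least $2$ -- shows that any model $\gamma$ of $F^a_B$ has the wrong parity on $E_{G_B}(v_i)$ for exactly one $i\in\{1,2\}$, say $i=1$. Thus the satisfiable Tseitin formula on $G_B$ we must output is $T(G_B,c')$, where $c'$ is obtained from $c^a_B$ by flipping the charge of $v_1$; but conjoining a str-DNNF for $\overline{\chi}_{v_1,c^a_B}$ (and for $\chi_{v_2,c^a_B}$) to $\Sigma_B$ only produces $T(G_B,c')$ together with the leftover clauses of $v_1$'s partial constraint in $F^a_B$, and these are consistent with the complemented parity constraint only if they are absent. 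I expect the technical heart of the proof to be a case analysis -- choosing the cut assignment $a$ and the model $\gamma$ jointly, exploiting the $2$-connectivity of $G$ and trading the two incomplete $B$-vertices off against each other -- showing that one can always arrange the at most two incomplete $B$-constraints to be resolved cleanly, each either completed or complemented into a full parity constraint, so that the conjunction is exactly a satisfiable $T(G_B,c')$; the size bound then follows exactly as in the previous case.
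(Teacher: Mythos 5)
Your construction follows the paper's proof on the main line: take $a$ to be the restriction of a model of $F$ to $X_{E(A,B)}$, extract from $\Sigma$ a str-DNNF $\Sigma_B$ computing $F^a_B$ via Lemma~\ref{lemma:conditioning_for_strDNNF}, then conjoin at most two $O(\Delta)$-size str-DNNF for the incomplete $B$-constraints (Lemma~\ref{lemma:parity_constraint_to_strDNNF}) to obtain a str-DNNF of size $O(|\Sigma|\cdot\Delta^2)=O(|\Sigma|)$ computing $T(G_B,c^a_B)$. The vtree manipulation you describe in step~(ii) is unnecessary---Lemma~\ref{lemma:parity_constraint_to_strDNNF} already yields small circuits respecting the given vtree over the constraint's variables---but otherwise this part matches the paper.

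The genuine gap is exactly the one you flag and then do not fill: you must show that the completed Tseitin formula is \emph{satisfiable}, and when $T(G_B,c^a_B)$ is unsatisfiable your proposed fix (conjoin $\overline{\chi}_{v_1,c^a_B}$ in place of $\chi_{v_1,c^a_B}$) does not yield a Tseitin formula on $G_B$, because the clauses of $\chi_{v_1,c^a_B}$ already present in $F^a_B$ survive the conjunction and cut out part of $\overline{\chi}_{v_1,c^a_B}$. You correctly diagnose this, but the ``case analysis choosing $a$ and $\gamma$ jointly'' you invoke is a placeholder rather than an argument, and your handshake count does not even determine a single flipped charge function, since \emph{which} of $v_1,v_2$ carries the wrong parity can vary with the model $\gamma$. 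As written, your proof is therefore incomplete. (For what it is worth, the paper's own proof of this lemma conjoins $\chi'_v,\chi'_w$ and asserts the result computes $T(G_B,c^a_B)$ without explicitly addressing satisfiability in the $B'\neq\emptyset$ subcase, so your scrutiny of this point is well placed---but flagging the difficulty is not resolving it.)
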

\begin{proof}
$\Sigma$ is satisfiable, so there is an assignment $a$ to $X_{E(A,B)}$ such that $\Sigma|a$ is satisfiable. Let $F$ be the CNF whose clauses are used to construct $\Sigma$ in the refutation. With Lemma~\ref{lemma:conditioning_for_strDNNF} we obtain a str-DNNF $\Sigma_B$ equivalent to $F^a_B$ and such that $|\Sigma_B| = O(|\Sigma|)$.
%Then $\Sigma|a \equiv F|a = F^a_A \land F^a_B$. Since $F^a_A$ is satisfiable, we get
%$$
%\begin{aligned}
%\exists X_{E(G_A)}. \Sigma | a 
%&\equiv 
%\exists X_{E(G_A)}.(F^a_A \land F^a_B) 
%\\
%&\equiv 
%(\exists X_{E(G_A)}.F^a_A) \land F^a_B 
%\equiv 
%F^a_B
%\end{aligned}
%$$
%str-DNNF supports linear time conditioning and forgetting so we obtain a str-DNNF $\Sigma_B$ equivalent to $F^a_B$ and such that $|\Sigma_B| = O(|\Sigma|)$. 
Let $B'$ be the set of vertices in $B$ whose constraints are incomplete in $\Sigma$. By assumption $|B'| \leq 2$. For all $v \in B \setminus B'$, the constraint of $T(G_B,c^a_B)$ for $v$ is complete in $\Sigma_B$ because the constraint of $T(G)$ for $v$ is complete in $\Sigma$. If $B' = \emptyset$ then all constraints of $T(G_B,c^a_B)$ are complete in $\Sigma_B$, so $\Sigma_B \equiv T(G_B,c^a_B)$ and we are done. 

Now assume that $B'$ contains two vertices: $B' = \{v,w\}$ and let $\chi'_{v}$, $\chi'_{w}$ be the constraints of $T(G_B,c^a_B)$ for $v$ and $w$ (for the case of one vertex, just take $v = w$). All constraints but $\chi'_{v}$ and $\chi'_{w}$ are complete in $\Sigma_B$, so 
$$
\Sigma_B \land \chi'_{v} \land \chi'_{w} \equiv T(G_B,c^a_B)
$$
Let $T$ be the vtree for $\Sigma_B$. Lemma~\ref{lemma:parity_constraint_to_strDNNF} gives us str-DNNF $D_{v}$ and $D_{w}$ computing $\chi'_{v}$ and $\chi'_{w}$ respectively, of size $O(\Delta)$, and both respecting $T$. We get a str-DNNF $D$ structured by $T$ and equivalent to $T(G_B,c^a_B)$ of size $O(|\Sigma_B|\times \Delta^2) = O(|\Sigma|)$ by conjoining $D_{v}$ and $D_{w}$ to~$\Sigma_B$.
\end{proof}

\setcounter{lemma}{5}
\begin{lemma}[\textbf{Case 2}]
Use the notation of Lemma~\ref{lemma:technical_lemma}. If for every $\Sigma \in \{\Sigma^\ell, \Sigma^r\}$ at least three constraints of $T(G)$ for vertices of $B$ are incomplete in $\Sigma$, then there is a str-DNNF of size $O(|\Sigma^\ell|\times|\Sigma^r|)$ computing a satisfiable Tseitin formula whose underlying graph is $G_A$.
\begin{proof}

Let $F^\ell$ and $F^r$ be the CNF formulas whose clauses were used to construct $\Sigma^\ell$ and $\Sigma^r$, respectively. Apply($\Sigma^\ell$, $\Sigma^r$, $\land$) is the last apply of the refutation so there must be $F^\ell \land F^r = T(G)$. Our aim is to find an assignment $a$ to $X_{E(A,B)}$ such that $T(G_A,c^a_A)$, $\Sigma^\ell|a$, and $\Sigma^r|a$ are satisfiable. If such an assignment exists, then using Lemma~\ref{lemma:conditioning_for_strDNNF} we could obtain str-DNNF $\Sigma^\ell_A$ and $\Sigma^r_A$ that represent $(F^\ell)^a_A$ and $(F^r)^a_A$ respectively. Since $\Sigma^\ell$ and $\Sigma^r$ have the same vtree, so would $\Sigma^\ell_A$ and $\Sigma^r_A$. So we could construct a str-DNNF computing $\Sigma^\ell_A \land \Sigma^r_A \equiv (F^\ell)^a_A \land (F^r)^a_A = T(G_A,c^a_A)$ in time $|\Sigma^\ell_A| \times |\Sigma^r_A| \leq |\Sigma^\ell| \times |\Sigma^r|$, thus finishing the proof. 

It remains to find this assignment $a$. Take $\Sigma \in \{\Sigma^\ell, \Sigma^r\}$ and let $F \in \{F^\ell, F^r\}$ be the corresponding CNF. By Lemma~\ref{lemma:conditioning_for_Tseitin_formulas}, if $T(G_A,c^a_A)$ is satisfiable then $T(G_B,c^a_B)$ is unsatisfiable. Then we have $\Sigma|a \equiv F^a_A \land F^a_B$ where $F^a_A$ is satisfiable since it is a subformula of $T(G_A,c^a_A)$, and we want $F^a_B$ to be satisfiable as well. The following claims help us find $a$ such that $F^a_B$ is satisfiable. The proofs are deferred to Appendix~B. Note that Claim~\ref{claim:if_2-connected_then_MUS} is a folklore result on Tseitin formulas.
%the full version of the paper.

\begin{claim}\label{claim:if_2-connected_then_MUS}
Since $G_B$ is a 2-connected graph, the \emph{proper} subformulas of \emph{any} Tseitin formula $T(G_B)$ are \emph{all} satisfiable.
\end{claim}

\begin{claim}\label{claim:incomplete_constraints_incomplete_clause_set}
Let $F$ be a proper subformula of $T(G)$. Take $C_v \in \clause(\chi_v)$ not in $F$ and denote by $C'_v$ its restriction to $X_{E(A,B)}$. If both $G_A$ and $G_B$ have treewidth at least 2, then for every assignment $a$ to $X_{E(A,B)}$ that falsifies $C'_v$, the constraint $\chi_v|a$ is incomplete in $F|a$.
%\begin{proof}
%Write $C_v = C'_v \lor C''_v$. Note that $C'_v$ may be empty. We will show that $C''_v$ is not in $F|a$. By way of contradiction, assume that this were false, i.e., that $C''_v$ appears in $F|a$. If $C''_v$ has at least two literals, then these are literals for variables in $X_{E(v)}$. Since $C''_v$ is in $F|a$, there is a clause $C = R \lor C''_v$ in $\clause(F) \cap \clause(\chi_v)$ such that $a$ falsifies $R$. But since all clauses of $\chi_v$ are in all variables of $\var(\chi_v)$, we have $\var(R) = \var(C'_v)$, and since $a$ already falsifies $C'_v$ we get that $R = C'_v$. So $C = C_v$, a contradiction.
%
%Now assume $C''_v$ contains a single literal for the variable $x_e$, with $e = uv$. If $C''_v$ is in $F|a$, then there is a clause $C_u = C'_u \lor C''_v$ in $\clause(F) \cap \clause(\chi_u)$ such that $a$ falsifies $C'_u$. Now for every $w \in V$, let $E'(w) := E(w) \cap E(A,B)$ and observe that $E'(u) = E(u) \setminus \{uv\}$ and $E'(v) = E(v) \setminus \{uv\}$. But then $G'=(\{u,v\},\{uv\})$ is a connected component of $G$ after removal of $E(A,B)$. In other words, $G'$ must be $G_A$ or $G_B$. But $G'$ has treewidth $1$, which is too small, another contradiction.
%\end{proof}
\end{claim}

If we can find $a$ such that $T(G_A,c^a_A)$ is satisfiable and such that $F^a_B$ is a \emph{proper} subformula of $T(G_B,c^a_B)$, i.e., not all constraints of $T(G_B,c^a_B)$ are complete in $F^a_B$. Then by the above claims, $F^a_B$ will be satisfiable. Then $\Sigma|a \equiv F^a_A \land F^a_B$ will be satisfiable as well since $\var(F^a_A) \cap \var(F^a_B) = \emptyset$. Recall that this must hold for both  $F = F^\ell$ and $F = F^r$.

By assumption there is $u^r \in B$ whose constraint is incomplete in $\Sigma^r$ and there are $u^\ell, v^\ell, w^\ell \in B$ whose constraints are incomplete in $\Sigma^\ell$. The latter three vertices are distinct, so at least two of them are different from $u^r$. Suppose, without loss of generality, that $u^r \neq v^\ell$ and $u^r \neq w^\ell$. For convenience, rename $u = u^r$, $v = v^\ell$ and $w = w^\ell$. 

Let $C_u$ be a clause of $\chi_u$ missing from $\clause(\Sigma^r)$ and let $C_v$ and $C_w$ be clauses of $\chi_v$ and $\chi_w$ missing from $\clause(\Sigma^\ell)$. We denote $C_u = C'_u \lor C''_u$ where $C'_u$ is the restriction of $C_u$ to $\var(X_{E(A,B)})$. Note that $C'_u$ may be empty. Define $C_v = C'_v \lor C''_v$ and $C_w = C'_w \lor C''_w$ similarly. Let $E'(u)$, $E'(v)$ and $E'(w)$ be the set of edges corresponding to $\var(C'_u)$, $\var(C'_v)$ and $\var(C'_w)$, respectively. By definition, all three sets are subsets of $E(A,B)$. 
\begin{claim} We have $E(A,B) \neq E'(u) \cup E'(v)$ or $E(A,B) \neq E'(u) \cup E'(w)$.
\begin{proof}
If $E'(u) = \emptyset$ or $E'(v) = \emptyset$ or $E'(w) = \emptyset$, then the claim holds because otherwise $E(A,B)$ would be a subset or $E(u)$, or a subset of $E(v)$, or a subset of $E(w)$, which is not possible since $G$ is 2-connected.

Otherwise, if neither $E'(u)$ nor $E'(v)$ nor $E'(w)$ is empty, then the three sets are pairwise disjoint since $u,v,w \in B$. So if $E(A,B) = E'(u) \cup E'(v)$ were to hold, then we would have $E(A,B) \neq E'(u) \cup E'(w)$ because otherwise $E'(v) = E'(w) \neq \emptyset$ would hold, which is impossible.
\end{proof}
\end{claim}
Suppose, w.l.o.g., that $E(A,B) \neq E'(u) \cup E'(v)$. Let $a'_u$ and $a'_v$ be the assignments to $\var(C'_u)$ and $\var(C'_v)$ that falsify $C'_u$ and $C'_v$, respectively (if $C'_u$ is empty, then so is $a'_u$). Conditioning $T(G)$ on $a'_u \cup a'_v$ gives an unsatisfiable Tseitin formula on the graph obtained by removing $E'(u)\cup E'(v)$ from $G$. Call that graph $G'$. Since $G$, $G_A$, and $G_B$ are connected, and since $E'(u) \cup E'(v)$ is a \emph{proper} subset of $E(A,B)$, we have that $G'$ is connected. So by Lemma~\ref{lemma:conditioning_for_Tseitin_formulas}, we have an assignment $a$ to $X_{E(A,B)}$ that extends $a'_u \cup a'_v$ and such that $T(G)|a = T(G_{A},c^a_{A}) \land T(G_B,c^a_B)$ where $T(G_{A},c^a_{A})$ is satisfiable and $T(G_B,c^a_B)$ is unsatisfiable.

Remember that $C_u$ and $C_v$ are missing from $\clause(\Sigma^r)$ and $\clause(\Sigma^\ell)$, respectively, and that $C'_u$ and $C'_v$ are their restrictions to $X_{E(A,B)}$. By construction,~$a$ falsifies $C'_u$ and $C'_v$, therefore by Claim~\ref{claim:incomplete_constraints_incomplete_clause_set} the constraint for the vertices $u$ and $v$ are incomplete in $\Sigma^r|a$ and $\Sigma^\ell|a$, respectively. It follows, since $u$ and $v$ belong to $B$, that $(F^\ell)^a_B$ and $(F^r)^a_B$ are proper subformulas of $T(G_B,c^a_B)$. Then since $G_B$ is 2-connected, Claim~\ref{claim:if_2-connected_then_MUS} entails that both $(F^\ell)^a_B$ and $(F^r)^a_B$ are satisfiable, and therefore $\Sigma^\ell|a$ and $\Sigma^r|a$ are satisfiable.
\end{proof}
\end{lemma}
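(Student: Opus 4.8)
The plan is to look at the very last conjunction $\Sigma_N=\apply(\Sigma^\ell,\Sigma^r,\land)$ of the refutation, where $\Sigma^\ell$ and $\Sigma^r$ are satisfiable, share a vtree, and are built from CNF formulas $F^\ell,F^r$ with $F^\ell\land F^r=T(G)$. The engine of the argument is conditioning: for an assignment $a$ to the cut variables $X_{E(A,B)}$, Lemma~\ref{lemma:conditioning_for_Tseitin_formulas} splits $T(G)|a$ as $T(G_A,c^a_A)\land T(G_B,c^a_B)$ with exactly one factor satisfiable -- which one being governed by the parity of the number of $1$'s in $a$ -- while Lemma~\ref{lemma:conditioning_for_strDNNF} lets us pull out of $\Sigma^\ell|a$ and $\Sigma^r|a$ str-DNNF of no larger size, on a common vtree, computing the induced subformulas $F^a_A,F^a_B$. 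So the whole proof amounts to a careful choice of $a$ together with assembling, from such extracted pieces plus at most a constant number of small parity-constraint gadgets (Lemma~\ref{lemma:parity_constraint_to_strDNNF}), a str-DNNF for whichever of $T(G_A,c^a_A)$, $T(G_B,c^a_B)$ is satisfiable.

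I would split on how completely the $B$-side constraints of $T(G)$ already appear in the two circuits; these are exactly the two cases isolated in the Case~1 and Case~2 lemmas above, where essentially all the work lies. \textbf{Case 1: some $\Sigma\in\{\Sigma^\ell,\Sigma^r\}$ leaves at most two $B$-constraints incomplete.} Choose $a$ with $\Sigma|a$ satisfiable and extract $\Sigma_B\equiv F^a_B$ of size $O(|\Sigma|)$. Any $B$-vertex whose constraint is complete in $\Sigma$ also has its conditioned constraint complete in $\Sigma_B$, so $\Sigma_B$ computes $T(G_B,c^a_B)$ up to at most two missing parity constraints; conjoining small ($O(\Delta)$, same-vtree) str-DNNF gadgets for those constraints produces a str-DNNF of size $O(|\Sigma|\cdot\Delta^2)=O(|\Sigma^\ell|\cdot|\Sigma^r|)$ computing a satisfiable Tseitin formula on $G_B$. \textbf{Case 2: both circuits leave at least three $B$-constraints incomplete.} Here I aim for the $A$-side: I want a single $a$ for which $T(G_A,c^a_A)$ is satisfiable and, simultaneously, both $\Sigma^\ell|a$ and $\Sigma^r|a$ are satisfiable. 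Given such an $a$, Lemma~\ref{lemma:conditioning_for_strDNNF} yields $\Sigma^\ell_A,\Sigma^r_A$ on a common vtree computing $(F^\ell)^a_A$ and $(F^r)^a_A$, and one Apply produces a str-DNNF of size $\le|\Sigma^\ell|\cdot|\Sigma^r|$ computing $(F^\ell)^a_A\land(F^r)^a_A=T(G_A,c^a_A)$.

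The step I expect to be the main obstacle is producing the assignment $a$ of Case 2. The idea is to exploit the surplus of incomplete constraints: there is $u\in B$ with an incomplete constraint in $\Sigma^r$ and three distinct $B$-vertices with incomplete constraints in $\Sigma^\ell$, so at least two of the latter, say $v$ and $w$, differ from $u$. Fixing clauses $C_u,C_v,C_w$ missing from the respective circuits and writing $C'_u,C'_v,C'_w$ for their restrictions to $X_{E(A,B)}$, a short counting argument -- using that $G$ is $2$-connected, so no $E(x)$ with $x\in B$ can contain the whole cut, and that the edge-sets $E'(u),E'(v),E'(w)$ of these restrictions are pairwise disjoint since $u,v,w\in B$ -- shows that $E'(u)\cup E'(v)$ or $E'(u)\cup E'(w)$ is a \emph{proper} subset of $E(A,B)$; say the former. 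Removing a proper subset of the cut keeps the graph connected, so I can extend the partial assignment that falsifies $C'_u$ and $C'_v$ to a full $a$ with $T(G)|a$ split so that $T(G_A,c^a_A)$ is satisfiable (Lemma~\ref{lemma:conditioning_for_Tseitin_formulas}). By Claim~\ref{claim:incomplete_constraints_incomplete_clause_set} -- the only place the hypotheses $\tw(G_A),\tw(G_B)\ge 2$ are needed -- the constraints for $u$ and $v$ stay incomplete after conditioning, so $(F^r)^a_B$ and $(F^\ell)^a_B$ are \emph{proper} subformulas of $T(G_B,c^a_B)$ and hence satisfiable by Claim~\ref{claim:if_2-connected_then_MUS} ($G_B$ being $2$-connected); since their variables are disjoint from the satisfiable $A$-side, $\Sigma^\ell|a$ and $\Sigma^r|a$ are satisfiable. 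In either case the constructed str-DNNF has size $O(|\Sigma^\ell|\cdot|\Sigma^r|)$ and computes a satisfiable Tseitin formula whose graph is $G_A$ or $G_B$, which is the statement.
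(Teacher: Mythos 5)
Your proposal for Case~2 follows exactly the paper's own argument: it pins down $u\in B$ incomplete in $\Sigma^r$ and two further $B$-vertices $v,w$ incomplete in $\Sigma^\ell$ and distinct from $u$, uses 2-connectedness and the pairwise disjointness of $E'(u),E'(v),E'(w)$ to show one of $E'(u)\cup E'(v)$, $E'(u)\cup E'(w)$ is a proper subset of $E(A,B)$, then extends the falsifying partial assignment to $a$ via Lemma~\ref{lemma:conditioning_for_Tseitin_formulas} and appeals to Claims~\ref{claim:if_2-connected_then_MUS} and~\ref{claim:incomplete_constraints_incomplete_clause_set} to keep both $B$-side restrictions satisfiable, finishing with Lemma~\ref{lemma:conditioning_for_strDNNF} and a single Apply. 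This is the same decomposition and the same key lemmas as in the paper, so the approaches coincide.
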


\section{Conclusion}
In the past, experimental works hinted at the inefficiency of the bottom-up approach for compiling some inputs into specific languages like OBDD or SDD. In this paper, we provide theoretical arguments that support the idea that the inefficiency of bottom-up compilation resides in the bottom-up paradigm itself. We propose a framework for compilation that targets the very general language of str-DNNF, puts no constraint on the order in which clauses are conjoined, and allows on-the-fly restructuring of the str-DNNF. Despite these degrees of freedom, we have found a class of CNF formulas that have constant-size str-DNNF representations and proved that they require exponential time and space to be compiled with the bottom-up approach.

In the future, it would be interesting to better understand how the size of intermediate results in bottom-up compilation is impacted by the order in which clauses are conjoined. For example, can it be shown theoretically when the heuristics from~\cite{NarodytskaW07} perform well? Can similar heuristics also be used in the construction of SDD?

\section{Acknowledgments} This work has been partly supported by the PING/ACK project of the French National Agency for Research (ANR-18-CE40-0011).

\bibliography{main.bib}

\appendix

\section{Appendix A: Proof of Lemma 4}

\setcounter{lemma}{3}
\begin{lemma}
Let $G = (V,E)$ be a 2-connected graph with maximum degree $\Delta$. There is a partition $(A,B)$ of~$V$ such that $G[A]$ is connected, $G[B]$ is 2-connected, and $\min(\tw(G[A])),\tw(G[B])) \geq\lfloor\frac{\alpha\tw(G)}{\Delta^2}\rfloor$ with $\alpha > 0$ fixed.
\begin{proof}[Proof of Lemma~\ref{lemma:good_partition}]
First take $(A_0,B_0)$ is in Theorem~\ref{theorem:tw_partition}. $G$ being connected, we can assume that both $A_0$ and $B_0$ are connected. First set $(A,B) = (A_0,B_0)$. If $B$ is 2-connected then we are done. Otherwise if $B$ is not 2-connected then there is $b \in B$ such that $G[B \setminus\{b\}]$ has several connected components $B_1$, $B_2$, $\dots$, $B_k$. By Theorem~\ref{theorem:bodlaender}, for some connected component, say for $B_1$, there is $\tw(G[B_1 \cup \{b\}]) = \tw(G[B])$. So replace $B$ by $B_1 \cup \{b\}$ and put $B_2$, $\dots$, $B_k$ in $A$. Observe that doing that, the number of 1-separators of $B$ decreases while the treewidth of $G[B]$ is unchanged. Then we can repeat this until $B$ has no 1-separator and is therefore 2-connected. 

It remains to prove that $G[A]$ is connected. Assume that there is a connected component $U \subseteq A$ of $G[A]$ that does not contain $A_0$. By construction, there is $b \in B$ such that removing $b$ from $G[B \cup U]$ disconnects $U$ (from $G[B \cup U]$). Since $U$ is also a connected component in $G[A]$ removing $b$ from $G$ also disconnects $U$ (from $G$). But then $G$ is not 2-connected. A contradiction.
\end{proof}
\end{lemma}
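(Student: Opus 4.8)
The plan is to take the bipartition supplied by Theorem~\ref{theorem:tw_partition}, which comes with no connectivity guarantees, and to repair it in two stages so that $G[A]$ becomes connected and $G[B]$ becomes $2$-connected, without ever letting the treewidth of either side fall below the target bound. Only two facts about treewidth are used as black boxes: that $\tw(G[X])$ is monotone when $X$ grows and equals the maximum of the treewidths of the connected components of $G[X]$; and Theorem~\ref{theorem:bodlaender}, which says a $1$-separator $b$ admits a side $V'$ with $\tw(G[V'\cup\{b\}])=\tw(G)$.

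First I would apply Theorem~\ref{theorem:tw_partition} to $G$ to obtain $(A_0,B_0)$ with $\tw(G[A_0])\ge\lfloor\alpha\tw(G)/\Delta^2\rfloor$ and the analogous bound for $B_0$. Replacing $A_0$ by a connected component of $G[A_0]$ of largest treewidth and moving the discarded vertices of $A_0$ into $B_0$ preserves the bound on the first side (treewidth is the max over components) and cannot decrease it on the second (enlarging a vertex set cannot decrease its induced treewidth). So from here on $G[A_0]$ is connected; set $(A,B):=(A_0,B_0)$.

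Second I would make $G[B]$ $2$-connected. If $G[B]$ is disconnected, keep a connected component of largest treewidth and move the remaining components into $A$; each of them has a neighbour in $A$ since $G$ is connected and no such component can equal all of $V$, so $A$ stays connected, and $\tw(G[B])$ is unchanged. With $G[B]$ now connected, while it has a $1$-separator $b$, let $D_1,\dots,D_k$ ($k\ge2$) be the components of $G[B]-b$, pick $D_i$ with $\tw(G[D_i\cup\{b\}])=\tw(G[B])$ as guaranteed by Theorem~\ref{theorem:bodlaender}, reset $B:=D_i\cup\{b\}$, and move $\bigcup_{j\ne i}D_j$ into $A$. This keeps $\tw(G[B])$, only increases $\tw(G[A])$, and strictly decreases $|B|$, so after finitely many rounds $G[B]$ is $2$-connected (it has at least three vertices in the regime of interest, where the target bound is at least $2$) with $\tw(G[B])$ still above the bound. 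Again $A$ stays connected: each moved $D_j$ is connected, and its only possible neighbour outside $D_j$ among the vertices of the current part $B$ is $b$, so if $D_j$ had no edge to the current $A$ then every edge of $G$ leaving $D_j$ would end at $b$, making $b$ a $1$-separator of $G$ (it separates the nonempty set $D_j$ from the nonempty current $A$), contradicting $2$-connectivity.

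Combining the two stages yields a partition $(A,B)$ with $G[A]$ connected, $G[B]$ $2$-connected, and $\min(\tw(G[A]),\tw(G[B]))\ge\lfloor\alpha\tw(G)/\Delta^2\rfloor$, which is the claim. I expect the main obstacle to be the connectivity invariant in the second stage --- that repeatedly peeling pieces off $B$ never detaches a component of $G[A]$ --- because this is the only point where $2$-connectedness of $G$ is genuinely needed, and it has to be kept up over an a priori unbounded number of iterations; an alternative is to allow $G[A]$ to become disconnected and then argue at the very end, by the same use of $2$-connectedness, that in fact no stray component of $G[A]$ disjoint from $A_0$ can survive.
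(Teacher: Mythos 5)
Your proposal is correct and follows essentially the same route as the paper: obtain $(A_0,B_0)$ from Theorem~\ref{theorem:tw_partition}, normalize so both sides are connected, then repeatedly peel off the ``wrong side'' of a $1$-separator of $G[B]$ using Theorem~\ref{theorem:bodlaender} until $G[B]$ is $2$-connected. The only stylistic difference is that you maintain connectivity of $G[A]$ as an invariant throughout the peeling (and use $|B|$ strictly decreasing for termination), whereas the paper proves connectivity of $G[A]$ once at the end by contradiction via $2$-connectivity of $G$ and argues termination via the number of $1$-separators of $B$ decreasing; both variants are sound.
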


\section{Appendix B: Missing Proofs of the Claims}

The claims' statements are rephrased with more general notations than that used in the proof of Lemma~\ref{lemma:technical_lemma} where they appear.

\setcounter{claim}{0}
\begin{claim}%\label{claim:if_2-connected_then_MUS}
If $G$ is a 2-connected graph, then the \emph{proper} subformulas of \emph{any} Tseitin formula $T(G,c)$ are \emph{all} satisfiable.
\begin{proof}
This is trivial when $T(G,c)$ is satisfiable, so we assume otherwise. Let $F$ be a proper subformula of $T(G,c)$ and let $C \in \clause(\chi_u)$ be absent from $F$. Let $a$ be the assignment to $\var(C)$ that falsifies $C$. Clearly $a$ falsifies $\chi_u$, and, since $\var(C) = \var(\chi_u)$ holds for any clause of $\chi_u$, $a$ satisfies $\overline{\chi}_u$ but also all clauses of $\chi_u$ distinct from $C$. Consider the Tseitin formula $T(G,c')$ where $\chi_u$ has been replaced by $\overline{\chi}_u$, so where $c' = c + 1_u \mod 2$. By Lemma~\ref{lemma:Tseitin_formula_satisfiability}, $T(G,c')$ is satisfiable. It holds that $(T(G,c)\setminus C)|a = T(G,c')|a = T(G-u,c'')$ where $c''$ is defined on $V \setminus \{u\}$ as $c'' = c' + \sum_{x_{uv} : a(x_{uv}) = 1} 1_v \mod 2$. In other words, $F|a$ is a subformula of $T(G-u,c'')$. Every variable appears in exactly two constraints, so we have $\sum_{v \in V \setminus \{u\}} c''(v) = \sum_{v \in V} c'(v) \mod 2$ and, since 2-connectivity guarantees that $G-u$ is connected, Lemma~\ref{lemma:Tseitin_formula_satisfiability} tells us that $T(G-u,c'')$, and thus $F$, is satisfiable.
\end{proof}
\end{claim}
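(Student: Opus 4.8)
The plan is to reduce everything to the satisfiability criterion of Lemma~\ref{lemma:Tseitin_formula_satisfiability}. The statement is vacuous when $T(G,c)$ is itself satisfiable, so I would assume $\sum_{v \in V} c(v) = 1 \bmod 2$, i.e.\ $T(G,c)$ is unsatisfiable. A proper subformula $F$ has $\clause(F)$ strictly contained in $\clause(T(G,c))$, so some clause $C$ is missing, and $C$ belongs to $\clause(\chi_u)$ for some vertex $u$. Since $\clause(F) \subseteq \clause(T(G,c)) \setminus \{C\}$, it suffices to prove that $T(G,c)$ with the single clause $C$ deleted is satisfiable; a satisfying assignment of that formula satisfies every clause of $F$.

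The key structural observation I would use is that all clauses of $\chi_u$ share the same variable set $X_{E(u)} = \var(C)$, and each such clause forbids exactly one of the assignments to $X_{E(u)}$ of the ``wrong'' parity at $u$. Hence the unique assignment $a$ to $X_{E(u)}$ that falsifies $C$ satisfies every other clause of $\chi_u$ and realises the parity $c(u)+1$ at $u$. I would then condition $T(G,c)\setminus\{C\}$ on $a$: all clauses of $\chi_u$ except $C$ vanish, $C$ is already deleted, and the constraints $\chi_v$ for $v \neq u$ become the constraints of the Tseitin formula obtained by deleting $u$ from $G$ and updating each neighbour's charge by the parity of the incident edges $a$ sets to $1$. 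Call this $T(G-u,c'')$; it coincides with $T(G,\,c+1_u \bmod 2)\,|\,a$.

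Finally I would check the parity of $c''$: summing the charge updates over the edges of $E(u)$ that $a$ sets to $1$, and using that $a$ realises parity $c(u)+1$ at $u$, one gets $\sum_{v \neq u} c''(v) \equiv \sum_{v \in V} c(v) + 1 \equiv 0 \pmod 2$. Because $G$ is $2$-connected, $G-u$ is connected, so Lemma~\ref{lemma:Tseitin_formula_satisfiability} gives that $T(G-u,c'')$ is satisfiable; extending such a satisfying assignment by $a$ yields a satisfying assignment of $T(G,c)\setminus\{C\}$, hence of $F$.

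I do not expect a genuine obstacle here — this is essentially folklore for Tseitin formulas. The only point needing care is the bookkeeping of the charge parity after conditioning on $a$: one must verify that deleting the single clause $C$ amounts to flipping the charge at $u$ and nothing else, so that the parity condition is restored. The $2$-connectivity hypothesis is used exactly once, namely to guarantee that $G-u$ stays connected so that the satisfiability criterion can be applied to $T(G-u,c'')$.
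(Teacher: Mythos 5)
Your proof is correct and matches the paper's argument essentially step for step: falsify the missing clause of $\chi_u$, observe that this conditioning removes $u$'s constraint while flipping the parity budget, and apply the satisfiability criterion to the connected graph $G-u$. The only cosmetic difference is that the paper routes the parity bookkeeping through an intermediate charge function $c' = c + 1_u \bmod 2$, whereas you compute $\sum_{v\neq u} c''(v)$ directly; the computations are the same.
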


\begin{claim}%\label{claim:incomplete_constraints_incomplete_clause_set}
Let $F$ be a proper subformula of $T(G)$. Assume there is a partition $(A,B)$ of $V$ such that $G_A$ and $G_B$ are connected and have treewidth at least~$2$. Take $C_v \in \clause(\chi_v)$ not in $F$ and denote by $C'_v$ its restriction to $X_{E(A,B)}$. If an assignment $a$ to $X_{E(A,B)}$ falsifies $C'_v$ then $\chi_v|a$ is incomplete in $F|a$.
\begin{proof}
Write $C_v = C'_v \lor C''_v$. Note that $C'_v$ may be empty. We will show that $C''_v$ is not in $F|a$. By way of contradiction, assume that this were false, i.e., that $C''_v$ appears in $F|a$. If $C''_v$ has at least two literals, then these are literals for variables in $X_{E(v)}$. Since $C''_v$ is in $F|a$, there is a clause $C = R \lor C''_v$ in $\clause(F) \cap \clause(\chi_v)$ such that $a$ falsifies $R$. But since all clauses of $\chi_v$ are in all variables of $\var(\chi_v)$, we have $\var(R) = \var(C'_v)$, and since $a$ already falsifies $C'_v$ we get that $R = C'_v$. So $C = C_v$, a contradiction.

Now assume $C''_v$ contains a single literal for the variable $x_e$, with $e = uv$. If $C''_v$ is in $F|a$, then there is a clause $C_u = C'_u \lor C''_v$ in $\clause(F) \cap \clause(\chi_u)$ such that $a$ falsifies $C'_u$. Now for every $w \in V$, let $E'(w) := E(w) \cap E(A,B)$ and observe that $E'(u) = E(u) \setminus \{uv\}$ and $E'(v) = E(v) \setminus \{uv\}$. But then $G'=(\{u,v\},\{uv\})$ is a connected component of $G$ after removal of $E(A,B)$. In other words, $G'$ must be $G_A$ or $G_B$. But $G'$ has treewidth $1$, which is too small, another contradiction.
\end{proof}%\stefan{have another look to understand why we need both cases}
\end{claim}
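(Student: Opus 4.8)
The plan is to produce, explicitly, one clause of $\chi_v|a$ that is absent from $F|a$; the natural candidate is $C_v|a$ itself. First I would split $C_v = C'_v \lor C''_v$, where $C'_v$ collects the literals of $C_v$ on variables of $X_{E(A,B)}$ and $C''_v$ collects the literals on the remaining variables of $X_{E(v)}$, which correspond to edges at $v$ that stay inside $v$'s side of the partition. Since $a$ falsifies $C'_v$ and does not mention $\var(C''_v)$, we get $C_v|a = C''_v$, and this is a genuine clause of $\chi_v|a$. I would also note at once that $C''_v$ is nonempty: the side of $(A,B)$ containing $v$ is connected and, having treewidth at least $2$, has at least three vertices, so $v$ has a neighbor on its own side. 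It thus suffices to show that $C''_v \notin \clause(F|a)$.

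Suppose for contradiction that $C''_v \in \clause(F|a)$. Then, since $\clause(F) \subseteq \clause(T(G))$, there is a clause $D$ of some constraint $\chi_w$ with $D \in \clause(F)$ and $D|a = C''_v$, where $a$ assigns (and falsifies the corresponding literals of) all variables of $\var(D)$ lying in $X_{E(A,B)}$. The case $w = v$ is easy: then $D = R \lor C''_v$ with $\var(R) = \var(C'_v)$, because every clause of $\chi_v$ ranges over all of $X_{E(v)}$; as $a$ falsifies both $R$ and $C'_v$ over this common variable set, $R = C'_v$, so $D = C_v$, contradicting $C_v \notin \clause(F)$.

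The case $w \neq v$ is where the treewidth hypothesis is used. Every variable of $C''_v$ then lies in $X_{E(v)} \cap X_{E(w)}$, so each underlying edge joins $v$ to $w$; and since $C''_v$ carries exactly the part of $X_{E(v)}$ not killed by $a$, these are all the internal edges at $v$. Similarly, $D|a = C''_v$ forces every edge at $w$ other than those between $v$ and $w$ to lie in $E(A,B)$. Hence, after deleting $E(A,B)$ from $G$, the vertices $v$ and $w$ together with the (possibly parallel) edges between them form a connected component of $G - E(A,B) = G_A \sqcup G_B$; it must therefore be $G_A$ or $G_B$, contradicting that both have treewidth at least $2$, since a two-vertex multigraph has treewidth $1$. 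Both cases being impossible, $C''_v \notin \clause(F|a)$ and hence $\chi_v|a$ is incomplete in $F|a$.

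I expect the delicate point to be this last case: ruling out that the clause $C_v$, dropped from $F$, is effectively reinstated in $F|a$ by a clause originating from a \emph{different} constraint $\chi_w$ along a surviving edge between $v$ and $w$. This is exactly where one needs that $G_A$ and $G_B$ are not too small — the treewidth-at-least-$2$ assumption forbids a bare two-vertex component after the cut — and the only real care required is the bookkeeping of which edges at $v$ and $w$ are internal, which lie in the cut, and which variables $a$ actually assigns.
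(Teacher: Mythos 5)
Your proof is correct and follows the paper's overall strategy: show that the clause $C_v|a = C''_v$ of $\chi_v|a$ cannot appear in $F|a$ by chasing the possible origins of a pre-image clause $D \in \clause(F)$ with $D|a = C''_v$. The only substantive difference is the case split. The paper distinguishes $|C''_v| \geq 2$ from $|C''_v| = 1$ and accordingly places the pre-image in $\clause(\chi_v)$ or in $\clause(\chi_u)$ for the unique neighbor $u$ across the one surviving edge; you split on whether the constraint $\chi_w$ supplying $D$ has $w = v$ or $w \neq v$. Your split is arguably the cleaner one: the paper's first case tacitly assumes that two literals of $C''_v$ force the pre-image to lie in $\clause(\chi_v)$, which can fail if there are parallel edges between $v$ and some $w$ on $v$'s side (a possibility the paper's preliminaries explicitly permit), whereas your $w \neq v$ branch handles parallel edges uniformly and lands on the same two-vertex-component contradiction. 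You also make explicit, where the paper leaves it implicit, why $C''_v$ is nonempty: treewidth at least $2$ forces $v$'s side to have at least three vertices, and connectivity then gives $v$ an internal neighbor. Both proofs use the treewidth hypothesis at the same place and for the same reason, so this is a minor re-organization rather than a different argument, but it does close a small gap in the paper's exposition for multigraphs.
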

\setcounter{claim}{3}

\section{Appendix C: Proof of Theorem~\ref{theorem:tw_partition}}
In this appendix, we prove Theorem~\ref{theorem:tw_partition} on graph bi-partitions with large treewidth on both sides.

\setcounter{theorem}{3}
\begin{theorem}%\label{theorem:tw_partition}
There exists a constant $0 < \alpha \leq 1$ such that, for all graphs $G = (V,E)$ with maximum degree at most $\Delta$, there is a partition $(A,B)$ of $V$ such that $\tw(G[A]) \geq \lfloor\frac{\alpha\tw(G)}{\Delta^2}\rfloor$ and $\tw(G[B]) \geq \lfloor\frac{\alpha\tw(G)}{\Delta^2}\rfloor$. 
\end{theorem}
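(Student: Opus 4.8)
The plan is to adapt the $h=2$ case of Theorem~\ref{theorem:Chekuri_Chuzhoy_thm} of Chekuri and Chuzhoy and to remove the $\log^c(\tw(G))$ divisor by exploiting that our graphs have bounded degree and that we only need \emph{existence} of the partition. The connecting notion is that of a well-linked set: a graph of treewidth $k$ contains a set $W$ of $\Omega(k)$ vertices such that any two disjoint equal-sized subsets of $W$ are joined by that many vertex-disjoint paths, and conversely any graph containing such a set of size $s$ has treewidth $\Omega(s)$; passing between this vertex version and the edge (flow/cut) version of well-linkedness costs a factor $\Delta$ in a graph of maximum degree $\Delta$. First I would record an easy monotonicity fact: if $W$ is well-linked in $G[C]$ and $C\subseteq B$, then $W$ is well-linked in $G[B]$ with the same parameter, since every edge cut of $G[B]$ restricts to an edge cut of $G[C]$ of no larger size while leaving $|W\cap X|$ unchanged (here $W\subseteq C$). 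Combined with the fact that treewidth does not increase when passing to an induced subgraph, this reduces the theorem to the following: produce two \emph{vertex-disjoint} subsets $C_1,C_2\subseteq V$ with $\tw(G[C_1]),\tw(G[C_2])\ge\lfloor\alpha\tw(G)/\Delta^2\rfloor$; one then simply takes $A=C_1$ and $B=V\setminus C_1\supseteq C_2$.

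Next I would carry out the extraction of $C_1$ and $C_2$. Start from a well-linked set $W$ of size $k'=\Omega(\tw(G))$ and split it arbitrarily into halves $W_1,W_2$ of size $k'/2$. The core step, which is exactly where Chekuri and Chuzhoy do their work, is a max-flow/min-cut routing argument producing vertex-disjoint sets $C_1\supseteq W_1$ and $C_2\supseteq W_2$ in which each $W_i$ remains well-linked, with its linkedness parameter degrading only by a constant factor times $1/\Delta$. Intuitively, one grows the two clusters simultaneously, at each stage using a min-cut computation to see that any small vertex cut leaves most of $W$ on one side, and assigning the contested region to the appropriate cluster while certifying that $W_i$ can still route a large fraction of its demand inside $C_i$ (tolerating congestion up to $\Delta$, which is what costs the $1/\Delta$ factor). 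A cluster containing an edge-well-linked set of size $\Omega(k')$ with parameter $\Omega(1/\Delta)$ has treewidth $\Omega(k'/\Delta^2)=\Omega(\tw(G)/\Delta^2)$ --- one factor of $\Delta$ for the congestion tolerance, one more for converting a surviving edge-well-linked set back into a treewidth lower bound --- which is the claimed bound; the floor in the statement lets us dispose of small $\tw(G)$, for which the theorem is vacuous.

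Finally I would explain why the logarithmic divisor present in Theorem~\ref{theorem:Chekuri_Chuzhoy_thm} disappears here. Inspecting that proof, $\log^c(\tw(G))$ enters for exactly two reasons. First, Chekuri and Chuzhoy start by reducing an arbitrary graph to one of bounded degree via a vertex-splitting preprocessing that costs a logarithmic factor in treewidth; since our input already has maximum degree at most $\Delta$ this step is skipped, and it reappears instead only as the explicit polynomial dependence on $\Delta$. Second, to make the partition computable in polynomial time they replace exact min-cut/max-flow computations with an approximation algorithm, again at a logarithmic cost; as Theorem~\ref{theorem:tw_partition} only asserts the existence of a partition, exact flows may be used throughout. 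The $\log h$-type loss in the general statement is likewise absent, since here $h=2$ involves no recursion on the number of parts.

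\textbf{Main obstacle.} The delicate part is the routing lemma that produces the two vertex-disjoint clusters while certifying that each keeps a well-linked set of size linear in $k'$: one has to control simultaneously that the clusters stay disjoint, that the cut separating them never grows enough to destroy well-linkedness on either side, and the precise dependence on $\Delta$, which is what pins down the $\Delta^2$ in the statement. This amounts to re-deriving the quantitative $h=2$ version of Chekuri and Chuzhoy's construction with all constants made explicit, and that re-derivation is the bulk of the argument deferred to Appendix~C.
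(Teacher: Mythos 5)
Your proposal correctly identifies the right starting object (a well-linked set witnessing treewidth), correctly explains the two sources of the $\log^c(\tw(G))$ divisor in Chekuri and Chuzhoy's Theorem~\ref{theorem:Chekuri_Chuzhoy_thm} (degree-reduction preprocessing and approximate min-cut computation), and correctly observes that bounded degree plus a purely existential statement lets one drop both. The monotonicity reduction (find two vertex-disjoint sets $C_1,C_2$ of large treewidth, take $A=C_1$, $B=V\setminus C_1$) is also sound. The problem is that everything you call the ``core step'' is not argued at all: you write that one ``grows the two clusters simultaneously, at each stage using a min-cut computation,'' and then declare that re-deriving this is ``the bulk of the argument.'' That is an acknowledgement of a gap, not a proof. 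Nothing in the proposal says how the clusters are initialized, what invariant is maintained, why the cuts separating them stay small relative to the surviving well-linked sets, or why the process terminates with the claimed quantitative bound. Since by your own account this is where all the difficulty lives, the proposal does not establish the theorem.

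It is also worth noting that the shape of the argument the paper actually uses is quite different from the two-cluster-growing scheme you sketch. The paper defines \emph{acceptable partitions} of $V(G)$ (where each block has out-degree at most $\Delta'$ and contains at most half of a fixed well-linked set $S^*$) and studies the contracted multigraph $G_\calC$. It then (i) proves, via a probabilistic argument using the Paley--Zygmund inequality, that any such contracted multigraph admits a tripartition of its vertices with $\Omega(|E(G_\calC)|)$ internal edges on each side; (ii) runs an extremal argument, taking the acceptable partition $\calC$ minimizing $|E(G_\calC)|$; and (iii) shows via a recursive Split procedure together with a careful charging scheme on edges that if one of the three parts had treewidth below $\lfloor \alpha\tw(G)/\Delta^2\rfloor$, it could be refined to produce an acceptable partition with strictly fewer contracted edges, contradicting minimality. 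The sparsest-cut-type reasoning and the factor $\Delta^2$ enter through the Split routine and the auxiliary Lemma~\ref{lemma:lower_bound_treewidth}, not through a direct simultaneous-routing construction. So even as a high-level plan, your sketch diverges from the route the paper takes, and in particular it contains no analogue of the extremal/minimality argument, the Paley--Zygmund tripartition, or the charging scheme that make the paper's proof go through.
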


\noindent The closest to Theorem~\ref{theorem:tw_partition} we could find in the literature is Theorem~\ref{theorem:Chekuri_Chuzhoy_thm} by Chekuri and Chuzhoy~(\citeyear{ChekuriC13}). Indeed, setting $h = 2$, $V_1 = A$ and $V_2 = B$ in Theorem~\ref{theorem:Chekuri_Chuzhoy_thm} gives a result similar to our except that it holds for graphs of unbounded-degree and that the treewidths of $G[A]$ and $G[B]$ are not proportional to $k = \tw(G)$ but to $k/\text{poly}\log(k)$. To prove Theorem~\ref{theorem:tw_partition} we stick to Chekuri and Chuzhoy's techniques for proving Theorem~\ref{theorem:Chekuri_Chuzhoy_thm}. Looking at their proof, one can see that the $\text{poly}\log(k)$ factor comes from a reduction from unbounded-degree to bounded-degree graphs, and from their efforts to obtain an \emph{efficient} algorithm to compute the partition of $V(G)$. Since we already look at graphs of bounded-degree and since we do not care for efficiency, we can eliminate the $\text{poly}\log(k)$ factor. 

The proof of Theorem~\ref{theorem:tw_partition} is split in subsections due to its intricacy. The first two subsections introduce preliminary notions, then in the next subsection we give two lemmas that help proving Theorem~\ref{theorem:tw_partition}, and the remaining subsections contain the proofs of these two lemmas. 

We define the following numerical constants: $\gamma = 1/2000$, $\beta = 6/\gamma = 12000$ and $\alpha = 1/(200\beta)$ the constant from Theorem~\ref{theorem:tw_partition}. When the set of vertices, the set of edges, or the maximum degree of a graph $G$ is not specified, we denote it by $V(G)$, $E(G)$, and $\Delta(G)$, respectively. We assume $\frac{\alpha \tw(G)}{\Delta^2} \geq 1$, otherwise the theorem is trivial.

\subsection{Well-linkedness}

In a graph $G$, a set $S \subseteq V(G)$ is called \emph{well-linked} when, for every pair $X, Y \subseteq S$ such that $|X| = |Y|$, there exists $|X|$ vertex-disjoint paths from $X$ to $Y$. Note that $X$ and $Y$ are not necessarily distinct and that paths of size zero are allowed. The well-linked number of $G$, denoted $wl(G)$, is the size of the largest well-linked set in $G$. It is known that $\tw(G) \leq wl(G) + 1 \leq 3\times\tw(G)$~\cite{HarveyW17}. 

\setcounter{lemma}{7}
\begin{lemma}\label{lemma:connection_tw-wl}
Let $G$ be a simple graph. There is a set $S^* \subseteq V(G)$ of size $\tw(G) \leq |S^*|+1 \leq 3\times\tw(G)$ such that, for every partition $(A,B)$ of $V(G)$, it holds that 
$$
|E(A,B)| \geq \min(|A \cap S^*|,|B \cap S^*|).
$$
\end{lemma}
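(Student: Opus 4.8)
The natural candidate for $S^*$ is a maximum well-linked set of $G$. By the cited inequality $\tw(G) \leq wl(G) + 1 \leq 3\tw(G)$ from~\cite{HarveyW17}, taking $S^*$ to be such a set of size $wl(G)$ immediately gives the size bound $\tw(G) \leq |S^*| + 1 \leq 3\tw(G)$. So the content of the lemma is to show that well-linkedness of $S^*$ forces every cut separating $S^*$ into two parts to be at least as large as the smaller part.

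\textbf{Key step.} Fix a partition $(A,B)$ of $V(G)$ and set $X := A \cap S^*$ and $Y := B \cap S^*$; without loss of generality assume $|X| \leq |Y|$, so $\min(|A \cap S^*|, |B \cap S^*|) = |X|$. Choose any $Y' \subseteq Y$ with $|Y'| = |X|$. Since $X, Y' \subseteq S^*$ and $|X| = |Y'|$, well-linkedness provides $|X|$ pairwise vertex-disjoint paths $P_1, \dots, P_{|X|}$ from $X$ to $Y'$. Because $A$ and $B$ partition $V(G)$, we have $X \subseteq A$ and $Y' \subseteq B$ with $X \cap Y' = \emptyset$, so none of these paths is trivial: each $P_i$ starts at a vertex of $A$ and ends at a vertex of $B$. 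Walking along $P_i$, there must be a first edge $e_i$ whose endpoints lie one in $A$ and one in $B$, i.e.\ $e_i \in E(A,B)$. As the $P_i$ are vertex-disjoint, they are in particular edge-disjoint, so the edges $e_1, \dots, e_{|X|}$ are pairwise distinct. Hence $|E(A,B)| \geq |X| = \min(|A \cap S^*|, |B \cap S^*|)$, which is exactly the claimed inequality.

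\textbf{Main obstacle.} There is no serious obstacle here; the argument is short. The only points requiring care are (i) that the choice $|Y'| = |X|$ together with the fact that $(A,B)$ is a partition rules out zero-length paths, so that each path genuinely crosses the cut, and (ii) that vertex-disjointness of the paths yields distinct crossing edges. Both are routine, and the size bound on $S^*$ is handled entirely by quoting the known relation between treewidth and well-linked number.
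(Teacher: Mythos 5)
Your proof is correct and follows essentially the same approach as the paper: pick $S^*$ to be a maximum well-linked set (so the size bounds follow from the cited inequality $\tw(G) \le wl(G)+1 \le 3\tw(G)$), then use well-linkedness to extract vertex-disjoint paths between the two sides of the cut, each of which must cross $E(A,B)$. The only cosmetic difference is that you shrink the larger side $Y$ to a subset $Y'$ of size $|A\cap S^*|$, whereas the paper pads $A\cap S^*$ with extra vertices from $B\cap S^*$ to reach size $|B\cap S^*|$; both are equivalent ways of applying the definition of well-linkedness with $|X|=|Y|$.
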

\begin{proof}
Let $S^*$ be the largest well-linked set in $G$ and consider a partition $(A,B)$ of $V(G)$. The bounds on $|S^*|$ stem from the relation between $wl(G)$ and $\tw(G)$. Let $S^*_A = S^* \cap A$ and $S^*_B = S^* \cap B$. Assume, without loss of generality, that $|S^*_A| \leq |S^*_B|$. Take an arbitrary subset $Z \subseteq S^*_B$ of size $|S^*_B| - |S^*_A|$ and let $X = S^*_A \cup Z$ and $Y = S^*_B$. There is $X, Y \subseteq S^*$ and $|X| = |Y|$. 

By definition of $S^*$ there are at least $|X| = |S^*_A| + |Z|$ vertex-disjoint paths from $X$ to $Y$, $|S^*_A|$ of which start from $S^* \cap A$ and end in $S^* \cap B$. Since $A \cap B = \emptyset$, these paths all have size at least 1, and since they are vertex-disjoint, there are at least $|S^*_A| = |S^* \cap A| = \min (|S^* \cap A|, |S^* \cap B|)$ edges going from $A$ to $B$. Thus $|E(A,B)| \geq |S^*_A| = |S^* \cap A| = \min (|S^* \cap A|, |S^* \cap B|)$.
\end{proof}

\noindent In the remainder of that appendix, $S^*$ is the subset of $V$ described by Lemma~\ref{lemma:connection_tw-wl}, $k := |S^*|$ and $r := \frac{2\alpha k}{\Delta^2}$. Observe that $r \geq \frac{\alpha \tw(G)}{\Delta^2} \geq 1$.

\begin{lemma}[\citealt{ChekuriC13}]\label{lemma:lower_bound_treewidth}
Let $G$ be a simple graph whose maximum degree is $\Delta$, assume that there is $S \subseteq V(G)$ and $\gamma \in [0,1]$ such that, for every partition $(A,B)$ of $V(G)$, it holds that $|E(A,B)| \geq \gamma \min(|S \cap A|,|S\cap B|)$, then $\tw(G) \geq \frac{\gamma |S|}{3\Delta} - 1$. 
\end{lemma}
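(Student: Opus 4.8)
The plan is to prove the statement in contrapositive form: I will show that if $\tw(G)$ is \emph{small}, then $G$ admits a partition of $V(G)$ with \emph{few} crossing edges that nonetheless splits $S$ in a roughly balanced way, contradicting the hypothesis. This is, morally, the implication ``large well-linked set $\Rightarrow$ large treewidth'' run backwards, but phrased through balanced separators so that nothing beyond standard facts about tree decompositions is needed.

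Write $k=\tw(G)$; we may assume $\Delta\ge 1$ and, since the conclusion is vacuous otherwise, that the claimed bound is not already trivially true. First I would invoke the standard balanced-separator property of bounded-treewidth graphs (see e.g.~\cite{Diestel12,HarveyW17}): there is a set $C\subseteq V(G)$ with $|C|\le k+1$ such that every connected component of $G-C$ contains at most $\tfrac12|S|$ vertices of $S$. Next, I would assemble the components of $G-C$ into two blocks $A'$ and $B'$ by a greedy first-fit rule (process the components in decreasing order of their number of $S$-vertices, always adding the next component to the currently lighter block). Because no component carries more than half of $S$, the final imbalance is at most the weight of one component, so $\min(|S\cap A'|,|S\cap B'|)\ge \rho|S|-O(|C|)$ for an absolute constant $\rho>0$; the only annoying case is when $C$ itself absorbs a constant fraction of $S$, but then $|C|\le k+1$ already forces $k=\Omega(|S|)$ and we are done. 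Setting $(A,B)=(A'\cup C,\,B')$ gives a partition of $V(G)$, and the key point is that every edge of $E(A,B)$ has an endpoint in $C$ — there are no edges between $A'$ and $B'$ in $G-C$ — so $|E(A,B)|\le\Delta|C|\le\Delta(k+1)$, since each vertex of $C$ has degree at most $\Delta$.

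Finally I would plug this partition into the hypothesis: $\Delta(k+1)\ge|E(A,B)|\ge\gamma\min(|S\cap A|,|S\cap B|)\ge\gamma\bigl(\rho|S|-O(k)\bigr)$, and rearrange for $k$. Absorbing the lower-order $O(k)$ term (via a case split between ``$|S|$ large relative to $k$'', where the term is negligible, and ``$|S|$ small'', where the target bound holds outright) and optimizing $\rho$ yields $k+1\ge\frac{\gamma|S|}{3\Delta}$, i.e.\ $\tw(G)\ge\frac{\gamma|S|}{3\Delta}-1$. The genuine obstacle here is not conceptual but quantitative: one must choose the two-block partition so that it provably retains a definite fraction of $S$ on both sides even when $C$ eats many terminals, and then juggle that fraction against the $O(k)$ absorption so that the final denominator lands on exactly $3\Delta$ and the degenerate regimes are all covered; this delicate constant bookkeeping is precisely why it is convenient to import the statement as a black box from Chekuri and Chuzhoy~(\citeyear{ChekuriC13}).
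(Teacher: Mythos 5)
You end by saying you would import the lemma as a black box from Chekuri and Chuzhoy, and that is in fact exactly what the paper does: the statement is cited and never proved in the text, so there is no paper argument to compare your sketch against. Judged on its own terms, your sketch has the right skeleton (an $S$-balanced separator $C$ of size at most $k+1$, bin-pack the components of $G-C$, feed the resulting cut into the expansion hypothesis), but the constant bookkeeping does not close as written, and the ``absorb the $O(k)$ term'' step cannot rescue it. Greedy first-fit with $C$ glued onto one side gives $\min(|S\cap A|,|S\cap B|)\ge \tfrac{1}{4}|S|-\tfrac{1}{2}|S\cap C|$, not $\tfrac{1}{3}|S|$. Plugging this into $\Delta(k+1)\ge|E(A,B)|\ge\gamma\bigl(\tfrac14|S|-\tfrac12(k+1)\bigr)$ and using $\gamma\le 1\le\Delta$ yields only $k+1\ge\frac{\gamma|S|}{6\Delta}$ (and even with an optimal threshold case split on $|S\cap C|$, the greedy-plus-glue approach caps out at a $4\Delta$ denominator). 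The factor $\tfrac14$ is intrinsic to your partition rule and no amount of ``juggling against the $O(k)$ term'' promotes it to $\tfrac13$.

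The good news is that a small change to the partitioning step does close the gap. The cut bound $|E(A,B)|\le\Delta|C|\le\Delta(k+1)$ only requires each connected component of $G-C$ to lie wholly in $A$ or wholly in $B$; $C$ itself may be split arbitrarily between $A$ and $B$, since every crossing edge still has an endpoint in $C$. Now view the component $S$-weights $s_1,\dots,s_m$ (each at most $\tfrac12|S|$ by the balanced-separator property) together with the freely divisible weight $|S\cap C|$ as items summing to $|S|$. If some component has $s_i>\tfrac13|S|$, place that component alone on one side: that side has weight $s_i>\tfrac13|S|$ and the other has $|S|-s_i\ge\tfrac12|S|>\tfrac13|S|$ (if two components each exceed $\tfrac13|S|$, put one on each side; three are impossible). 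Otherwise every effective item is at most $\tfrac13|S|$, and adding items to the lighter side keeps the imbalance at most $\tfrac13|S|$, so again both sides have weight at least $\tfrac13|S|$. Either way $\min(|S\cap A|,|S\cap B|)\ge\tfrac13|S|$ with no lower-order correction at all, and the hypothesis gives $\Delta(k+1)\ge\gamma\,\tfrac{|S|}{3}$, i.e.\ $\tw(G)\ge\frac{\gamma|S|}{3\Delta}-1$ exactly.
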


\subsection{Acceptable partitions}

In a graph $G$, given a subset $S \subseteq V(G)$, we denote by $out_G(S)$ the set of edges of $G$ that have one endpoint in $S$ and the other in $V(G)\setminus S$. We drop the $G$ subscript when it is clear from context which graph we are working on. Let $\calC = (V_1, \dots, V_s)$ be a partition of $V(G)$ ($s$ is arbitrary). We denote by $G_\calC$ the multigraph with $s$ vertices $\{\nu_1,\dots, \nu_s\}$ such that for each $i \neq j$ (no self-loop) there are $|E(V_i,V_j)|$ edges between $\nu_i$ and $\nu_j$. One can construct $G_\calC$ from $G$ by contracting every subset of vertices $V_i$ into a single vertex, which is $\nu_i$. We then call $G_\calC$ a \emph{contracted multigraph}. We abuse the notations and consider that the vertices of $G_\calC$ are directly $V_1,\dots,V_s$ (so $\calC$ can be seen as $V(G_\calC)$). It will be important that all multigraphs $G_\calC$ considered for different partitions $\calC$ have maximum degree bounded by $\Delta' := \beta r \Delta^2$ where $\beta = 12000$, and many edges. We will show that these properties are guaranteed when the partition $\calC$ is \emph{acceptable}.

\begin{definition}
A partition $\calC = (V_1, \dots, V_s)$ of $V(G)$ is called \emph{acceptable} when $|out(V_i)| \leq \Delta'$ and $|V_i \cap S^*| \leq \frac{k}{2}$ hold for all $i \in [s]$.
\end{definition}

Acceptable partitions of $V(G)$ exist, the simplest one is $\calC = \{\{v\} \mid v \in V(G) \}$. Indeed there is $|\{v\} \cap S^*| \leq 1 \leq \frac{k}{2}$ since $k \geq 2$ and there is $|out(\{v\})| = \deg(v) \leq \Delta \leq \beta r\Delta^2 = \Delta'$ since $\beta, r \geq 1$.

\begin{claim}
Let $\calC$ be an acceptable partition of $V(G)$, then $\Delta(G_\calC) \leq \Delta'$.
\end{claim}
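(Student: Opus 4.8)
The plan is to prove that the contracted multigraph $G_\calC$ of an acceptable partition has maximum degree at most $\Delta'$ by directly analyzing the degree of a vertex $\nu_i$ in $G_\calC$. By construction, the number of edges incident to $\nu_i$ in $G_\calC$ is exactly the number of edges of $G$ that leave the block $V_i$, counted with multiplicity — that is, $\deg_{G_\calC}(\nu_i) = |out_G(V_i)|$. Indeed, every edge of $G$ with one endpoint in $V_i$ and the other in some $V_j$ with $j \neq i$ contributes one edge between $\nu_i$ and $\nu_j$, and edges internal to $V_i$ become self-loops which, by the definition of $G_\calC$ (no self-loops), are discarded and contribute nothing to the degree. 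Hence $\deg_{G_\calC}(\nu_i)$ equals $|out_G(V_i)|$ precisely.

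From here the claim is immediate: since $\calC$ is acceptable, the definition of acceptability gives $|out_G(V_i)| \leq \Delta'$ for every $i \in [s]$. Combining the two facts, $\deg_{G_\calC}(\nu_i) = |out_G(V_i)| \leq \Delta'$ for every vertex $\nu_i$ of $G_\calC$, and therefore $\Delta(G_\calC) = \max_{i \in [s]} \deg_{G_\calC}(\nu_i) \leq \Delta'$, as desired.

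This argument is essentially a bookkeeping observation, so there is no real obstacle; the only point requiring a moment of care is to be explicit that parallel edges in $G_\calC$ are counted with multiplicity when computing degrees (so that the identity $\deg_{G_\calC}(\nu_i) = |out_G(V_i)|$ holds on the nose rather than merely as an inequality) and that internal edges of $V_i$ do not contribute. Both of these are built into the stated construction of $G_\calC$, so the proof is just a matter of unwinding the definitions of acceptable partition, of $out_G(\cdot)$, and of the contracted multigraph $G_\calC$, and then quoting the first bound in the definition of acceptability.
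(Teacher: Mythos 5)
Your proof is correct and matches the paper's argument exactly: both identify $\deg_{G_\calC}(\nu_i) = |out_G(V_i)|$ and then invoke the bound $|out_G(V_i)| \leq \Delta'$ from the definition of an acceptable partition. You simply spell out the bookkeeping (self-loops discarded, parallel edges counted with multiplicity) that the paper leaves implicit.
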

\begin{proof}
Let $\calC = \{V_1, \dots, V_s\}$ and let $\nu_i$ be the vertex of $G_\calC$ corresponding to $V_i$, then $\deg(\nu_i) = |out(V_i)| \leq \Delta'$. 
\end{proof}

\begin{claim}
Let $\calC$ be an acceptable partition of $V(G)$, then $|E(G_\calC)| \geq \frac{k}{4}$.
\end{claim}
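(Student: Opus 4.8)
The plan is to reduce the claim to Lemma~\ref{lemma:connection_tw-wl} by exhibiting a coarse two-way grouping of the parts of $\calC$ that splits $S^*$ roughly in half. Write $n_i := |V_i \cap S^*|$ for $i \in [s]$, so that $\sum_{i=1}^s n_i = |S^*| = k$ and, by acceptability, $n_i \leq k/2$ for every $i$. Note that the degree condition $|out(V_i)| \leq \Delta'$ is not needed here; only the bound $|V_i \cap S^*| \leq k/2$ on the $S^*$-mass of each part is used.

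The core step is an elementary greedy balancing. Process the parts in any order, add their indices to a set $I_A$, keep track of the running weight $w := \sum_{i \in I_A} n_i$, and stop the first time $w \geq k/4$. Since adding all indices yields $w = k \geq k/4$, the process does stop, and since $k/4 > 0$ the set $I_A$ is nonempty. The crucial observation is that at the stopping moment the weight just before adding the last index was $< k/4$, and that last index contributes at most $k/2$, so $w < k/4 + k/2 = 3k/4 < k$. Hence $I_A \neq [s]$, so $I_B := [s] \setminus I_A$ is nonempty, and $\sum_{i \in I_B} n_i = k - w > k/4$. Thus both index groups carry $S^*$-weight at least $k/4$. (This uses $k \geq 2$, which holds under the standing assumption $\alpha\tw(G)/\Delta^2 \geq 1$ invoked earlier in this appendix, so that $3k/4 < k$ is legitimate.)

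Now set $A := \bigcup_{i \in I_A} V_i$ and $B := \bigcup_{i \in I_B} V_i$; this is a partition of $V(G)$ with $|A \cap S^*| = w \geq k/4$ and $|B \cap S^*| = k - w > k/4$, so $\min(|A \cap S^*|, |B \cap S^*|) \geq k/4$. Lemma~\ref{lemma:connection_tw-wl} then gives $|E(A,B)| \geq \min(|A \cap S^*|, |B \cap S^*|) \geq k/4$. Finally, because $A$ and $B$ are unions of whole parts of $\calC$, every edge of $G$ counted in $E(A,B)$ joins vertices in two \emph{distinct} parts of $\calC$, hence corresponds to a distinct edge of the contracted multigraph $G_\calC$; therefore $|E(G_\calC)| \geq |E(A,B)| \geq k/4$, which is the claim.

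I do not expect a genuine obstacle: the only point requiring a moment's care is the greedy balancing, where the bound $n_i \leq k/2$ supplied by acceptability is precisely what prevents a single part from absorbing more than $3k/4$ of the $S^*$-mass and thereby starving the other side; the remaining steps (that $(A,B)$ is a legitimate partition, and that $E(A,B)$ embeds into $E(G_\calC)$) are routine bookkeeping.
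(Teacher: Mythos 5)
Your proof is correct and follows essentially the same approach as the paper's: both form a coarse bipartition $(A,B)$ of $V(G)$ from whole parts of $\calC$ with $\min(|A\cap S^*|,|B\cap S^*|)\ge k/4$ (you accumulate until the running $S^*$-weight first reaches $k/4$ and bound the overshoot by $k/2$; the paper accumulates to the largest prefix with weight $\le 3k/4$ and derives the same bound by contradiction), then apply Lemma~\ref{lemma:connection_tw-wl} and observe that the crossing edges survive contraction. The only cosmetic difference is that the paper sorts the parts by $S^*$-mass first, which is not actually needed, as your version makes clear.
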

\begin{proof}
Let $\calC = \{V_1, \dots, V_s\}$. Assume $|V_1 \cap S^*| \geq |V_2 \cap S^*| \geq \dots \geq |V_s \cap S^*|$ holds. There is $\sum_{i = 1}^s |V_i \cap S^*| = |S^*| = k$. Since $|V_1 \cap S^*| \leq \frac{k}{2}$ we can find $l \in [s-1]$ the largest integer such that $\sum_{i = 1}^l |V_i \cap S^*| \leq \frac{3k}{4}$. We clearly have $\sum_{i = l+1}^s |V_i \cap S^*| \geq \frac{k}{4}$. We also have $\sum_{i = 1}^l |V_i \cap S^*| \geq \frac{k}{4}$ for otherwise there would be $|V_{l+1} \cap S^*| = \sum_{i = 1}^{l+1} |V_i \cap S^*| - \sum_{i = 1}^l |V_i \cap S^*| > \frac{3k}{4} - \frac{k}{4} = \frac{k}{2}$, which can not be.

Let $A = V_1 \cup \dots \cup V_l$ and $B = V_{l+1} \cup \dots \cup V_{s}$. By construction we have $|A \cap S^*| \geq \frac{k}{4}$ and $|B\cap S^*| \geq \frac{k}{4}$. And by definition of $S^*$ there is $|E(A,B)| \geq \min( |A \cap S^*|, |B \cap S^*|) \geq \frac{k}{4}$. When contracting the $V_i$ to obtain $G_\calC$, the edges $E(A,B)$ survive, so $|E(G_\calC)| \geq |E(A,B)| \geq \frac{k}{4}$.
\end{proof}

\subsection{Proof of Theorem~\ref{theorem:tw_partition}}

The proof of Theorem~\ref{theorem:tw_partition} boils down to the following two lemmas, which we will show are correct in later sections.

\begin{lemma}\label{lemma:nice_vertex_partition}
Let $\calC$ be an acceptable partition of $V(G)$, then there is a partition $(\calU_1,\calU_2,\calU_3)$ of $\calC$ such that, for all $i \in \{1,2,3\}$
$$
|E(G_\calC[\calU_i])| \geq \frac{|E(G_\calC)|}{180}.
$$ 
\end{lemma}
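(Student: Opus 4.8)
The statement asks for a tripartition of the contracted multigraph $G_\calC$ where each part induces a subgraph with at least a constant fraction of the edges. The natural approach is to use the fact that $G_\calC$ has bounded maximum degree $\Delta(G_\calC) \leq \Delta'$ (Claim above) together with many edges $|E(G_\calC)| \geq k/4$ (Claim above), and to extract a well-linked set inside $G_\calC$. More precisely, since $G_\calC$ is a multigraph of bounded degree with many edges, its treewidth, or more robustly its well-linked number, is large, say $\Omega(|E(G_\calC)|/\Delta')$; this should follow from Lemma~\ref{lemma:lower_bound_treewidth} applied to $G_\calC$ (with $S = V(G_\calC)$ and a suitable $\gamma$, using that every cut has many crossing edges because $G_\calC$ is itself well-connected by the properties of $S^*$). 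Given a large well-linked set $W \subseteq V(G_\calC)$, I would then split $W$ into three roughly equal pieces $W_1, W_2, W_3$ and route vertex-disjoint paths to show that each $G_\calC[\calU_i]$ retains many edges.

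\textbf{Key steps in order.} First, I would argue that because $\calC$ is acceptable, the cut structure of $G$ is inherited by $G_\calC$: any partition $(\calA, \calB)$ of $\calC$ corresponds to a partition $(A,B)$ of $V(G)$ with $|E_{G_\calC}(\calA,\calB)| = |E_G(A,B)| \geq \min(|A\cap S^*|, |B\cap S^*|)$, and since each block has $|V_i \cap S^*| \leq k/2$, we can always balance the $S^*$-mass on the two sides of a cut in $G_\calC$. This gives a lower bound on every cut of $G_\calC$ proportional to $|E(G_\calC)|/\Delta'$ (the degree bound converts $S^*$-balance into edge-balance). Second, I would invoke the relation $\tw(G_\calC) \geq \Omega(|E(G_\calC)|/\Delta')$ or directly work with a well-linked set $W$ of that size. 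Third, partition $W = W_1 \sqcup W_2 \sqcup W_3$ with $|W_i| = \lfloor |W|/3 \rfloor$. For each $i$, I want to assign the vertices of $\calC \setminus W$ to the three groups so that $\calU_i \supseteq W_i$ and $G_\calC[\calU_i]$ has many edges. The cleanest route: by well-linkedness there are $|W_i|$ vertex-disjoint paths from $W_i$ to $W_i$ (trivial) but more usefully, one builds a large collection of internally-disjoint structures; alternatively, greedily grow each $\calU_i$ around $W_i$ so that it absorbs enough edges. I expect the actual argument to assign each non-$W$ vertex to the group whose $W_i$ it is ``closest'' to (via a flow/path decomposition witnessing well-linkedness), ensuring each induced piece carries $\Omega(|W|) = \Omega(|E(G_\calC)|/\Delta')$ edges; chasing the constant $1/180$ is then routine bookkeeping, presumably with room to spare.

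\textbf{Main obstacle.} The hard part is the third step: turning a large well-linked (or high-treewidth) set into a \emph{tripartition of all of $V(G_\calC)$} where every part — not just one clever part — induces many edges. Well-linkedness gives disjoint paths, but naively cutting the graph into three pieces could strand most edges on the boundary. The trick will be to use the path system to define, for each vertex, a ``home'' among the three groups in a balanced way, so that a constant fraction of each group's internal edges survives; controlling the loss at the interfaces between the three $\calU_i$ is where the bounded-degree hypothesis $\Delta(G_\calC) \leq \Delta'$ is essential and where the constant $180$ gets spent. This is exactly the kind of argument lifted from Chekuri--Chuzhoy (Theorem~\ref{theorem:Chekuri_Chuzhoy_thm}), specialized to $h=2$ iterated once (or $h=3$ directly), so I would follow their decomposition scheme closely rather than reinventing it.
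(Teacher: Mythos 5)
Your proposal takes a genuinely different -- and considerably more complicated -- route than the paper, and there is a gap in it that would be hard to fill.

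The paper's proof of this lemma uses none of the well-linkedness or path-routing machinery you invoke. It first observes (using the two claims about acceptable partitions) that $|E(G_\calC)| \geq k/4$ while $\Delta(G_\calC) \leq \Delta' = 2\beta\alpha k \leq k/100$, so $|E(G_\calC)| \geq 25\,\Delta(G_\calC)$. It then proves, as a standalone tool (Lemma~\ref{lemma:tool_lemma}), that any self-loop-free multigraph $H$ with $|E(H)| \geq 25\Delta(H)$ admits a tripartition of $V(H)$ where each part retains at least $|E(H)|/180$ edges. The tool is proved \emph{probabilistically}: color each vertex uniformly at random with one of three colors, compute $\Ex{E_c}$ and $\Ex{E_c^2}$ for the number $E_c$ of monochromatic edges of color $c$, and apply the Paley--Zygmund inequality plus a union bound to show a good coloring exists. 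No well-linked set, no paths, no treewidth of $G_\calC$ is ever mentioned. This is the cleaner observation you were missing: the lemma does not require each $\calU_i$ to carry any treewidth structure -- that comes later via Lemma~\ref{lemma:if_not_wl_then_better_partition} and the minimality of $\calC$ -- it only needs each part to keep a constant fraction of edges, and that follows from ``many edges relative to degree'' alone.

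The gap in your approach is precisely the obstacle you flag at the end. You want to extract a large well-linked set $W$ in $G_\calC$, split it into thirds, and grow the thirds by routing paths. But the $S^*$-based cut bound inherited from $G$ only controls cuts of $G_\calC$ that split the $S^*$-mass substantially; a cut of $G_\calC$ whose $S^*$-mass is very unbalanced gets no useful lower bound, so it is not immediate that $G_\calC$ has a well-linked set of size $\Omega(|E(G_\calC)|/\Delta')$. Even granting that, the step of assigning non-$W$ vertices to groups so that \emph{every} induced piece keeps a constant fraction of edges is exactly where disjoint-path arguments tend to strand edges on boundaries, as you note -- you would be rebuilding a substantial chunk of the Chekuri--Chuzhoy decomposition for a sublemma that a one-page second-moment calculation dispatches. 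If you want to salvage your plan, the cleanest fix is to abandon it and notice that the acceptability conditions already hand you the degree/edge ratio needed for a random 3-coloring to succeed.
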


Recall that $\calC = (V_1,\dots,V_s)$, where $V_1 \cup \dots \cup V_s = V(G)$. From a partition $(\calU_1,\calU_2,\calU_3)$ of $\calC$ we define a corresponding partition $(U_1,U_2,U_3)$ of $V(G)$ where $U_j$ is obtained by uncontracting all nodes in $\calU_j$. More formally, $U_j  = \bigcup_{V_i \in \calU_j} V_i$. Note that $\calU_j$ is partition of $U_j$.

\begin{lemma}\label{lemma:if_not_wl_then_better_partition}
Let $\calC = (V_1,\dots,V_s)$ be an acceptable partition of $V(G)$. Let $\calU \subseteq \calC$ such that $|E(G_\calC[\calU])| \geq \frac{|E(G_\calC)|}{180}$ and let $U = \bigcup_{V_i \in \calU} V_i$. If $\tw(G[U]) < \lfloor \frac{\alpha \tw(G)}{\Delta^2} \rfloor$ and if $|U \cap S^*| \leq \frac{k}{2}$, then there is a partition $\calU'$ of $U$ such that $\calC' = (\calC \setminus \calU) \cup \calU'$ is an acceptable partition of $V(G)$ and such that $|E(G_{\calC'})| < |E(G_\calC)|$.
\end{lemma}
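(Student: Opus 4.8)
## Proof proposal for Lemma~\ref{lemma:if_not_wl_then_better_partition}

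The plan is to use the treewidth hypothesis on $G[U]$ to produce, via Lemma~\ref{lemma:lower_bound_treewidth} used contrapositively, a ``bad'' balanced cut \emph{inside} $U$, and then to use this cut to refine the partition $\calU$ of $U$ into a finer partition $\calU'$ whose contracted multigraph has strictly fewer edges. The key quantitative tension is that $\tw(G[U]) < \lfloor \alpha\tw(G)/\Delta^2 \rfloor \le r/2$, so $G[U]$ cannot have the well-linkedness-style expansion of Lemma~\ref{lemma:lower_bound_treewidth}. Concretely, applying that lemma to the graph $G[U]$ with the set $S := U \cap S^*$ and the constant $\gamma$: if $G[U]$ satisfied $|E(A,B)| \ge \gamma\min(|S\cap A|,|S\cap B|)$ for \emph{every} partition $(A,B)$ of $U$, we would get $\tw(G[U]) \ge \frac{\gamma|S|}{3\Delta}-1$, which combined with $\tw(G[U]) < r/2$ would bound $|U\cap S^*|$ from above — but this alone is not yet a contradiction, so the argument must instead iterate: as long as $|U \cap S^*|$ is large, there is a partition $(U', U'')$ of $U$ with $|E_{G[U]}(U',U'')| < \gamma\min(|U'\cap S^*|,|U''\cap S^*|)$, i.e.\ a cut that is cheap relative to how many $S^*$-vertices it separates.

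The main construction step is then to turn such a cheap cut of $U$ into the refinement $\calU'$ of the partition $\calC$ restricted to $U$. First I would take the cut $(U',U'')$ of $U$ guaranteed above and set $\calU'$ to be the partition of $U$ obtained by intersecting each block $V_i \in \calU$ with $U'$ and with $U''$ (discarding empty pieces); this is a legitimate partition of $U$ that refines $\calU$, so $\calC' = (\calC\setminus\calU)\cup\calU'$ is a partition of $V(G)$ refining $\calC$. Refining can only shrink the $S^*$-intersection of each block, so $|V_i'\cap S^*| \le k/2$ is preserved; and for the $out$-degree condition I need $|out_G(V_i')| \le \Delta'$ for each new block $V_i'$, which requires care since splitting a block can in principle increase its boundary — here is where the bound $|out_G(V_i)|\le\Delta'$ on the old blocks, together with the fact that the new boundary edges of $V_i'$ are a subset of $out_G(V_i)$ plus edges of $E_{G[U]}(U',U'')$ incident to $V_i$, has to be combined with the cheapness $|E_{G[U]}(U',U'')| < \gamma\min(\dots) \le \gamma\cdot\tfrac{k}{2}$ and the relations $\Delta' = \beta r\Delta^2$, $\beta = 6/\gamma$, $r = 2\alpha k/\Delta^2$ to close. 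That is the acceptability of $\calC'$.

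For the edge-count decrease $|E(G_{\calC'})| < |E(G_\calC)|$: refining a partition of $V(G)$ turns contracted vertices into several contracted vertices, so $G_{\calC'}$ has at least as many edges as $G_\calC$ in general, not fewer — so the decrease cannot come from refinement alone, and I expect the real mechanism is that we \emph{undo} part of the contraction, i.e.\ $\calU'$ is compared not to $\calU$ but replaces the block structure so that the number of edges \emph{within} the new blocks (which disappear upon contraction) strictly exceeds the number of new inter-block edges created. Quantitatively, $|E(G_{\calC'})| - |E(G_\calC)| = |E_{G[U]}(U',U'')| - \big(\text{edges of }G\text{ internal to some old }V_i\in\calU\text{ but now separated}\big)$, wait — more carefully, $|E(G_\calC)| = |E(G)| - \sum_i |E(G[V_i])|$ and similarly for $\calC'$, so $|E(G_{\calC'})| - |E(G_\calC)| = \sum_{V_i\in\calU}|E(G[V_i])| - \sum_{V_i'\in\calU'}|E(G[V_i'])| = |E_{G[U]}(U',U'')|$ only if the cut respects block boundaries, so in fact I must ensure the decrease is $|E(G[U])| - $ (internal edges surviving) versus the cheap cut — so I would instead argue: the hypothesis $|E(G_\calC[\calU])| \ge |E(G_\calC)|/180$ says $\calU$ contributes many inter-block edges of $G_\calC$ that lie within $U$, and refining $\calU$ along the cheap cut $(U',U'')$ destroys more internal edges than it creates boundary edges precisely because $|E_{G[U]}(U',U'')|$ is smaller than $|E(G_\calC[\calU])|/2 \ge |E(G_\calC)|/360$ (using the cut being cheap, $\le \gamma k/2$, against $|E(G_\calC)| \ge k/4$ from Claim~4 and the constant $\gamma = 1/2000$). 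This final counting — showing cheap cut $<$ internal $G_\calC$-edges within $\calU$ that get removed — is the step I expect to be the main obstacle, since it forces the precise choice of the numerical constants $\gamma, \beta, \alpha$ and of the constant $180$ in Lemma~\ref{lemma:nice_vertex_partition}. Once that inequality is in hand, $|E(G_{\calC'})| < |E(G_\calC)|$ follows, completing the proof.
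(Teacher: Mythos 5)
Your basic instinct --- that $\tw(G[U]) < r$ forces, via the contrapositive of Lemma~\ref{lemma:lower_bound_treewidth}, the existence of cheap balanced cuts in $G[U]$ --- is correct and is indeed the engine of the paper's argument. But the way you assemble $\calU'$ has a structural flaw that cannot be patched within your framework.

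You propose to build $\calU'$ by \emph{refining} $\calU$: intersecting each block $V_i\in\calU$ with the two sides $U',U''$ of a cheap cut. The problem is that refining never removes any edge of $E(G_\calC[\calU])$. Those edges run between \emph{distinct} blocks $V_i,V_j\in\calU$, and splitting each block internally leaves them running between distinct blocks of $\calU'$; they all survive into $E(G_{\calC'})$. So $|E(G_{\calC'})| \geq |E(G_\calC)|$ for any refinement, and the hypothesis $|E(G_\calC[\calU])|\geq|E(G_\calC)|/180$ gives you nothing. You notice something is off (``the decrease cannot come from refinement alone'') but the workaround you sketch still credits a refinement with ``destroying internal edges,'' which it cannot do. The paper's construction goes the opposite way: it \emph{discards} the old partition of $U$ entirely, starts from the trivial one-block partition $P=(U)$ --- thereby turning every edge of $E(G_\calC[\calU])$ into an internal edge that disappears upon contraction --- and then repeatedly splits blocks $Y$ with $|out(Y)|\geq\Delta'$ using the cheap-cut routine until all pieces are small-boundary. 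The net change is $|E(G_{\calC'})| = |E(G_\calC)| - |E(G_\calC[\calU])| + M$, where $M$ is the total number of cut edges created by all the splits, and the whole point is to show $M$ is small enough.

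This exposes the second gap: a \emph{single} cut, as in your proposal, cannot produce an acceptable partition, because the resulting pieces may still have boundary $\geq\Delta'$, and a single cut also need not respect the $|V_i\cap S^*|\le k/2$ bookkeeping the way you suggest. One must iterate, and then the real difficulty is controlling the accumulated cost $M$ of all cuts. The paper handles this with two split lemmas (boundary halves roughly geometrically: $|out(A_Y)|\leq |out(Y)|/(2(1-\gamma))$) and a nontrivial charging scheme (Lemmas~\ref{lemma:charging_M} and~\ref{lemma:charging_rec}) that pushes the cost of interior cuts out to $out(U)$ and shows $M\leq 9\gamma|out(U)|\leq 9\gamma|E(G_\calC)|$. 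Combined with $9\gamma < 1/180$ this closes the argument. None of this machinery --- the iteration, the geometric decay of boundaries, the charging scheme --- appears in your proposal, and it is precisely where the constants $\gamma,\beta,\alpha,180$ get used.
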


\begin{proof}[Proof of Theorem~\ref{theorem:tw_partition}]
We know acceptable partitions of $V(G)$ exist. Let $\calC$ be the acceptable partition of $V(G)$ such that $|E(G_\calC)|$ is minimal, that is, for all other acceptable partitions $\calC'$ there is $|E(G_{\calC'})| \geq |E(G_\calC)|$. Let $(\calU_1,\calU_2,\calU_3)$ be the partition of $\calC$ given by  Lemma~\ref{lemma:nice_vertex_partition} and $(U_1,U_2,U_3)$ be the corresponding partition of $V(G)$. 

Assume, without loss of generality, that $\tw(G[U_1]) \geq \tw(G[U_2]) \geq tw(G[U_3])$. If $\tw(G[U_2]) \geq \lfloor \frac{\alpha \tw(G)}{\Delta^2} \rfloor$ then we take $A = U_1$ and $B = U_2 \cup U_3$ and we are done. Suppose otherwise that $\tw(G[U_3]) \leq \tw(G[U_2]) < \lfloor \frac{\alpha \tw(G)}{\Delta^2} \rfloor$. There must be $|U_j \cap S^*| \leq \frac{|S^*|}{2} = \frac{k}{2}$ for some $j \in \{2,3\}$, say for $j = 3$. But then Lemma~\ref{lemma:if_not_wl_then_better_partition} gives a partition $\calU'_3$ of $U_3$ such that $\calC' = \calU_1 \cup \calU_2 \cup \calU'_3$ is an acceptable partition of $V(G)$ such that $|E(G_{\calC'})| < |E(G_\calC)|$,  a contradiction.
\end{proof}

\subsection{Proof of Lemma~\ref{lemma:nice_vertex_partition}}

Lemma~\ref{lemma:nice_vertex_partition} is a consequence of the more general Lemma~\ref{lemma:tool_lemma} below. Lemma~\ref{lemma:tool_lemma} essentially says that if the number edges of a multigraph is greater than a factor of its degree, then there is a partition of its vertices into three parts such that many edges remain in every part.

\begin{lemma}\label{lemma:tool_lemma}
Let $H$ be a multigraph with no self-loop and with $|E(H)| \geq 25\Delta(H)$. There is a partition $V(H) = (V_r,V_b,V_g)$ (red,blue,green) such that $|E(H[V_c])| \geq |E(H)|/180$ holds for all $c \in \{r,b,g\}$.
\end{lemma}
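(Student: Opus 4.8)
The goal is a partition of $V(H)$ into three parts, each retaining a constant fraction of $|E(H)|$, under the hypothesis $|E(H)| \geq 25\Delta(H)$. The natural approach is probabilistic: color each vertex of $H$ independently and uniformly from $\{r,b,g\}$, so every edge survives inside a single color class with probability exactly $1/9$. Hence the expected number of edges inside a fixed color class $c$ is $|E(H)|/9$. If the three counts $|E(H[V_r])|$, $|E(H[V_b])|$, $|E(H[V_g])|$ were tightly concentrated, a union bound would immediately give a coloring where all three are at least, say, $|E(H)|/10 > |E(H)|/180$. So the plan is: (1) compute the expectation, (2) control the deviation, (3) union-bound to show a good coloring exists.

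**The obstacle and how to handle it.** The main difficulty is concentration. The random variable $|E(H[V_c])|$ is a sum of edge-indicators that are \emph{not} independent — two edges sharing a vertex are correlated — so a naive Chernoff bound does not apply directly. The degree bound is exactly what rescues us: the standard tool is the bounded-differences (Azuma–McDiarmid) inequality applied to the vertex colors as the independent coordinates. Changing the color of a single vertex $v$ changes $|E(H[V_c])|$ by at most $\deg(v) \leq \Delta(H)$, so the Lipschitz constant in each of the $|V(H)|$ coordinates is $\Delta(H)$. McDiarmid's inequality then gives
$$
\Prb{\,\big||E(H[V_c])| - |E(H)|/9\big| \geq t\,} \leq 2\exp\!\left(-\frac{2t^2}{|V(H)|\,\Delta(H)^2}\right),
$$
which is too weak as stated because $|V(H)|\Delta(H)^2$ can be much larger than $|E(H)|^2$. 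The fix is to use a variance-sensitive version of the bounded-differences inequality (the "method of bounded variances" / Freedman-type bound), where the relevant quantity is $\sum_v \deg(v)^2 \leq \Delta(H)\sum_v \deg(v) = 2\Delta(H)|E(H)|$ rather than $|V(H)|\Delta(H)^2$. With that, the tail bound becomes roughly $\exp(-\Omega(t^2/(\Delta(H)|E(H)|)))$; choosing $t = |E(H)|/5$ and using $|E(H)| \geq 25\Delta(H)$ makes the exponent at least a constant, in fact large enough that the failure probability for each of the three classes is below $1/3$, so the union bound leaves a positive-probability event. On that event $|E(H[V_c])| \geq |E(H)|/9 - |E(H)|/5$... which is negative — so in the final write-up I would instead center more carefully: take $t$ a small constant fraction like $|E(H)|/10$, note $|E(H)|/9 - |E(H)|/10 = |E(H)|/90 > |E(H)|/180$, and tune the constant $25$ in the hypothesis (which is generous) against whatever concentration inequality is cleanest.

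**A more robust alternative.** Because the constants are loose, an entirely elementary route avoiding sharp concentration is also available and is probably what I would actually write: proceed greedily or via a conditional-expectation / local-switching argument. Start from any balanced-ish $3$-coloring and repeatedly move a vertex from the color class in which it has the most internal incident edges to one in which it has fewer, which can only increase $\sum_c |E(H[V_c])|$; alternatively, directly derandomize the random coloring by the method of conditional expectations to obtain a coloring with $\sum_c |E(H[V_c])| \geq |E(H)|/9$, and then argue that no single class can be too small: if $|E(H[V_g])|$ were below $|E(H)|/180$, then moving the endpoints of $V_g$'s internal edges or re-coloring $V_g$ greedily into $\{r,b\}$ would... — here one has to be a bit careful, since a very unbalanced split is the enemy. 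The cleanest elementary statement is: first find a $2$-partition $(X,Y)$ with $|E(H[X])|, |E(H[Y])| \geq |E(H)|/4$ (a max-cut–style or random-halving argument, using $|E(H)| \geq 25\Delta(H)$ to rule out one side being nearly empty), then recursively split the larger side once more, tracking the losses. Either way the arithmetic is routine; the conceptual heart is the single observation that the degree bound $|E(H)| \geq 25\Delta(H)$ forces the edge set to be "spread out" enough that a random $3$-coloring cannot concentrate its surviving edges in fewer than three classes. I would present the probabilistic argument with McDiarmid's variance-aware inequality as the main proof and remark that the constant $180$ is far from optimal.
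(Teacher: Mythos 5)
Your high-level strategy---color each vertex uniformly at random from $\{r,b,g\}$, note that each edge survives inside a single color class with probability $1/9$, use the degree hypothesis to get concentration, and union-bound over the three classes---is exactly the paper's. The difference, and the gap, is in the concentration step. The paper does not use a McDiarmid/Azuma/Freedman-type upper bound on the deviation at all: it applies the \emph{Paley--Zygmund inequality}, which lower-bounds $\Prb{E_c \geq \theta \Ex{E_c}}$ by $(1-\theta)^2\Ex{E_c}^2/\Ex{E_c^2}$. The entire technical work is the computation of $\Ex{E_c^2}$, done by splitting pairs of edges $(e,e')$ according to whether they share two, one, or zero endpoints (survival probabilities $1/9$, $1/27$, $1/81$). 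The hypothesis $|E(H)| \geq 25\Delta(H)$ enters precisely to bound $|E(u)\cap E(v)|$, $|E(u)|$, $|E(v)|$ by $\Delta \leq m/25$, yielding $\Ex{E_c^2} \leq \tfrac{m^2}{81}(1+8/25)$; plugging in with $\theta=1/20$ gives $\Prb{E_c \geq m/180} > 2/3$ per color, and the union bound finishes.

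Your proposal, by contrast, leaves concentration as a plan rather than a computation, and the arithmetic does not actually close with the stated constant. You correctly note that plain McDiarmid with $\sum_v c_v^2 \leq 2\Delta m$ is too weak, but you never compute a variance to feed into the ``variance-aware'' bound you invoke. If you take $t=19m/180$ (the largest deviation compatible with the $m/180$ threshold), plain McDiarmid requires $m/\Delta$ on the order of $100$, not $25$; and even a Freedman/Bernstein bound using the true variance $\textup{Var}(E_c) \leq \tfrac{8}{2025}m^2$ (whose computation is essentially the same pairwise-edge analysis the paper does for $\Ex{E_c^2}$) gives a per-color failure probability around $e^{-1.04}\approx 0.35$, just over $1/3$. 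You also cannot simply ``tune the constant $25$'': it is pinned down by the way the lemma is invoked in Lemma~\ref{lemma:nice_vertex_partition}. The moral is that once you accept that a second-moment computation is unavoidable, Paley--Zygmund (or Chebyshev) is the tool exactly matched to the one-sided lower-tail statement you need, and it closes with room to spare; the bounded-differences route you sketch is a strictly weaker lever for this job. Your elementary max-cut/switching alternative is left as an acknowledged sketch and does not substitute for the missing computation.
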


\begin{proof}[Proof of Lemma~\ref{lemma:nice_vertex_partition}]
$\calC$ is an acceptable partition, so $|E(G_\calC)| \geq k/4$ and $\Delta(G_\calC) \leq \Delta' = \beta r \Delta^2 = 2\beta \alpha k \leq k/100 \leq |E(G_\calC)|/25$. Now Lemma~\ref{lemma:nice_vertex_partition} is a direct application of Lemma~\ref{lemma:tool_lemma} with $H = G_\calC$.
\end{proof}

\noindent The proof of Lemma~\ref{lemma:tool_lemma} is probabilistic. It leans on \textbf{Paley-Zygmund inequality} which, given a non-negative random variable $Z$ with finite variance and $\theta \in (0,1]$, is
$$
\Prb{Z \geq \theta \Ex{Z}} \geq (1-\theta)^2\frac{\Ex{Z}^2}{\Ex{Z^2}}
$$

\begin{proof}[Proof of Lemma~\ref{lemma:tool_lemma}]
Let $m = |E(H)|$, $D = \Delta(H)$ and $\eta = \frac{1}{25}$. By assumption $D \leq \eta m $.
The vertices of $H$ are assigned a color in $\{r(ed),b(lue),g(reen)\}$ uniformly at random. An edge $e = uv$ is red when both its endpoints are red, it is blue when both its endpoints are blue, it is green when both its endpoints are green, and otherwise it has no color. Let $X^c_e$ be the event that the edge $e$ has color $c \in \{r,b,g\}$ and let $E_c = \sum_{e \in E(H)} X^c_e$ be the number of edges colored with $c$ after random coloring of the vertices. It is clear that $\Prb{X^c_e} = \frac{1}{9}$ and that $\Ex{E_c} = \frac{m}{9}$. Proving the statement of the lemma means proving that
$$
\Prb{E_r < \frac{m}{180} \textup{ or } E_b < \frac{m}{180} \textup{ or } E_g < \frac{m}{180}} < 1.
$$ 
By union bound, it is sufficient to show that $\Prb{E_c < \frac{m}{180}} < \frac{1}{3}$ holds, for $c$ fixed in $\{r,b,g\}$. We will use Paley-Zygmund inequality to prove $\Prb{E_c \geq \frac{m}{180}} > \frac{2}{3}$, so we need to compute $\Ex{E_c^2}$.
$$
\begin{aligned}
\Ex{E_c^2} 
&= \Ex{\left(\sum\nolimits_{e} X^c_e\right)^2} = \sum_{e \in E(G)} \sum_{e' \in E(G)} \Ex{X^c_eX^c_{e'}}
\\
&= \sum_{e \in E(G)} \sum_{e' \in E(G)}\Prb{X^c_e \textup{ and } X^c_{e'}}
\end{aligned}
$$
Let us look at $\Prb{X^c_e \textup{ and } X^c_{e'}}$ for $e = uv$. 
\begin{itemize}
\item[$\bullet$] If $e'$ has the same endpoints as $e$, so when $e' \in E(u) \cap E(v)$, then the probability is $\Prb{X^c_e \textup{ and } X^c_{e'}} = \frac{1}{9}$. 
\item[$\bullet$] If $e'$ shares exactly one endpoint with $e$, so when $e' \in (E(u) \cup E(v)) \setminus (E(u) \cap E(v))$, then the probability is $\Prb{X^c_e \textup{ and } X^c_{e'}} = \frac{1}{27}$.
\item[$\bullet$] If $e'$ has no endpoint in common with $e$, so when $e' \in E(H) \setminus (E(u) \cup E(v))$, then the probability is $\Prb{X^c_e \textup{ and } X^c_{e'}} = \frac{1}{81}$.
\end{itemize}
So we obtain
$$
\begin{aligned}
\Ex{E_c^2} &=
\sum_{e = uv}\left(\frac{|E(u) \cap E(v)|}{9} \right.
\\
&\quad+\frac{|E(u)| + |E(v)| - 2|E(u) \cap E(v)|}{27}
\\
&\quad\left.+ \frac{m - (|E(u)| + |E(v)| - |E(u) \cap E(v)|)}{81}\right) 
\\
\Ex{E_c^2} &=\frac{m^2}{81} + \frac{4}{81}\sum_{e = uv}|E(u) \cap E(v)| \\
&\quad+\frac{2}{81}\sum_{e = uv}(|E(u)| + |E(v)|)
\end{aligned}
$$
Recall that we are dealing with multigraphs, so $\sum_{e = uv}|E(u) \cap E(v)|$ is not necessarily $m$. But we do have that $|E(u) \cap E(v)|$, $|E(u)|$ and $|E(v)|$ are all at most $D \leq \eta m$ so 
$$
\Ex{E_c^2} \leq \frac{m^2}{81} +\frac{8mD}{81} \leq \frac{m^2}{81} + \frac{8\eta m^2}{81}.
$$
Finally we apply Paley-Zygmund inequality:
$$
\begin{aligned}
\Prb{E_c \geq m/180} &\geq \left(1-\frac{1}{20}\right)^2\frac{\Ex{E_c}^2}{\Ex{E_c^2}} 
\\
&\geq \left(1-\frac{1}{20}\right)^2\frac{1}{1+ 8\eta} > \frac{2}{3}
\end{aligned}
$$
\end{proof}

\subsection{Proof of Lemma~\ref{lemma:if_not_wl_then_better_partition}}

\setcounter{lemma}{10}
\begin{lemma}%\label{lemma:if_not_wl_then_better_partition}
Let $\calC = (V_1,\dots,V_s)$ be an acceptable partition of $V(G)$. Let $\calU \subseteq \calC$ such that $|E(G_\calC[\calU])| \geq \frac{|E(G_\calC)|}{180}$ and let $U = \bigcup_{V_i \in \calU} V_i$. If $\tw(G[U]) < \lfloor \frac{\alpha \tw(G)}{\Delta^2} \rfloor$ and if $|U \cap S^*| \leq \frac{k}{2}$, then there is a partition $\calU'$ of $U$ such that $\calC' = (\calC \setminus \calU) \cup \calU'$ is an acceptable partition of $V(G)$ and such that $|E(G_{\calC'})| < |E(G_\calC)|$.
\end{lemma}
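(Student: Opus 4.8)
The plan is to turn the statement into a self-contained graph-partitioning task and then solve that task using the sparse-cut consequence of low treewidth provided by Lemma~\ref{lemma:lower_bound_treewidth}.

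\emph{First,} I would record precisely what must be produced. Put $m_0 := |E(G_\calC[\calU])|$, so that the hypothesis reads $m_0 \geq |E(G_\calC)|/180$, and let $b := |E(U, V \setminus U)|$; since these $b$ edges survive the contraction of every block outside $\calU$, they are among the edges of $G_\calC$, so $b \leq |E(G_\calC)| \leq 180\, m_0$. For an arbitrary partition $\calU'$ of $U$, let $m'$ be the number of edges of $G$ that have both endpoints in $U$ but in distinct blocks of $\calU'$. Classifying the edges of $G$ into those inside $V \setminus U$, those in $E(U, V\setminus U)$, and those inside $U$, the first two classes contribute identically to $|E(G_\calC)|$ and to $|E(G_{\calC'})|$, while the third contributes $m_0$ to the former and $m'$ to the latter; hence $|E(G_{\calC'})| = |E(G_\calC)| - m_0 + m'$. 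So it suffices to exhibit $\calU'$ with $\calC'$ acceptable and $m' < m_0$. Acceptability is almost free: the blocks of $\calC \setminus \calU$ are untouched, and every block $W$ of $\calU'$ satisfies $W \subseteq U$, so $|W \cap S^*| \leq |U \cap S^*| \leq k/2$ automatically; thus $\calC'$ is acceptable exactly when $|out_G(W)| \leq \Delta'$ holds for each $W \in \calU'$.

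\emph{Next,} I would build $\calU'$ by iterated refinement, starting from the single-block partition $\{U\}$ (which has $m' = 0 < m_0$) and repairing it whenever a block has too large a boundary. While some block $W$ has $|out_G(W)| > \Delta'$: if $G[W]$ is disconnected, replace $W$ by its connected components, which introduces no new crossing edges; otherwise $G[W]$ is connected, and let $S$ be the set of vertices of $W$ incident to an edge leaving $W$. As $G$ has maximum degree at most $\Delta$, we get $|S| \geq |out_G(W)|/\Delta > \Delta'/\Delta = \beta r \Delta$, and since $\tw(G[W]) \leq \tw(G[U]) < \lfloor \alpha \tw(G)/\Delta^2 \rfloor \leq r$ together with $\gamma\beta = 6$ yields $\gamma |S|/(3\Delta) - 1 > 2r - 1 \geq r > \tw(G[W])$ (recall $r \geq 1$), Lemma~\ref{lemma:lower_bound_treewidth}, in its contrapositive form, applies to $G[W]$ and $S$ and produces a partition $(A,B)$ of $W$ with $|E_{G[W]}(A,B)| < \gamma \min(|S \cap A|, |S \cap B|)$; replace $W$ by $A$ and $B$. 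This process terminates, and the partition $\calU'$ it outputs makes $\calC'$ acceptable by construction.

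\emph{Finally,} and this is the technical heart, I would bound the total number of crossing edges created, which is exactly $m' = \sum_{\text{cuts}} |E_{G[W]}(A,B)|$, and check $m' < m_0$. Component splittings contribute nothing. A cut of a connected block $W$ contributes less than $\gamma \min(|S \cap A|, |S \cap B|) \leq \tfrac{\gamma}{2}|out_G(W)|$, and the two new blocks satisfy $|out_G(A)| + |out_G(B)| = |out_G(W)| + 2|E_{G[W]}(A,B)|$, so the total, over the current blocks, of the number of vertices incident to a leaving edge grows by at most $2|E_{G[W]}(A,B)|$ per cut and is at most $b + 2m'$ at the end. The obstacle is that the cuts returned by Lemma~\ref{lemma:lower_bound_treewidth} need not be balanced, so one cannot merely sum $\tfrac{\gamma}{2}|out_G(W)|$ over all cut blocks. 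The intended way around this is to exploit the sharper inequality $|E_{G[W]}(A,B)| < \gamma \min(|S\cap A|,|S\cap B|)$: the lighter side of each cut ``spends'' all but a $\gamma$-fraction of its leaving-edge endpoints on that cut, so a potential built from these counts contracts geometrically along every branch of the recursion, and since a block is cut only while it has more than $\beta r \Delta$ such vertices the recursion is short; this yields $m' \leq O(\gamma)\,(b + m')$, which with $\gamma = 1/2000$ and $b \leq 180\, m_0$ forces $m' < m_0$. Substituting into $|E(G_{\calC'})| = |E(G_\calC)| - m_0 + m'$ completes the proof. The numerical choices $\gamma = 1/2000$, $\beta = 6/\gamma$, and $\alpha = 1/(200\beta)$ are exactly what makes both the applicability check above ($\gamma\beta = 6$) and this final summation go through.
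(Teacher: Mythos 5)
Your setup is sound: the identity $|E(G_{\calC'})| = |E(G_\calC)| - m_0 + m'$, the observation that $|W\cap S^*|\le k/2$ is inherited automatically, and the plan to split blocks with $|out(W)|\ge\Delta'$ using the contrapositive of Lemma~\ref{lemma:lower_bound_treewidth} all match the paper. The gap is in what you call the ``technical heart,'' and it is not a detail you could backfill the same way: your version of \emph{Split} takes $S$ to be \emph{all} boundary vertices of $W$, and I do not believe the potential argument you gesture at closes. With your choice of $S$, a single cut may introduce up to $\gamma\min(|S\cap A|,|S\cap B|)\approx\gamma|out(W)|/2$ new crossing edges, and the quantity $\sum_W |out(W)|$ over the blocks of a level of the recursion tree does \emph{not} decrease from level to level (it increases by $2|E(A,B)|$ per cut). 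A perfectly boundary-balanced recursion then contributes roughly $\gamma|out(U)|/2$ per level across $\Theta\bigl(\log(|out(U)|/\Delta')\bigr)$ levels, so the most one can extract from the stated inequality is $m' = O\bigl(\gamma|out(U)|\log(|out(U)|/\Delta')\bigr)$, not $m'=O(\gamma)(b+m')$. The claim that ``a potential built from these counts contracts geometrically along every branch'' is asserted but not established, and the natural accounting refutes it. This log factor is exactly the obstruction the paper engineers around.

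The paper's \emph{Split} instead chooses a \emph{minimal} $S\subseteq Y$ of boundary vertices with $|out(S)\cap out(Y)|\ge\Delta'$. Minimality is the whole point: it gives two consequences simultaneously. First (Lemma~\ref{lemma:split_lemma_1}, equation~(\ref{eq:eq1})), the cut has $|E(A_Y,B_Y)|<\gamma\Delta'$ \emph{regardless of how large} $|out(Y)|$ is, so the per-cut cost is capped by a fixed quantity rather than a fraction of the (possibly huge) boundary. Second (Lemma~\ref{lemma:split_lemma_2}), the side to which charges are routed shrinks geometrically, $|out(A_Y)|\le|out(Y)|/(2(1-\gamma))$. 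The charging scheme in Lemma~\ref{lemma:charging_rec} then shows that the per-cut ratio $|E(Y_i,Z_i)|/|out(Y_i)\cap out(Y_i\cup Z_i)|\le\gamma\Delta'/\bigl((1-\gamma)(2(1-\gamma))^{z-i}\Delta'\bigr)$ decays geometrically along any chain of charges to a fixed edge, giving $c_e\le 9\gamma$ and hence $M\le 9\gamma|out(U)|$. Without the $\gamma\Delta'$ cap, the per-cut ratio is only bounded by $\gamma$, and the geometric series collapses to a linear one of unbounded length. If you want to salvage your route, you would need to reproduce this minimal-$S$ device (or an equivalent mechanism that caps the per-cut cost and makes a fixed side's boundary shrink geometrically); the rest of your outline is compatible with it.
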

\setcounter{lemma}{12}

We are going to construct the new partition $\calU'$ of $U$. Recall that $\frac{6\Delta^ 2 r}{\Delta'} = \frac{6}{\beta} = \gamma < 1$. From Lemma~\ref{lemma:lower_bound_treewidth} we deduce that for all $S \subseteq U$ with $|out(S)| \geq \Delta'$, there is a partition $(A_U,B_U)$ of $U$, such that $|E(A_U,B_U)| < \gamma \min(|A_U \cap S|, |B_U \cap S|)$, for otherwise there would be 
$$
\begin{aligned}
\tw(G[U]) &\geq \frac{\gamma |S|}{3\Delta} - 1 \geq \frac{\gamma |out(S)|}{3\Delta^2} - 1 \geq \frac{\gamma \Delta'}{3\Delta^2} - 1
\\
&= 2r-1 \geq r \geq \frac{\alpha\tw(G)}{\Delta^2}
\end{aligned}
$$
where we have used that $|S| \geq |out(S)|/\Delta$ and $r = 2\alpha k/\Delta^2 \geq \alpha\tw(G)/\Delta^2 \geq 1$.

\paragraph{The Split Function.} We define a routine $Split(Y)$ whose inputs is a subset $Y \subseteq U$ with $|out(Y)| \geq \Delta'$. $Split(Y)$ first chooses the smallest subset $S \subseteq Y$ whose vertices are endpoints of edges in $out(Y)$, and such that $|out(S) \cap out(Y)| \geq \Delta'$. Clearly $|S| \geq \Delta'/\Delta > 1$. Then $Split(Y)$ returns a partition $(A_Y,B_Y)$ of $Y$ such that $|E(A_Y,B_Y)| < \gamma \min(|S \cap A_Y|,|S \cap B_Y|)$. We know such a partition exists because otherwise $\tw(G[U]) \geq \tw(G[Y]) \geq r$ would hold. We always assume that $|out(A_Y)| \leq |out(B_Y)|$. Observe that neither $S \cap A_Y$ nor $S \cap B_Y$ is empty.
%\begin{figure}[t]
%\centering
%\begin{tikzpicture}
%
%\def\centerarc[#1](#2)(#3:#4:#5)% Syntax: [draw options] (center) (initial angle:final angle:radius)
%    { \draw[#1] ($(#2)+({#5*cos(#3)},{#5*sin(#3)})$) arc (#3:#4:#5); }
%    
%\draw (-0.2,0) ellipse (2 and 1.5);
%\draw (0.6,0) circle (1);
%
%\centerarc[double distance = 0.4cm,line cap=round](0.6,0)(-100:45:0.7);
%
%\foreach \a in {10,40,80,140,180, 260, 300}{
%\node[circle,fill=black,inner sep=1] (u\a) at ({0.6+0.7*cos(\a)},{0.7*sin(\a)}) {};
%}
%
%\foreach \a in {10, 40, 80, 300}{
%\node[circle,fill=black,inner sep=1] (v\a) at ({0.6+1.5*cos(\a)},{1.5*sin(\a)}) {};
%\draw (u\a) -- (v\a);
%}
%
%\foreach \a in {140, 180, 260}{
%\node[circle,fill=black,inner sep=1] (v\a) at ({0.6+1.3*cos(\a)},{1.3*sin(\a)}) {};
%\draw (u\a) -- (v\a);
%}
%
%
%\node[circle,fill=black,inner sep=1] (v320) at ({0.6+1.5*cos(320)},{1.5*sin(320)}) {};
%\draw (u300) -- (v320);
%
%
%\draw (u40) -- (u80);
%\draw (u140) -- (u180);
%
%
%\node (U) at (-1.5,0) {$U$};
%\node (Y) at (0.5,0) {$Y$};
%\node (Y) at (1.25,-0.25) {$S$};
%
%
%\end{tikzpicture}
%\end{figure}

\begin{lemma}\label{lemma:split_lemma_1}
Let $(A_Y,B_Y) = Split(Y)$, then 
\begin{equation}\label{eq:eq1}
|E(A_Y,B_Y)| < \gamma \Delta'
\end{equation}
and 
\begin{equation}\label{eq:eq2}
\begin{aligned}
|E(A_Y,B_Y)| < \gamma \min(&|out(Y) \cap out(A_Y)|,
\\
&|out(Y) \cap out(B_Y)|)
\end{aligned}
\end{equation}
\begin{proof}
Consider the subset $S \subseteq Y$ chosen by $Split(Y)$. Every $v \in S$ is the endpoint of an edge in $out(Y)$ by definition, so $|out(S \cap A_Y) \cap out(Y)| \geq |S \cap A_Y|$ and $|out(S \cap B_Y) \cap out(Y)| \geq |S \cap B_Y|$. Thus $|out(A_Y) \cap out(Y)| \geq |S \cap A_Y|$ and $|out(B_Y) \cap out(Y)| \geq |S \cap B_Y|$ hold. $Split(Y)$ returns the partition $(A_Y,B_Y)$ such that $|E(A_Y,B_Y)| < \gamma \min(|S \cap A_Y|,|S \cap B_Y|)$. Combining this inequality with the ones we have just obtained gives~(\ref{eq:eq2}).

There is $|out(S \cap A_Y) \cap out(Y)| < \Delta'$ and $|out(S \cap B_Y) \cap out(Y)| < \Delta'$, for otherwise $|S|$ would not be minimal. Thus $|E(A_Y,B_Y)| < \gamma \min(|S \cap A_Y|,|S \cap B_Y|) < \gamma \min(|out(S \cap A_Y) \cap out(Y)|,|out(S \cap B_Y) \cap out(Y)|) < \gamma\Delta'$.
\end{proof}
\end{lemma}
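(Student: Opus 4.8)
The plan is to unfold the definition of $Split(Y)$ and to exploit two features of the vertex set $S\subseteq Y$ that it selects: first, that every vertex of $S$ is the \emph{unique} $Y$-endpoint of some edge of $out(Y)$, and second, that $S$ is chosen of \emph{minimal} size subject to $|out(S)\cap out(Y)|\geq\Delta'$. Throughout write $A=A_Y$, $B=B_Y$, $S_A=S\cap A_Y$ and $S_B=S\cap B_Y$; recall from the description of $Split$ that $(A,B)$ is a partition of $Y$ with $|E(A,B)|<\gamma\min(|S_A|,|S_B|)$ and that neither $S_A$ nor $S_B$ is empty.

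First I would establish the counting bounds $|out(S_A)\cap out(Y)|\geq|S_A|$ and $|out(S_B)\cap out(Y)|\geq|S_B|$. For each $v\in S_A$, pick an edge $e_v\in out(Y)$ incident to $v$, which exists by the defining property of $S$. Since $e_v\in out(Y)$, its second endpoint lies in $V(G)\setminus Y$, so $v$ is the \emph{only} endpoint of $e_v$ inside $Y$; in particular $e_v\neq e_{v'}$ for $v\neq v'$, which is the one place where the multigraph setting must be handled with a little care. Each $e_v$ has one endpoint in $S_A\subseteq A$ and the other outside $Y\supseteq A$, hence $e_v\in out(S_A)\cap out(Y)$ and also $e_v\in out(A)\cap out(Y)$; counting the $e_v$ gives $|out(S_A)\cap out(Y)|\geq|S_A|$, and symmetrically for $B$. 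I would also record the inclusions $out(S_A)\cap out(Y)\subseteq out(A)\cap out(Y)$ and $out(S_B)\cap out(Y)\subseteq out(B)\cap out(Y)$, which hold because any edge leaving $S_A$ and leaving $Y$ has its $Y$-endpoint in $S_A\subseteq A$ and its other endpoint outside $Y\supseteq A$.

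From here, inequality~(\ref{eq:eq2}) follows by chaining: the partition returned by $Split$ satisfies $|E(A,B)|<\gamma\min(|S_A|,|S_B|)$, and by monotonicity of $\min$ together with the counting bounds and the inclusions above, $\min(|S_A|,|S_B|)\leq\min(|out(A)\cap out(Y)|,|out(B)\cap out(Y)|)$, which is exactly the right-hand side of~(\ref{eq:eq2}) after using commutativity of intersection. For inequality~(\ref{eq:eq1}) I would invoke the minimality of $S$: if $|out(S_A)\cap out(Y)|\geq\Delta'$ held, then $S_A$ would be a set of vertices all incident to $out(Y)$, strictly smaller than $S$ because $S_B\neq\emptyset$, yet still satisfying the threshold condition, contradicting the minimal choice of $S$ inside $Split(Y)$. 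Hence $|out(S_A)\cap out(Y)|<\Delta'$, and likewise $|out(S_B)\cap out(Y)|<\Delta'$; combining with the counting bounds gives $\min(|S_A|,|S_B|)\leq\min(|out(S_A)\cap out(Y)|,|out(S_B)\cap out(Y)|)<\Delta'$, so $|E(A,B)|<\gamma\min(|S_A|,|S_B|)<\gamma\Delta'$, which is~(\ref{eq:eq1}).

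I do not expect a real obstacle here: the statement is essentially bookkeeping on top of the definition of $Split$. The only subtle points are the edge-counting in a \emph{multigraph}, where I must charge distinct vertices of $S$ to distinct edges of $out(Y)$ -- which works precisely because an edge of $out(Y)$ has a single endpoint inside $Y$ -- and the nonemptiness of $S_A$ and $S_B$, which is what makes the competitor $S_A$ (or $S_B$) \emph{strictly} smaller than $S$ in the minimality argument.
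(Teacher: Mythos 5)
Your proposal is correct and follows essentially the same route as the paper: the same counting bound $|out(S\cap A_Y)\cap out(Y)|\geq|S\cap A_Y|$ (and symmetrically for $B_Y$) combined with the defining inequality of $Split$ gives~(\ref{eq:eq2}), and the minimality of $S$ (using that neither $S\cap A_Y$ nor $S\cap B_Y$ is empty) gives the $<\Delta'$ bounds yielding~(\ref{eq:eq1}). Your explicit handling of the multigraph edge-counting is a welcome detail that the paper leaves implicit.
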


\begin{lemma}\label{lemma:split_lemma_2} Let $(A_Y,B_Y) = Split(Y)$, then $|out(A_Y)| \leq \frac{|out(Y)|}{2(1-\gamma)}$ and $|out(B_Y)| \leq |out(Y)|$.
\begin{proof}
By definition of $Split$ there is $|out(A_Y)|\leq |out(B_Y)|$ and $|E(A_Y,B_Y)| < \gamma |out(A_Y) \cap out(Y)|$.

For the first part, observe that $|out(A_Y)| + |out(B_Y)| - 2|E(A_Y,B_Y)| = |out(Y)|$ so 
\[
\begin{aligned}
|out(Y)| &\geq 2|out(A_Y)| - 2|E(A_Y,B_Y)| \\
		 &\geq 2|out(A_Y)| - 2\gamma|out(A_Y) \cap out(Y)| \\
		 &\geq 2(1-\gamma)|out(A_Y)|
\end{aligned}
\]
For the second part, observe that $|out(B_Y)| = |out(B_Y) \cap out(Y)| + |E(A_Y,B_Y)|$ so 
\[
\begin{aligned}
|out(B_Y)| &\leq |out(B_Y) \cap out(Y)| + \gamma |out(A_Y) \cap out(Y)|\\
		   &\leq \gamma|out(Y)| + (1-\gamma)|out(B_Y) \cap out(Y)| \\
		   &\leq \gamma|out(Y)| + (1-\gamma)|out(B_Y)| \\
\end{aligned}
\]
and therefore $\gamma |out(B_Y)| \leq \gamma|out(Y)|$ holds.
\end{proof}
\end{lemma}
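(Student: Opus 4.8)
The plan is to reduce both bounds to a single edge-counting identity relating $|out(A_Y)|$, $|out(B_Y)|$ and $|out(Y)|$. Since $(A_Y,B_Y)$ is a partition of $Y$, every edge leaving $A_Y$ either goes to $B_Y$ (and is then counted in $E(A_Y,B_Y)$) or leaves $Y$ altogether (and is then counted in $out(A_Y)\cap out(Y)$), and these two cases are disjoint and exhaustive; the same holds for $B_Y$. Counting with multiplicity, this gives $|out(A_Y)| = |E(A_Y,B_Y)| + |out(A_Y)\cap out(Y)|$ and $|out(B_Y)| = |E(A_Y,B_Y)| + |out(B_Y)\cap out(Y)|$, while $|out(Y)| = |out(A_Y)\cap out(Y)| + |out(B_Y)\cap out(Y)|$. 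Adding the first two and substituting the third yields the identity $|out(A_Y)| + |out(B_Y)| - 2|E(A_Y,B_Y)| = |out(Y)|$, which drives the whole argument.

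For the first inequality I would use the convention $|out(A_Y)|\le |out(B_Y)|$ that $Split$ enforces, so that $|out(Y)| \ge 2|out(A_Y)| - 2|E(A_Y,B_Y)|$, and then feed in the bound $|E(A_Y,B_Y)| < \gamma |out(A_Y)\cap out(Y)| \le \gamma|out(A_Y)|$ from Lemma~\ref{lemma:split_lemma_1}. This gives $|out(Y)| \ge 2(1-\gamma)|out(A_Y)|$, hence $|out(A_Y)| \le |out(Y)|/(2(1-\gamma))$. For the second inequality I would instead start from $|out(B_Y)| = |out(B_Y)\cap out(Y)| + |E(A_Y,B_Y)|$, bound $|E(A_Y,B_Y)| < \gamma|out(A_Y)\cap out(Y)| = \gamma(|out(Y)| - |out(B_Y)\cap out(Y)|)$ using Lemma~\ref{lemma:split_lemma_1} together with the identity above, and rearrange: $|out(B_Y)| < \gamma|out(Y)| + (1-\gamma)|out(B_Y)\cap out(Y)| \le \gamma|out(Y)| + (1-\gamma)|out(B_Y)|$, so that $\gamma|out(B_Y)| < \gamma|out(Y)|$ and therefore $|out(B_Y)| \le |out(Y)|$.

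I do not expect a real obstacle here: the lemma is a few lines of cut-counting arithmetic once Lemma~\ref{lemma:split_lemma_1} is available. The only point needing a moment of care is checking that the decomposition of $out(A_Y)$, $out(B_Y)$ and $out(Y)$ into disjoint edge classes is stated correctly in the multigraph setting, where parallel edges between $A_Y$ and $B_Y$, or between $Y$ and its complement, must all be counted with their multiplicities; but this is routine bookkeeping.
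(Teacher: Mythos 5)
Your proposal is correct and follows essentially the same route as the paper: you derive the cut-counting identity $|out(A_Y)|+|out(B_Y)|-2|E(A_Y,B_Y)|=|out(Y)|$ (and its constituent decompositions), use $|out(A_Y)|\le|out(B_Y)|$ together with $|E(A_Y,B_Y)|<\gamma\,|out(A_Y)\cap out(Y)|$ for the first bound, and rearrange $|out(B_Y)|=|out(B_Y)\cap out(Y)|+|E(A_Y,B_Y)|$ with the same bound for the second — exactly the paper's chain of inequalities. The only cosmetic difference is that you invoke Lemma~\ref{lemma:split_lemma_1} explicitly for $|E(A_Y,B_Y)|<\gamma\,|out(A_Y)\cap out(Y)|$, whereas the paper attributes it informally to the definition of $Split$; your attribution is arguably the more careful one, since the definition only gives the bound in terms of $|S\cap A_Y|$.
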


\begin{algorithm}[tb]
\caption{BetterPartition($U$)}
\begin{algorithmic}[1] %[1] enables line numbers
\STATE Let $P = (U)$ 
\WHILE {\textup{there exists $Y \in P$ such that $|out(Y)| \geq \Delta'$}}
\STATE Let $(A_Y,B_Y) = Split(Y)$ 
\STATE Remove $Y$ from $P$ and add $A_Y$ and $B_Y$ to $P$
\ENDWHILE
\STATE $\calU' = \emptyset$
\FOR {$Y \in P$}
\STATE Add all connected components of $G[Y]$ to $\calU'$
\ENDFOR
\RETURN $\calU'$
\end{algorithmic}
\end{algorithm}

\paragraph{A Partition Algorithm.} Recall that we have a partition $\calU$ of $U \subseteq V(G)$ and that we want to replace it with an new partition $\calU'$ such that $\calC' = (\calC \setminus \calU) \cup \calU'$ is an acceptable partition of $V(G)$ with $|E(G_{\calC'})| < |E(G_\calC)|$. The new partition $\calU'$ is given by the algorithm BetterPartition($U$). The algorithm starts from the partition $P = (U)$ of size 1. Then, as long as $P$ has a component $Y$ with a border too big (i.e., $|out(Y)| \geq \Delta'$), the algorithm calls $Split$ to divide $Y$ in two parts and replaces $Y$ by the two parts. The algorithm ends because $Split(Y)$ returns a partition $(A_Y,B_Y)$ where neither $A_Y$ nor $B_Y$ is empty, so $Y$ is replaced by two smaller sets. In the worst case we would reach the point where every the sets in $P$ contains a single vertex, i.e., is of the form $\{v\}$, and the \emph{while} loop ends since $|out(\{v\})| = \deg(v) \leq \Delta < \Delta'$. The trace of all splits occurring in BetterPartition($U$) forms a rooted binary tree $T$ where each internal $t \in T$ corresponds to a subset of $U$. We encode this with a mapping $\lambda$ from nodes of $T$ to subset of $U$: if $\lambda(t) = Y$ and $(A_Y,B_Y) = Split(Y)$, then $\lambda(t_l) = A_Y$ and $\lambda(t_r) = B_Y$ where $t_l$ and $t_r$ are the children of $t$. See for instance Figure~\ref{fig:splitting_sequence}~(a), it represents a sequence of splits whose trace is the tree shown Figure~\ref{fig:splitting_sequence}~(b).

\begin{lemma}\label{lemma:new_acceptable_partition}
Let $\calU'$ be the output of BetterPartition($U$). Then $\calC' = (\calC \setminus \calU) \cup \calU'$ is an acceptable partition
\begin{proof}
Consider $P$ at the end of the \emph{while} loop in BetterPartition($U$). Let $Y \in P$ and let $Y^* \subseteq Y$ be such that $G[Y^*]$ is a connected component of $G[Y]$. It is readily verified that $|out(Y^*)| \leq |out(Y)| < \Delta'$. By assumption there is $|U \cap S^*| \leq \frac{k}{2}$ so $|Y^* \cap S^*| \leq \frac{k}{2}$ holds as well. Now $\calU'$ is a partition of $U$ whose elements are sets like $Y^*$. By what we have shown, since $\calC$ is an acceptable partition, so is $\calC' = (\calC \setminus \calU) \cup \calU'$. 
\end{proof}
\end{lemma}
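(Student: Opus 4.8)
The plan is to verify, part by part, the two defining conditions of an acceptable partition — namely $|out(W)| \le \Delta'$ and $|W \cap S^*| \le k/2$ for every part $W$ — after first observing that $\calC'$ really is a partition of $V(G)$. The argument is local because both $out(\cdot)$ and intersection with $S^*$ are defined purely in terms of the fixed graph $G$ and do not depend on how the rest of $V(G)$ is split; so it suffices to inspect the parts coming from $\calC \setminus \calU$ and the parts coming from $\calU'$ separately.

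First I would check that $\calC'$ is a partition of $V(G)$. Write $U = \bigcup_{V_i \in \calU} V_i$. Since $\calC$ partitions $V(G)$ into disjoint parts, deleting the subcollection $\calU$ leaves a partition of $V(G)\setminus U$. And $\calU'$ is a partition of $U$: the \emph{while} loop of BetterPartition($U$) starts from $P=(U)$ and each call to $Split$ replaces a set $Y\in P$ by a partition $(A_Y,B_Y)$ of $Y$, so $P$ remains a partition of $U$ throughout; the loop halts (as already recorded above, $Split$ returns two strictly smaller nonempty sets), and finally replacing each $Y\in P$ by the connected components of $G[Y]$ again refines $P$ into a partition of $U$. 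Hence $\calC'=(\calC\setminus\calU)\cup\calU'$ partitions $V(G)$.

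Next, for a part $W\in\calC\setminus\calU$ both acceptability conditions are inherited verbatim from the hypothesis that $\calC$ is acceptable, since $W$ is itself a part of $\calC$. For a part $W\in\calU'$, by construction $W$ is a connected component of $G[Y]$ for some $Y$ in the partition $P$ obtained when the \emph{while} loop terminates; the termination condition is exactly $|out(Y)|<\Delta'$. Because $W$ is a connected component of $G[Y]$ there is no $G$-edge between $W$ and $Y\setminus W$, so every edge of $G$ with exactly one endpoint in $W$ already leaves $Y$; that is, $out(W)\subseteq out(Y)$, hence $|out(W)|\le|out(Y)|<\Delta'$. For the second condition, $W\subseteq Y\subseteq U$, and the hypothesis of Lemma~\ref{lemma:if_not_wl_then_better_partition} gives $|U\cap S^*|\le k/2$, so $|W\cap S^*|\le|U\cap S^*|\le k/2$. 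Thus every part of $\calC'$ satisfies both conditions, and $\calC'$ is acceptable.

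The proof is essentially bookkeeping, so there is no serious obstacle, but if I had to single out the one step that carries the argument it is the containment $out(W)\subseteq out(Y)$: it relies precisely on $W$ being a \emph{maximal} connected subgraph of $G[Y]$ (so no edge is trapped between $W$ and the rest of $Y$), paired with the observation that the loop's stopping criterion is exactly the bound $|out(Y)|<\Delta'$ one wants. The remaining ingredients — that $\calC\setminus\calU$ contributes only already-good parts, and that the $S^*$-bound propagates downward along $W\subseteq U$ — are immediate from the definitions and the standing hypotheses.
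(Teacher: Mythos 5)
Your proof is correct and follows essentially the same route as the paper's: parts of $\calC\setminus\calU$ inherit acceptability directly, while each new part, being a connected component of some $G[Y]$ with $|out(Y)|<\Delta'$ at loop termination, satisfies $out(W)\subseteq out(Y)$ and $|W\cap S^*|\le|U\cap S^*|\le k/2$. You merely spell out the containment $out(W)\subseteq out(Y)$ and the partition bookkeeping that the paper leaves as "readily verified."
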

Lemma~\ref{lemma:new_acceptable_partition} is a first step towards proving Lemma~\ref{lemma:if_not_wl_then_better_partition}. It remains to show that $|E(G_{\calC'})| < |E(G_\calC)|$. First observe that $|E(G_{\calC'})| =$
\begin{equation*}
\begin{aligned}
&|E(G_\calC)| - |E(G_\calC[\calU])| - |out_{G_\calC}(\calU)| + \big|\bigcup_{Y \in \calU'} out_G(Y)\big|
\\
&= |E(G_\calC)| - |E(G_\calC[\calU])| - |out_G(U)| + \big|\bigcup_{Y \in P} out_G(Y)\big|
\end{aligned}
\end{equation*}

$\bigcup_{Y \in P} out(Y)$ contains all the edges of $out(U)$ plus the edges $E(A_Y,B_Y)$ for every split $(A_Y,B_Y) = Split(Y)$ done by the algorithm. So $|\bigcup_{Y \in P} out(Y)| $ equals $|out(U)|$ plus some value $M$.
\begin{equation}\label{eq:eq3}
|E(G_{\calC'})| = |E(G_\calC)| - |E(G_\calC[\calU])| + M
\end{equation}
\begin{figure*}[t]
\begin{subfigure}{0.8\textwidth}
\centering
\begin{subfigure}{0.22\textwidth}
\centering
\begin{tikzpicture}[scale=1.2]
\draw (0,0) circle (1.05);
\node[circle,fill=white,inner sep=3] (u) at (-0.68,0.8) {}; 
\node[circle,fill,inner sep=1,label={above:\small{$u$}}] (u) at (-0.68,0.6) {}; 
\node[circle,fill,inner sep=1,label={above:$v$}] (v) at (-1.15,0.9) {}; 
\draw (u) -- (v);
\node[font=\small] (Y) at (0,0) {$Y$};
\end{tikzpicture}
\end{subfigure}
\begin{subfigure}{0.22\textwidth}
\centering
\begin{tikzpicture}[scale=1.2]
\draw[rounded corners=5pt] (0, 0.05) -- (1, 0.05)  arc[start angle=0, end angle=180, radius=1] -- (0, 0.05);
\draw[rounded corners=5pt] (0, -0.05) -- (1, -0.05)  arc[start angle=0, end angle=-180, radius=1] -- (0, -0.05);

\node[circle,fill=white,inner sep=3] (u) at (-0.68,0.8) {}; 
\node[circle,fill,inner sep=1,label={above:\small{$u$}}] (u) at (-0.68,0.6) {}; 
\node[circle,fill,inner sep=1,label={above:$v$}] (v) at (-1.15,0.9) {}; 
\draw (u) -- (v);

\node[font=\small] (A1) at (0,0.33) {$A_1$};
\node[font=\small] (B1) at (0,-0.33) {$B_1$};

\end{tikzpicture}
\end{subfigure}
\begin{subfigure}{0.22\textwidth}
\centering
\begin{tikzpicture}[scale=1.2]
\draw[rounded corners=5pt] (0.05,0.8) -- (0.05, 0.05) -- (1, 0.05) arc[start angle=0, end angle=90, radius=0.95] -- (0.05, 0.8);
\draw[rounded corners=5pt] (-0.05,0.8) -- (-0.05, 0.05) -- (-1, 0.05) arc[start angle=180, end angle=90, radius=0.95] -- (-0.05, 0.8);
\draw[rounded corners=5pt] (0, -0.05) -- (1, -0.05)  arc[start angle=0, end angle=-180, radius=1] -- (0, -0.05);

\node[circle,fill=white,inner sep=3] (u) at (-0.68,0.8) {}; 
\node[circle,fill,inner sep=1,label={above:\small{$u$}}] (u) at (-0.68,0.6) {}; 
\node[circle,fill,inner sep=1,label={above:$v$}] (v) at (-1.15,0.9) {}; 
\draw (u) -- (v);

\node[font=\small] (B2) at (-0.4,0.33) {$B_2$};
\node[font=\small] (A2) at (0.4,0.33) {$A_2$};

\end{tikzpicture}
\end{subfigure}
\begin{subfigure}{0.22\textwidth}
\centering
\begin{tikzpicture}[scale=1.2]
\draw[rounded corners=5pt] (0.05,0.8) -- (0.05, 0.05) -- (1, 0.05) arc[start angle=0, end angle=90, radius=0.95] -- (0.05, 0.8);

\draw[rounded corners=5pt] (-0.5,0.7) -- (-0.5, 0.05) -- (-1, 0.05) arc[start angle=180, end angle=120, radius=0.95] -- (-0.5, 0.7);

\draw[rounded corners=4pt] (-0.05, 0.06) -- (-0.05, 1) -- (-0.4, 0.94) -- (-0.4, 0.06) -- cycle;

\draw[rounded corners=5pt] (0, -0.05) -- (1, -0.05)  arc[start angle=0, end angle=-180, radius=1] -- (0, -0.05);

\node[circle,fill=white,inner sep=3] (u) at (-0.68,0.8) {}; 
\node[circle,fill,inner sep=1,label={above:\small{$u$}}] (u) at (-0.68,0.6) {}; 
\node[circle,fill,inner sep=1,label={above:$v$}] (v) at (-1.15,0.9) {}; 
\draw (u) -- (v);

\node[font=\small] (A3) at (-0.7,0.33) {$A_3$};
\node[font=\small] (B3) at (-0.22,0.33) {$B_3$};

\end{tikzpicture}
\end{subfigure}
\caption{\small{$(A_1,B_1) = Split(Y)$, $(A_2,B_2) = Split(A_1)$, $(A_3,B_3) = Split(B_2)$.}}\label{fig:splitting_sequence_a}
\end{subfigure}
\begin{subfigure}{0.19\textwidth}
\begin{tikzpicture}
\def\x{0.7};
\def\y{0.7};
\node[font=\small] (Y) at (0,0) {$Y$};
\node[font=\small] (A1) at (-\x,-\y) {$A_1$};
\node[font=\small] (B1) at (\x,-\y) {$B_1$};
\node[font=\small] (A2) at (-2*\x,-2*\y) {$A_2$};
\node[font=\small] (B2) at (0,-2*\y) {$B_2$};
\node[font=\small] (A3) at (-\x,-3*\y) {$A_3$};
\node[font=\small] (B3) at (\x,-3*\y) {$B_3$};

\draw (Y) -- (A1);
\draw (Y) -- (B1);
\draw (A1) -- (A2);
\draw (A1) -- (B2);
\draw (B2) -- (A3);
\draw (B2) -- (B3);
\end{tikzpicture}
\caption{\small{The trace of the splits.}}\label{fig:splitting_sequence_b}
\end{subfigure}
\caption{\small{A sequence of splits and its trace.}}\label{fig:splitting_sequence}
\end{figure*}

\subsubsection{Charging Scheme.} We bound $M$ by replaying BetterPartition($U$) with a charging scheme that puts non-negative real numbers, called \emph{charges}\footnote{these charges are \emph{not} related to the charge functions of Tseitin formulas}, on the edges of $E(G[U]) \cup out(U)$. Initially all charges are null. When a split $(A_Y,B_Y) = Split(Y)$ occurs, each edge in $|E(A_Y,B_Y)|$ adds a charge $|out(A_Y) \cap out(Y)|^{-1}$ to every edge in $out(A_Y) \cap out(Y)$ (which we recall is not empty). So for the split $(A_Y,B_Y) = Split(Y)$, a total charge of $|E(A_Y,B_Y)|$ is created in the graph. This is the only way to add charges in the graph therefore, when the algorithm ends, the total charge equals $M$. 

Existing charges are also moved in the graph in such a way that when the algorithm ends, only edges of $out(U)$ have non-zero charges. This will allow us to bound $M$ as a fraction of $|out(U)|$. Movements of charges occur during splits along with charges creation.  When the split $(A_Y,B_Y) = Split(Y)$ occurs, instead of having every edge in $E(A_Y,B_Y)$ give a charge $|out(A_Y) \cap out(Y)|^{-1}$, we decide that every edge $e' \in E(A_Y,B_Y)$ that already have a charge $c_{e'}$ adds a charge $(1+c_{e'})/|out(A_Y) \cap out(Y)|$ to every edge in $out(A_Y) \cap out(Y)$. So a $e'$ contributes a total charge $1 + c_{e'}$ to $out(A_Y) \cap out(Y)$. The charge of $e'$ is then reset to $0$ since its old charge $c_{e'}$ has been stored in $out(A_Y) \cap out(Y)$. Algorithm Charging($T, \lambda$) shows an implementation of the charging scheme. Its input $(T,\lambda)$ encodes the splits done during the course of BetterPartition($U$).

\begin{algorithm}[tb]
\caption{Charging($T, \lambda$)}
\begin{algorithmic}[1] %[1] enables line numbers
\STATE All $c_e$ are set to $0$
\STATE Mark all leaves of $T$ as visited
\WHILE {\textup{there is $t \in T$ not marked such that $t_r$ and $t_l$ are marked}}
\STATE Let $Y = \lambda(t)$, $A_Y = \lambda(t_l)$, $B_Y = \lambda(t_r)$ 
\FOR {$e \in out(A_Y) \cap out(Y)$}
\STATE Set $c_e = c_e + \sum_{e' \in E(A_Y,B_Y)} \frac{1+c_{e'}}{|out(A_Y) \cap out(Y)|}$
\ENDFOR
\FOR {$e' \in E(A_Y,B_Y)$}
\STATE Set $c_{e'} = 0$
\ENDFOR
\STATE Mark $t$ as visited
\ENDWHILE
\end{algorithmic}
\end{algorithm}

Fix an edge $e = uv$. Suppose that some set $Y$ is split into $(A_Y,B_Y)$ and that $e$ is in $out(Y)$. Either $u \in Y$ and $v \not\in Y$ in which case charges are added to $e$ if and only if $u \in A_Y$, or $v \in Y$ and $u \not\in Y$ in which case charges are added to $e$ if and only if $v \in A_Y$. In the first case we say that $e$ is charged via $u$, in the other case $e$ is charged via $v$. See for instance Figure~\ref{fig:splitting_sequence}~(a). In the leftmost figure, $Y$ is represented by a circle and we have an edge $e = uv \in out(Y)$ with $u \in Y$. Next $Y$ is split into $(A_1,B_1) = Split(Y)$. Since $u \in A_1$, a charge is added to $e$ via $u$. Next $A_1$ is split into $(A_2,B_2) = Split(A_1)$, but no charge is added to $e$ because $u$ is in $B_2$, not in $A_2$. 

\begin{lemma}\label{lemma:charging_M}
Let $\calU'$ be the output partition of BetterPartition($U$) and let $(T,\lambda)$ be the trace of the splits done by BetterPartition($U$). Let $\calC' = (\calC \setminus \calU) \cup \calU'$. After running Charging($T,\lambda$), the total charge in $G$ equals $M := |E(G_{\calC'})| - |E(G_{\calC})| + |E(G_{\calC}[\calU])|$
\begin{proof}
$M$ equals the sum of $|E(A_Y,B_Y)|$ over all splits $(A_Y,B_Y) = Split(Y)$ done during the course of BetterPartition($U$). 

Fix an iteration of the \emph{while} loop of Charging($T, \lambda$). Let $(A_Y,B_Y) = Split(Y)$ be the split for that iteration. In the first \emph{for} loop, a charge of $|E(A_Y,B_Y)| + \sum_{e' \in E(A_Y,B_Y)} c_{e'}$ is created and added to the graph. Then the second \emph{for} loop sets the charges for all $e' \in E(A_Y,B_Y)$ to $0$, so a charge $\sum_{e' \in E(A_Y,B_Y)} c_{e'}$ is lost. Thus the total charge created during that iteration of the \emph{while} loop is $|E(A_Y,B_Y)|$. There is one iteration per split of done by BetterPartition($U$) so when Charging($T, \lambda$) finishes the total charge in the graph equals~$M$.
\end{proof}
\end{lemma}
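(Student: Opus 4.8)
The plan is to combine two facts. First, I would record that the quantity $M = |E(G_{\calC'})| - |E(G_{\calC})| + |E(G_{\calC}[\calU])|$ of the statement equals, by~(\ref{eq:eq3}) and the discussion preceding it, exactly $\sum_{\text{splits}} |E(A_Y,B_Y)|$, the sum over all splits $(A_Y,B_Y) = Split(Y)$ performed by BetterPartition($U$). Indeed $\bigcup_{Y \in P} out_G(Y)$ is the disjoint union of $out_G(U)$ with the sets $E(A_Y,B_Y)$ ranging over the splits: an edge of $G[U]$ enters some border $out_G(Y)$ precisely at the \emph{unique} split that first separates its two endpoints, and once on a border it stays on every later one; and refining $P$ into the connected components of the graphs $G[Y]$ at the end does not change $\bigcup out_G(Y)$, since $G[Y]$ already contains every edge of $G$ internal to $Y$. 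So it suffices to show that Charging($T,\lambda$) terminates with total charge $\sum_{\text{splits}} |E(A_Y,B_Y)|$.

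Second, I would analyse Charging($T,\lambda$) one \emph{while}-iteration at a time. The \emph{while} loop marks exactly one previously unmarked internal node of $T$ per iteration (one whose two children are already marked), starting from the leaves and stopping once the root --- hence every node --- is marked; so its iterations are in bijection with the internal nodes of $T$, i.e.\ with the splits. Fix the iteration processing a node $t$, with $Y = \lambda(t)$, $A_Y = \lambda(t_l)$, $B_Y = \lambda(t_r)$. The two edge sets touched in this iteration, $out(A_Y) \cap out(Y)$ and $E(A_Y,B_Y)$, are disjoint --- the former joins $A_Y$ to $V(G) \setminus Y$, the latter joins $A_Y$ to $B_Y \subseteq Y$ --- and $out(A_Y) \cap out(Y) \neq \emptyset$ by construction of $Split(Y)$ (the chosen set $S$ satisfies $S \cap A_Y \neq \emptyset$ and each of its vertices is an endpoint of an edge of $out(Y)$). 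Hence the first \emph{for} loop adds to the graph a total charge $\sum_{e' \in E(A_Y,B_Y)}(1 + c_{e'}) = |E(A_Y,B_Y)| + \sum_{e' \in E(A_Y,B_Y)} c_{e'}$ --- each of the $|out(A_Y) \cap out(Y)|$ target edges receives $\frac{1}{|out(A_Y) \cap out(Y)|}\sum_{e' \in E(A_Y,B_Y)}(1 + c_{e'})$ --- where the $c_{e'}$ denote the values held just before this iteration, since the first \emph{for} loop modifies no $c_{e'}$ with $e' \in E(A_Y,B_Y)$; the second \emph{for} loop then destroys exactly $\sum_{e' \in E(A_Y,B_Y)} c_{e'}$ by resetting these charges to $0$. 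So the total charge changes by exactly $|E(A_Y,B_Y)|$ during this iteration.

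Summing over all iterations, the total charge at the end of Charging($T,\lambda$) equals $\sum_{\text{splits}} |E(A_Y,B_Y)| = M$. I do not anticipate a genuine obstacle here --- the argument is pure bookkeeping --- but two points deserve care: (i) the identification $M = \sum_{\text{splits}} |E(A_Y,B_Y)|$, which rests on the fact that each internal edge of $G[U]$ migrates to the border at exactly one split; and (ii) keeping the two \emph{for} loops of a single iteration straight, namely that the first loop reads the pre-reset values $c_{e'}$ and acts on a set of edges disjoint from $E(A_Y,B_Y)$, so that the created charge $|E(A_Y,B_Y)| + \sum_{e'} c_{e'}$ and the destroyed charge $\sum_{e'} c_{e'}$ cancel down to $|E(A_Y,B_Y)|$.
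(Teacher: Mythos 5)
Your proof is correct and follows the same two-step structure as the paper's: identify $M$ with $\sum_{\text{splits}} |E(A_Y,B_Y)|$ using Equation~(\ref{eq:eq3}), then argue that each \emph{while}-iteration of Charging changes the total charge by exactly $|E(A_Y,B_Y)|$. You additionally spell out two small points the paper leaves implicit --- that each edge of $G[U]$ enters $\bigcup_{Y\in P} out(Y)$ at the unique split first separating its endpoints and that $out(A_Y)\cap out(Y)$ is disjoint from $E(A_Y,B_Y)$ and nonempty --- but these are elaborations, not a different argument.
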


\begin{lemma}\label{lemma:charging_rec}
Let $(T,\lambda)$ be the trace of the splits done by BetterPartition($U$). Let $e = uv$ be an edge with at least one endpoint in $U$. After running Charging($T,\lambda$) the total charge on $e$ is $c_e \leq 9\gamma$. Moreover if $e \not\in out(U)$ then $c_e = $.
\begin{proof}

If $e \not\in out(U)$ and there is no $Y$ in the trace of BetterPartition$(U)$ such that $e \in E(A_Y,B_Y)$, then $c_e$ never moves from its initial value, that is $0$. If $e \not\in out(U)$ and $e \in E(A_Y,B_Y)$ for some $(A_Y,B_Y) = Split(Y)$, then $c_e$ is set to $0$ at line $9$ of Charging($T$,$\lambda$) and it stays $0$ because for all $Y'$ that are processed after $Y$, either $Y$ is disjoint from $Y'$, or $Y \subseteq A_{Y'}$, or $Y \subseteq B_{Y'}$, and therefore $e$ can never be in $out(A_{Y'}) \cap out(Y')$.

This leaves us the case $e \in out(U)$. We show that $c_e \leq 9\gamma$ by proving the following more general result.

\begin{claim} At every moment during the course of Charging($T,\lambda$), there exists an $r \in \mathbb{N}$ such that for each $e$ there is $c_e \leq \sum_{j = 1}^r (8\gamma)^j$.

\begin{proof}
The proof is by induction. The claim is clearly true at the beginning of the algorithm when all $c_e$ are $0$. For the general case let $e = uv$. We define $c_e(u)$ and $c_e(v)$ as the charges given to $e$ via $u$ and via $v$. Clearly $c_e = c_e(u) + c_e(v)$. We focus on $c_e(u)$. Consider the sequence $Y_{z+1} \subset Y_{z} \subset \dots \subset Y_1$ where $|out(Y_{z+1})| < \Delta'$, $u \in Y_{z+1}$, $v \not\in Y_1$ (so $e \in out(Y_i)$ for all $i$) and such that, for all $i$, $e$ is charged via $u$ when $Y_i$ is processed in the \emph{while} loop. Note that $Y_{i+1}$ may not be $A_{Y_i}$, see for instance Figure~\ref{fig:splitting_sequence}, it shows that $u \in A_3 \subset B_2 \subset A_1$ so there is an $i$ such that $Y_i = A_1$ and $Y_{i+1} = A_3$ is different from $A_{Y_i} = A_2$. Despite this observation, Lemma\ref{lemma:split_lemma_2} still gives us that $|out(Y_{i+1})| \leq \frac{|out(Y_i)|}{2(1-\gamma)}$ for all $i \in [z]$. Since we also have $|out(Y_i)| \geq \Delta'$ for all $i \in [z]$ (otherwise $Y_i$ would not be split) we obtain 
$$|out(Y_i)| \geq (2(1-\gamma))^{z-i}|out(Y_z)| \geq (2(1-\gamma))^{z-i} \Delta'$$

\begin{claim}\label{claim:charge_for_creating_Yi} For all  $i \leq z$, it holds that
$$\frac{|E(Y_i, Z_i)|}{|out(Y_i) \cap out(Y_i \cup Z_i)|} \leq \frac{\gamma}{1-\gamma}\left(\frac{1}{2(1-\gamma)}\right)^{z-i}
$$
\end{claim}
\begin{proof}
$|out(Y_i) \cap out(Y_i \cup Z_i)| = |out(Y_i)| - |E(Y_i,Z_i)| \geq |out(Y_i)| - \gamma|out(Y_i) \cap out(Y_i \cup Z_i)| \geq (1-\gamma)|out(Y_i)| \geq (1-\gamma)\Delta' (2(1-\gamma))^{z-i}$. We then use Lemma~\ref{lemma:split_lemma_1}~(\ref{eq:eq1}) to conclude.
\end{proof}

When processing the split of $Y_i$ the charge added to $e$ via $u$ is 
\[
\begin{aligned}
\frac{\sum_{e' \in E(Y_i,Z_i)}(1+c_{e'})}{|out(Y_i) \cap out(Y_i \cup Z_i)|} & \leq \frac{(1+\max_{e'} c_{e'})|E(Y_i,Z_i)|}{|out(Y_i) \cap out(Y_i \cup Z_i)|}
\end{aligned}
\]
By induction hypothesis, $1+\max_{e'} c_{e'} \leq \sum_{j = 0}^r (8\gamma)^j$ and we know by Claim~\ref{claim:charge_for_creating_Yi} that $|E(Y_i,Z_i)|/|out(Y_i) \cap out(Y_i \cup Z_i)| \leq \frac{\gamma}{1-\gamma}\left(\frac{1}{2(1-\gamma)}\right)^{z-i}$, so when creating $Y_i$ the charge added to $e$ via $u$ is at most 
$$
\frac{\gamma}{1-\gamma}\bigg(\sum_{j = 0}^r (8\gamma)^j\bigg)\left(\frac{1}{2(1-\gamma)}\right)^{z-i} 
$$
We miss a bound on the charge added to $e$ when creating $Y_{z+1}$. We just use that it is fewer than $\gamma$ by Lemma~\ref{lemma:split_lemma_2}~(\ref{eq:eq2}). So we obtain that the charge added to $e = uv$ via $u$ when creating $Y_1,Y_2, \dots Y_{z+1}$ is at most
\[
\begin{aligned}
c_e(u) &\leq \gamma + \frac{\gamma}{1-\gamma}\bigg(\sum_{j = 0}^r (8\gamma)^j\bigg) \sum_{i = 1}^{z} \left(\frac{1}{2(1-\gamma)}\right)^{z-i} \\
&\leq \gamma + \frac{2\gamma}{1-2\gamma}\sum_{j = 0}^r (8\gamma)^j 
\leq \gamma + 3\gamma\sum_{j = 0}^r (8\gamma)^j
\end{aligned}
\]
Where $2\gamma/(1-2\gamma) \leq 3\gamma$ comes from $\gamma = 1/2000$. The bound holds for $c_e(v)$ as well so $c_e = c_e(u) + c_e(v)$ is at most
$$
\begin{aligned}
2\gamma + 6\gamma\sum_{j = 0}^r (8\gamma)^j = 8\gamma + 6\gamma\sum_{j = 1}^r (8\gamma)^j 
\\
\leq 8\gamma + 8\gamma\sum_{j = 1}^r (8\gamma)^j = \sum_{j = 1}^{r+1} (8\gamma)^j
\end{aligned}
$$
This finishes proving the claim 
\end{proof}
\end{claim}

So for $e \in out(U)$ we have that $c_e \leq \sum_{j = 1}^\infty (8\gamma)^j = \frac{1}{1-8\gamma} - 1$, which is fewer than $9\gamma$ when $\gamma = 1/2000$. 
\end{proof}
\end{lemma}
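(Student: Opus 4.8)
The plan is to prove the two assertions separately. For the ``moreover'' part, suppose both endpoints $u,v$ of $e=uv$ lie in $U$. An edge receives charge only at a split $Split(Y)=(A_Y,B_Y)$ with $e\in out(A_Y)\cap out(Y)$, and this forces exactly one of $u,v$ to lie in $Y$. So I would locate where $u$ and $v$ get separated in the split tree $T$. If they are never separated, then $e\notin out(Y)$ for every node $Y$ of $T$, so $c_e$ stays at its initial value $0$. If they are first separated at a node $W$, then $e\in E(A_W,B_W)$, so Charging resets $c_e:=0$ when it processes $W$; and since Charging visits $T$ strictly bottom-up, every node processed afterwards is either an ancestor of $W$ (so both endpoints of $e$ lie inside $A_{Y'}$ or inside $B_{Y'}$ and cannot straddle the split) or disjoint from $W$, hence $e$ is never recharged and $c_e=0$.

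For the main bound I would take $e=uv\in out(U)$, say $u\in U$ and $v\notin U$, so that $e$ is charged only ``via $u$'', i.e.\ at splits $Split(Y)=(A_Y,B_Y)$ with $u\in A_Y$ and $v\notin Y$. First I would extract the nested chain $Y_1\supsetneq\dots\supsetneq Y_z\supsetneq Y_{z+1}$ along the branch of $u$, where $Y_1,\dots,Y_z$ are exactly the split nodes at which $e$ is charged and $Y_{z+1}$ is the leaf containing $u$; thus $|out(Y_{z+1})|<\Delta'$ while $|out(Y_i)|\geq\Delta'$ for $i\leq z$. Two geometric observations then do the work: (i) along the branch of $u$ the boundary size is non-increasing and drops by at least a factor $2(1-\gamma)$ between consecutive $Y_i$ (Lemma~\ref{lemma:split_lemma_2}), so $|out(Y_i)|\geq (2(1-\gamma))^{z-i}\Delta'$; and (ii) each split satisfies $|E(A_{Y_i},B_{Y_i})|<\gamma\Delta'$ (Lemma~\ref{lemma:split_lemma_1}, inequality~(\ref{eq:eq1})) while $|out(A_{Y_i})\cap out(Y_i)|\geq (1-\gamma)|out(Y_i)|$, so the ratio governing how much charge lands on $e$ at the split of $Y_i$ is at most $\tfrac{\gamma}{1-\gamma}(2(1-\gamma))^{-(z-i)}$, which is summable.

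The subtlety is that the charge poured onto $e$ at the split of $Y_i$ is $\sum_{e'\in E(A_{Y_i},B_{Y_i})}(1+c_{e'})/|out(A_{Y_i})\cap out(Y_i)|$: it is amplified by whatever charges already sit on the cut edges, so one cannot simply add a fixed geometric series. The plan is to close this with an induction over the steps of Charging that maintains the uniform invariant ``there is $r\in\mathbb{N}$ with $c_e\leq\sum_{j=1}^r (8\gamma)^j$ for every edge $e$''. In the inductive step I would substitute $1+\max_{e'}c_{e'}\leq\sum_{j=0}^r(8\gamma)^j$ into the ratio bounds of (i)--(ii), sum over $i=1,\dots,z$, absorb the single last step at $Y_{z+1}$ (bounded by inequality~(\ref{eq:eq2})) into a lone $+\gamma$, and then double the total to allow for a symmetric ``via $v$'' contribution, ending with $c_e\leq 2\gamma+6\gamma\sum_{j=0}^r(8\gamma)^j=8\gamma+6\gamma\sum_{j=1}^r(8\gamma)^j\leq\sum_{j=1}^{r+1}(8\gamma)^j$, which reinstates the invariant with $r$ replaced by $r+1$. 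Letting $r\to\infty$ and using $\gamma=1/2000$ yields $c_e\leq\sum_{j\geq 1}(8\gamma)^j=\tfrac{1}{1-8\gamma}-1<9\gamma$, as desired.

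I expect the main obstacle to be exactly this self-referential character of the charging scheme: pushing an edge's accumulated charge one level up the split tree multiplies it by a factor that itself depends on the charges on the cut edges, so a naive estimate diverges. What makes the recursion contractive (with ratio $8\gamma\ll 1$) is the conjunction of three design choices — resetting a charge to $0$ once it is moved, bounding every cut uniformly by the absolute constant $\gamma\Delta'$, and letting the denominators $|out(Y_i)|$ grow geometrically as one climbs the chain — and the delicate part of the write-up will be verifying that the bookkeeping genuinely closes: in particular that each link of the chain really loses a full factor $2(1-\gamma)$ even though consecutive $Y_i$ need not be parent and child in $T$, and that the numerical constants are set so that $8\gamma<1$ with room to spare.
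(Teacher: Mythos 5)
Your proposal follows essentially the same route as the paper's proof: the same case analysis for the ``moreover'' part (reset at the first separating node, never recharged by ancestors), the same nested chain $Y_1\supsetneq\dots\supsetneq Y_{z+1}$ driving a geometric decay of $|out(Y_i)|$ via Lemma~\ref{lemma:split_lemma_2}, the same per-split ratio bound from Lemma~\ref{lemma:split_lemma_1}, the same inductive invariant $c_e\leq\sum_{j=1}^r(8\gamma)^j$ closed by the identical computation $2\gamma+6\gamma\sum_{j=0}^r(8\gamma)^j\leq\sum_{j=1}^{r+1}(8\gamma)^j$, and the same final limit $\frac{1}{1-8\gamma}-1<9\gamma$. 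The one small wrinkle --- you observe that for $e\in out(U)$ only the ``via $u$'' side contributes but then double anyway --- is harmless and in fact necessary since the invariant must hold for interior edges too, which is exactly what the paper does.
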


Now we can finish the proof of Lemma~\ref{lemma:if_not_wl_then_better_partition}.

\begin{proof}[Proof of Lemma~\ref{lemma:if_not_wl_then_better_partition}]
Run BetterPartition($U$) and let $\calU'$ be the output partition of $U$. Let $\calC' = (\calC \setminus \calU) \cup \calU'$. By Lemma~\ref{lemma:new_acceptable_partition}, $\calC'$ is an acceptable partition. It remains to show $|E(G_{\calC'})| < |E(G_\calC)|$. Recall Equation~(\ref{eq:eq3}):
$$
|E(G_{\calC'})| = |E(G_\calC)| - |E(G_\calC[\calU])| + M
$$
Let $(T,\lambda)$ be the trace of the splits done by BetterPartition($U$). By Lemma~\ref{lemma:charging_M}, the total charge in the graph after running Charging($T,\lambda$) is $M$ which, by Lemma~\ref{lemma:charging_rec} is at most $9\gamma|out(U)|$. Using this bound in Equation~(\ref{eq:eq3}) yields
$$
\begin{aligned}
|E(G_{\calC'})| 
&\leq |E(G_\calC)| - |E(G_\calC[\calU])| + 9\gamma|out(U)|
\\
&\leq |E(G_\calC)| - |E(G_\calC[\calU])| + 9\gamma|E(G_\calC)|
\\
&\leq |E(G_\calC)| + \left(9\gamma - \frac{1}{180}\right)|E(G_\calC)| 
\\
&< |E(G_\calC)|
\end{aligned}
$$
\end{proof} 
\end{document}